\documentclass[submission,copyright,creativecommons]{eptcs}
\providecommand{\event}{AiML 2026} 

\usepackage{aiml26}
\usepackage{dsfont}
\usepackage{bm}
\usepackage{iftex}
\usepackage{stmaryrd}
\usepackage{amssymb}
\usepackage{comment}
\usepackage{mathtools}
\usepackage{stmaryrd}

\usepackage{amsmath}
\usepackage{amsthm}
\usepackage{graphicx}
\usepackage{tikz}

\ifpdf
  \usepackage{underscore}         
  \usepackage[T1]{fontenc}        
\else
  \usepackage{breakurl}           
\fi

\usepackage{amsfonts, amsmath, amssymb}
\usepackage{color}
\usepackage{tikz,url}
\usetikzlibrary{calc,shapes}
\usepackage{hyperref}
\usepackage{amsthm}

\usepackage{comment}

\newcommand{\eq}{\leftrightarrow}
\newcommand{\Eq}{\Leftrightarrow}
\newcommand{\imp}{\rightarrow}
\newcommand{\Imp}{\Rightarrow}

\newcommand{\et}{\wedge}
\newcommand{\vel}{\vee}
\newcommand{\Et}{\bigwedge}

\newcommand{\proves}{\models}

\renewcommand{\phi}{\varphi}
\newcommand{\union}{\cup}
\newcommand{\Union}{\bigcup}
\newcommand{\inter}{\cap}
\newcommand{\Inter}{\bigcap}

\newcommand{\power}{\mathcal P}

\newcommand{\M}{\widehat{K}}

\newcommand{\np}{\overline{p}}

\newcommand{\nq}{\overline{q}}

\newcommand{\lang}{\mathcal L}

\newcommand{\weg}[1]{}

\newcommand{\powerset}{\mathcal{P}}

\newcommand{\see}{\mathsf{see}}

\usepackage{bm}

\def\PDL{\mathbf{PDL}}
\def\IPDL{\mathbf{IPDL}}

\title{Resolving Asynchronous Distributed Knowledge}
\author{Philippe Balbiani
\institute{IRIT, CNRS---INPT---UT}
\email{philippe.balbiani@irit.fr}
\and
Hans van Ditmarsch
\institute{IRIT, CNRS---INPT---UT}
\email{hansvanditmarsch@gmail.com}
\and
Clara Lerouvillois
\institute{IRIT, CNRS---INPT---UT}
\institute{IHPST, Paris 1 Panthéon Sorbonne}
\email{clara.lerouvillois@irit.fr}
}

\newcommand{\titlerunning}{Resolving Asynchronous Distributed Knowledge}
\newcommand{\authorrunning}{Philippe Balbiani et al.}
\hypersetup{
  bookmarksnumbered,
  pdftitle    = {\titlerunning},
  pdfauthor   = {\authorrunning},
  pdfsubject  = {EPTCS},               
  pdfkeywords = {keyword1, keyword2} 
}

\begin{document}
\maketitle
\begin{abstract}
There are by now various epistemic modal logics with intersection modalities for distributed knowledge and intersection update modalities for dynamic phenomena like agents sharing (all their) information, agents receiving information from other agents, and full information protocols. One of those is the logic of {\em Resolving Distributed Knowledge}, by {\AA}gotnes and Wang. It has {\em distributed knowledge} modalities for arbitrary subsets of the set of all agents and it also has so-called {\em resolution} modalities for arbitrary subsets of agents sharing their knowledge. In that logic, the agents not involved in the knowledge sharing are aware of the agents sharing knowledge, agents are memory-less, and the kind of dynamics represents synchronous updates, where there is common awareness of the global clock. In contrast, in this contribution we present a logic for {\em Resolving Asynchronous Distributed Knowledge}. It is an asynchronous generalization of the synchronous logic of resolving distributed knowledge. The logical semantics is history-based: truth is not only with respect to a given world in a model, but also with respect to a given history of prior resolutions, of which each individual agent can only observe a part. In particular, an agent is unaware of resolutions for groups of agents not including her. As is to be expected, this comes with many technical complications, for example concerning the axiomatization. The synchronous axioms relating resolution to distributed knowledge are now invalid. The modelling advantages of such an asynchronous novel logic, for distributed computing and similar areas, are however substantial and a major asset.
  \end{abstract}

\section{Introduction} \label{sec.introduction}

If Anne ($a$) knows $p$ and Bill ($b$) knows $p \imp q$ then neither knows $q$ but they still have {\em distributed knowledge} of $q$. If they were to {\em share} their information, they would both know $q$. Distributed knowledge modalities are {\em intersection modalities}: $p$ is distributed knowledge for $a$ and $b$ in a world $w$, iff $p$ is true in all worlds $v$ which are indistinguishable for $a$ {\em and} for $b$ from the actual world $w$. In this case, distributed knowledge of $p$ becomes common knowledge of $p$ by sharing. However, as is well known, propositions may be distributed knowledge even when the agents cannot share this information. For another scenario, if $p$ is true but $a$ does not know this, and $b$ knows that, then $a$ and $b$ have distributed knowledge of the former, as it is true in all worlds which are indistinguishable for $a$ and for $b$. But when $b$ informs $a$ of $p$, or even of $p$ and that she does not know $p$, agent $a$ learns that $p$ which makes her ignorance false. Modalities for sharing information are {\em intersection updates} (intersection update modalities): in a world $w$ proposition $q$ is true after $a$ and $b$ share their knowledge iff $q$ is true in the model wherein we have replaced the indistinguishability relations for $a$ and $b$ by the intersection of these two relations. Let now Cath ($c$) enter the scene and let us replay the first of the above scenarios, where all three agents $a,b,c$ are initially aware that $a$ knows $p$ and $b$ knows $p \imp q$. Again Anne and Bill share their knowledge. What does Cath learn from this? That depends. If we assume synchrony, then Cath learns that Anne and Bill share their knowledge, so afterwards Cath knows that Anne and Bill now both know $q$. Also, Anne and Bill know that Cath knows this. But if we assume asynchrony, Anne and Bill may have shared their knowledge without Cath learning that. In which case Cath still considers it possible that Anne does not know $q$. But also Anne and Bill now face more uncertainty. For example, if after Anne and Bill share their knowledge, Anne and Cath share their knowledge, Cath now knows that Anne knows $p$. But Bill, who was not involved in the second sharing, does not know that. And so on. Various logics have been proposed to formalize distributed knowledge and sharing knowledge and their interaction, and that should be seen as incorporating synchrony. Instead, we propose an asynchronous logic to formalize distributed knowledge and sharing knowledge and their interaction.

The remainder of this introduction is a succinct survey of modal logics with intersection modalities, modal logics with intersection updates, and modal logical approaches to asynchrony.

{\bf Intersection modalities.}
If we consider a modal language with modalities $\Box_a$ and $\Box_b$, interpreted on Kripke models with binary relations $R_a$ and $R_b$, resp., the problem of intersection modalities boils down to the observation that adding a novel modality (suggestively named) $\Box_{a\cup b}p$ is true in a world $w$ iff $p$ is true in all worlds $v$ which are $R_{a}$-accessible \emph{or} $R_{b}$-accessible does not create problems. This is so because in such a case, $R_{a\cup b}=R_{a}\cup R_{b}$ and one simply has that $\Box_{a\cup b}p$ is equivalent to $\Box_{a}p\wedge\Box_{b}p$.
Whereas adding a modality $\Box_{a\cap b}$ such that $\Box_{a\cap b}p$ is true in a world $w$ iff $p$ is true in all worlds $v$ which are $R_{a}$-accessible \emph{and} $R_{b}$-accessible creates problems. One now has that $R_{a\cap b}=R_{a}\cap R_{b}$ and we can no longer define $\Box_{a\cap b}p$ with the other modalities, the canonical model is not of the right kind.
One way to address this is to expand the logical language with {\em nominals} \cite{Passy:Tinchev:1985,Passy:Tinchev:1991} which led to the development of {\em hybrid logics} \cite{Blackburn:Seligman:1995,Areces:tenCate:2007} and related logics \cite{Goranko:1996,Goranko:Passy:1992,derijke:1992}. Another way to address this led to the development of modal logics with intersection modalities, where we highlight Propositional Dynamic Logic ($\PDL$) with intersection~\cite{Danecki:1985,Harel:1985}, and epistemic logics with distributed knowledge~\cite{FaginHV92,halpernmoses:1990,AlechinaBS12}, although there are many further approaches such as Boolean Modal Logic~\cite{Gargov:Passy:1990,Gargov:et:al:1986}, and knowledge representation logics~\cite{Orlowska:1990,Vakarelov:1991}. 

{\bf Propositional dynamic logic.}
In $\PDL$ we generalize from modalities $\Box_a$ to modalities $\Box_\alpha$ where $\alpha$ is a program and that are interpreted in models with recursively defined relations $R_\alpha$, with the same basic programs $a,b,\dots$ as above but apart from $\union$ operations of sequential execution and arbitrary iteration, as well as another basic program called test \cite{hareletal:2000}. By also adding intersection $\alpha\cap\beta$ of programs to $\PDL$, we obtain $\IPDL$ \cite{Danecki:1985,Harel:1985}. Here, $\alpha\cap\beta$ represents parallel execution of programs $\alpha$ and $\beta$ as $R_{\alpha\cap\beta}=R_\alpha\cap R_\beta$, so again, similar to $R_a \cap R_b$ above. Various works involving the complexity \cite{Lange:2005,Lange:Lutz:2005,Massacci:2001} and axiomatization \cite{balbianietal:2003} of $\IPDL$ have appeared.

{\bf Distributed knowledge.}
The notion of {\em distributed knowledge} is rooted in sociology, economics, and philosophy \cite{HayekAER45,hilpinen:1977,swanson:1986} under different terms; \cite{HayekAER45} is an admirable pamphlet against centralized planning and in favour of distributed planning (and authority), \cite{hilpinen:1977} calls it `impersonal knowledge', and \cite{swanson:1986} `undiscovered public knowledge'. The earliest epistemic logical source is \cite{halpernmoses:85b} (the later journal version \cite{halpernmoses:1990} also gives a logical semantics), wherein the notion was called `implicit knowledge', followed on the heals by \cite{ParikhR85} who proposes an axiomatization (without claiming completeness) in a temporal epistemic logic. In fact their non-temporal fragment is the, yet somewhat later, complete axiomatization of \cite{FaginHV92,HoekM92}, where these publications achieved completeness by different methods. All such works interpret modalies $\Box_a$ not on models with arbitrary relations $R_a$ but on models with equivalence relations. Among the many issues of further interest concerning distributed knowledge, particularly in view of representing multiple agents sharing knowledge given uncertainty about their own and each other's knowledge, is a certain discrepancy between syntactic and semantic intuitions of distributed knowledge, and what sharing knowledge actually means. Such issues were already discussed in the original \cite{FaginHV92,HoekM92} and continue to be investigated in more recent times \cite{rusbouke.aiml:2024}.

{\bf Intersection updates.}
After the history of intersection modalities we now proceed with the history of intersection updates, the history of {\em sharing} information. Given a Kripke model with two (typically) equivalence relations $R_a$ and $R_b$, the {\em intersection update modality} $\Box_{a \inter b}^{\mathsf{update}}$ is such that $\Box_{a \inter b}^{\mathsf{update}} p$ is true in a world $w$ of a model $M$ iff $p$ is true in the same world $w$ but in the \emph{updated} model $M^{a \inter b}$ wherein the relations $R_a$ and $R_b$ are replaced by the relation $R_a \inter R_b$. That is all. Note that when we interpret the intersection modality we (may) {\em change the point of evaluation} in a given model but we {\em do not change the model} wherein we evaluate the formula bound by the modality, whereas if we interpret the intersection update modality we {\em do not change the point of evaluation} but we (may) {\em change the model} wherein we evaluate the formula bound by the modality. Intersection update modalities are therefore interpreted as updates of Kripke models and they can indeed be seen as a further development in {\em dynamic epistemic logics}, such as public announcement logic \cite{plaza:2007} and action model logic \cite{baltagetal:1998} (for references see \cite{hvdetal.del:2007,moss.handbook:2015}), although these are updates with very different properties.

The first paper involving the intersection update to our knowledge was \cite{stefanjohan:2009}, in a mixed setting of epistemic and inquisitive logic. The intersection update here is the {\em resolve} action, encoding the answer to a question thus resolving an issue (intersecting an epistemic relation with an issue relation). On the heals of \cite{stefanjohan:2009} were a number of related publications \cite{BaltagS13,carrington:2013,boddy:2014,goldbach:2015,baltagetal.hintikka:2018}. Here, \cite{BaltagS13} introduces intersection updates on plausibility structures whereas \cite{baltagetal.hintikka:2018} further generalizes the inquisitive and epistemic setting of \cite{stefanjohan:2009}. Independently, slightly later again, \cite{AgotnesW17} focuses on intersection updates (called \emph{resolution}) and distributed knowledge for arbitrary sets of agents and their interaction. Further developments introduce uncertainty even among agents who have shared information \cite{Baltag20,baltagsmets.aiml:2024}, which also relates to a different strand of research coming out of distributed computing, and intersection updates incarnate the execution of {\em communication graphs} or {\em communication patterns} \cite{diego:2021,cdrv:2023,armandoetal.tark:2023,diego:2019,diego:2024}. The operation of pooling or basic intersection in \cite{ChristoffGratzlRoy2022} corresponds to a intersection (update) modality similar to what we call resolution here; this updates models with plausibility relations. An attempt to combine distributed knowledge and intersection updates into a single modality for {\em sharing knowledge} is \cite{BalbianiD24}.

{\bf Asynchrony.} Intersection modalities like distributed knowledge and intersection updates such as resolution and communication patterns assume synchrony (and even so in distributed computing, were a \emph{round} of asynchronous events represents a snapshot recording time). Things become more complex with (full) asynchrony, in the absence of a global clock. This was already addressed in \cite{halpernmoses:1990}, and `full information protocols' in distributed computing \cite{MosesT88} correspond to intersection updates. In dynamic epistemic logics such asynchrony requires a history-based semantics \cite{parikhetal:2003,degremontetal:2011}. Even with more restricted information exchange such as in gossip protocols this leads to far more challenging axiomatizations (and higher complexities) \cite{logicofgossiping:2020,hvdetal.lucky:2024,AptKW17}.

{\bf Overview of content.} In Section~\ref{sec.rdk} we review the logic of resolving (synchronous) distributed knowledge. In Section~\ref{sec.radk} we introduce the logic of resolving asynchronous distributed knowledge, and compare it to the logic of synchronous distributed knowledge.  In Section~\ref{sec.axiomatization} we define an infinitary axiomatization for the logic of resolving asynchronous distributed knowledge and prove its completeness.  Section~\ref{sec.further} lists further research, on redundant resolutions, derivable rules, expressivity, and common knowledge.

\section{Resolving distributed knowledge}   \label{sec.rdk}

We briefly present {\AA}gotnes and Wang's logic for resolving distributed knowledge \cite{AgotnesW17}. Let $A$ be a finite non-empty set of \emph{agents}, and $P$ a countable set of propositional variables (atoms). The language $\lang_{DR}$ is defined by $\phi ::= p \mid \top \mid (\phi \et \phi) \mid \neg \phi \mid D_B \phi \mid R_B \phi$, where $p \in P$ and $B \subseteq A$.  The sublanguage $\lang_{D}$ with only modalities $D_B\phi$ is the language of distributed knowledge. For $\Box_a$ we now write $K_a$, and it is defined by abbreviation as $D_{\lbrace a \rbrace}$. Formula $R_B \phi$ is read as `after resolution for group of agents $B$, $\phi$ (is true)'.

The structures are multi-agent {\em epistemic models} $(W,\sim,V)$ for the set $A$ of agents (where $(W,\sim)$ is a multi-agent \emph{frame}). We can see $\sim$ as a set $\{\sim_a\}_{a \in A}$ of binary {\em indistinguishability relations} (or as a function mapping each agent to such a relation $\sim_a$). 
Valuation $V$ is a function from the set of atoms to the powerset of $W$ mapping each atom to the subset of worlds where it is true. We write ${\sim_B} := {\bigcap_{a \in B} \sim_a}$ for the epistemic relation of a group $B$ (so that $\sim_\emptyset \ = W \times W$).

If $M = (W,\sim,V)$ and $B\subseteq A$ then {\em updated model} $M^B := (W,\sim^B,V)$ is the result of resolution for group $B$, where ${\sim^B_a} := {\bigcap_{b \in B} \sim_b}$ if $a \in B$ and ${\sim^B_a} := {\sim_a}$ otherwise. In updated model $M^B$ the relations for the agents $a \in B$ have been updated from $\sim_a$ to $\sim_B$. Hence, in $M^B$ we have that ${\sim^B_B} = {\sim^B_a}$ for all $a \in B$, unlike in $M$. Given these notations, we have that ${\sim^B_C} = {\sim_C}$ if $B \cap C = \emptyset$ and that ${\sim^B_C} = {\sim_{B\cup C}}$ if $B\cap C \neq \emptyset$. Note that $M^a = M$ and that $M^\emptyset = M$.  For $M^{\{a_1,\dots,a_n\}}$ we write $M^{a_1\dots a_n}$, and for $(M^B)^C$ we write $M^{B.C}$.

Let a vector $\vec{G}$ represent a sequence $B_1\dots B_n$, where $B_i \subseteq A$ for $1 \leq i \leq n$---the empty sequence is $\epsilon$. 
By $|\vec{G}|$ we denote the length of $\vec{G}$, and $a \notin \vec{G}$ means that $a\notin B$ for any $B \subseteq A$ such that $B$ occurs in $\vec{G}$. 
We write $\sim^{\vec{G}}_a$ for the accessibility relation of an agent $a$ in $M^{\vec{G}}$.

We now present the semantics. The satisfaction relation $\models$ is defined by induction on $\phi \in \lang_{DR}$, where $p\in P$ and $G \subseteq A$.
\[\begin{array}{lcccl}
M,w \models p & &\text{iff} & &w \in V(p) \\
M,w \models \top & &\text{iff}  & &\text{true} \\
M,w \models \neg \phi & &\text{iff} & &M,w \not\models\phi \\
M,w \models \phi\et\psi & &\text{iff} & &M,w \models\phi \text{ and } M,w \models \psi \\
M,w \models D_B \phi & &\text{iff} & &M,v \models \phi \text{ for all } v \in W \text{ such that } w \sim_B v \\
M,w \models R_B \phi & &\text{iff} & &M^B,w \models \phi 
\end{array}\]
A formula $\phi\in\lang_{DR}$ is \emph{valid} iff for all models $M = (W,\sim,V)$ and for all $w \in W$, $M,w \models\phi$. 

    \begin{table}
    \centering
    \begin{tabular}{lll}
         (taut)  & all instantiations of propositional tautologies & \\
         (K$_D$) & $D_B(\phi \imp \psi)\imp (D_B\phi \imp D_B \psi)$ & \\
         (T$_D$) & $D_B \phi \imp \phi$ & \\
         (5$_D$) & $\lnot D_B\phi \imp D_B \lnot D_B \phi$ & \\
         (DG)    & $D_B \phi \imp D_C \phi$ & if $B \subseteq C$ \\
         (RA)    & $R_B p \eq p$ & \\
         (RN)    & $R_B \lnot \phi \eq \lnot R_B \phi$ & \\
         (RC)    & $R_B (\phi \land \psi) \eq (R_B \phi \land R_B \psi)$ & \\
         (RD1)   & $R_B D_C \phi \eq D_{B \cup C}R_B\phi$ & when $B\cap C \neq \emptyset$\\
         (RD2)   & $R_B D_C \phi \eq D_C R_B\phi$ & when $B\cap C = \emptyset$ \\
         (NecD)  & From $\phi$, infer $D_B \phi$ & \\
         (NecR)  & From $\phi$, infer $R_B \phi$ & \\
         (MP)    & From $\phi$ and $\phi \imp \psi$, infer $\psi$ &
    \end{tabular}
    \caption{Axiomatisation \textbf{RD} for the logic of resolving distributed knowledge}
    \label{axiomatisationRD}
\end{table}

\weg{
\begin{example} \label{example.tada}
Consider three agents $a,b,c$ each only knowing the value of their local state represented by atoms $p_a,p_b,p_c$ respectively. We can easily envisage this as an interpreted system consisting of eight states, in the form of a cube $\mathcal C$, with accessibility relations induced by such agents `only knowing' of their local state. They wish to share their knowledge by peer-to-peer communication exemplified by resolutions $R_{ab}$, $R_{ac}$ and $R_{bc}$. Assume the actual state of the system is $110$ where atoms $p_a$, $p_b$ are true and $p_c$ is false. Initially, the valuation is distributed knowledge, but nobody actually knows it: $\mathcal C, 110 \models D_{abc} (p_a \et p_b \et \neg p_c)$. After resolution $R_{ab}$, agents $a$ and $b$ know each other's secret, that is: $\mathcal C, 110 \models R_{ab} K_a (p_a \et p_b) \et K_b (p_a \et p_b)$. It is also the case that $c$ knows that $a$ and $b$ know that, for example, $\mathcal C, 110 \models R_{ab} K_c (K_a p_b \vel K_a \neg p_b)$. After two resolutions $R_{ab}$ and $R_{ac}$, only agents $a$ and $c$ know the valuation, and after the three resolutions $R_{ab}$, $R_{ac}$ and $R_{bc}$ all three agents know the valuation: $\mathcal C, 110 \models R_{ab.ac.bc} E_{abc} (p_a \et p_b \et \neg p_c)$. (Here, $R_{ab.ac.bc}$ abbreviates $R_{ab}R_{ac}R_{bc}$: resolution of a sequence is a sequence of resolutions. And $E_{abc}\phi$, mutual knowledge of $\phi$, abbreviates $K_a \phi \et K_b\phi \et K_c \phi$.) In fact, the initial distributed knowledge has now become common knowledge---but we did not consider common knowledge in our logical language. 
\end{example}
}

The axiomatization {\bf RD} for the logic of resolving distributed knowledge in Table~\ref{axiomatisationRD} extends that of the logic of distributed knowledge with reduction axioms for resolution and a derivation rule for necessitation of resolution. 
Completeness of this axiomatization is shown by reducing $\lang_{DR}$-formulas to $\lang_{D}$-formulas, by which we mean that any formula with resolution and distributed knowledge modalities is provably (and semantically) equivalent to a formula without resolution modalities. It can be shown that {\em Replacement of Equivalents} (RE: from $\phi \eq \psi$ derive $\chi[p/\phi] \eq \chi[p/\psi]$, where $\chi[p/\phi]$ is uniform substitution of $p$ in $\chi$ by $\phi$) is derivable in \textbf{RD}. This is needed to reduce formulas of form $R_G R_H \phi$ where $G \neq H$, as there is no axiom of shape $R_G R_H \phi \eq \dots$

\section{Resolving asynchronous distributed knowledge} \label{sec.radk}


We now propose a logical framework for resolving {\em asynchronous} distributed knowledge. The logical language $\lang_{DR}$ is the same, whereas accessibility relations are now encoding asynchronous distributed knowledge. Given asynchrony, we propose a history-based semantics. That is, instead of interpreting formulas $\phi$ in pointed models $(M,w)$, where such an $M$ could be an updated $M^{\vec{G}}$, we now wish to interpret formulas in pointed models $(M,w,\vec{G})$, where the resolution sequence explicitly remains at our disposition. This is necessary, because what an agent knows now may be different when she was involved in the last resolution in $\vec{G}$ than from when she was not. In order to compare pairs $(w,\vec{G})$ and $(v,\vec{H})$ we not only need the indistinguishability relation $\sim_a$ between worlds $w$ and $v$ but also a novel \emph{resolution relation} $\approx_a$ between resolution sequences $\vec{G}$ and $\vec{H}$. Connecting both relations is the \emph{view} by agent $a$ of sequence $\vec{G}$, denoted $\see_a(\vec{G})$, that is the set of agents $B$ with whom $a$ has shared the equivalence relations (e.g., recalling the introduction, after resolution $ab$, she has shared it with $b$ as the new $\sim_a$ is now $\sim_{a \inter b}$). The set $\see_a(\vec{G})$ determines the knowledge gained by agent $a$ after history $\vec{G}$---or after any other history $\vec{H}$ that she cannot distinguish from $\vec{G}$.
Let us begin by defining the resolution relation and the view, and show some relevant results relating them.

{\bf Resolution relation.} 
Let $a \in A$, and $\vec{G},\vec{H} \in \power(A)^\ast$ be given. The {\em resolution relation} $\approx_a$ between resolution sequences is the equivalence closure of:
\[ \begin{array}{lcll}
    \epsilon \approx_a \epsilon & \text{iff} & \text{true} \\
    \vec{G}.B \approx_a \vec{H} & \text{iff} & \vec{G} \approx_a \vec{H} &\text{when } a \notin B \\
    \vec{G}.B \approx_a \vec{H}.B & \text{iff} & \vec{G} \approx_b \vec{H} \text{ for all } b \in B \hfill \qquad\qquad\qquad &\text{when } a \in B
\end{array} \]
We further define ${\approx_B} := {\Inter_{a\in B} \approx_a}$. Note that $\approx_\emptyset$ is the universal relation.


{\bf View.} 
The \emph{view} $\see_a(\vec{G})$ of agent $a$ of a resolution sequence $\vec{G}$ is the set $B \subseteq A$ of agents such that given any model $M = (W,\sim,V)$, agent $a$'s current relation $\sim_a^{\vec{G}}$ is $\inter_{b \in B} \sim_b$. Let $B \subseteq A$, then: 
\begin{itemize}
\item $\see_B(\epsilon) := B$; 
\item $\see_B(\vec{G}.C) := \see_{B \union C}(\vec{G})$ when $B \inter C \neq \emptyset$;
\item $\see_B(\vec{G}.C) := \see_B(\vec{G})$ when $B \inter C = \emptyset$. 
\end{itemize}
For $\see_{\{a\}}(\vec{G})$ we write $\see_a(\vec{G})$. Note that $\see_B(\vec{G}) = C$ does not mean that the view of all agents in $B$ is $C$, but only that the union of the views of all agents in $B$ is $C$. It is easy to see that $\see_B(\vec{G}) = \Union_{b \in B} \see_b(\vec{G})$ (induction on the length of $\vec{G}$).
%
%
\medskip

Before proceeding with the definition of the semantics, we state some properties of the resolution relation and the view, that will later prove useful. Straightforward proofs have been omitted.

\begin{lemma}\label{lemmaApproxDelete}\label{coApproxDeleteGroup}
 If $a\notin \vec{I}$, then $\vec{G}\approx_a \vec{H}.\vec{I}$ if, and only if, $\vec{G} \approx_a \vec{H}$.
\end{lemma}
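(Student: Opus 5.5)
We argue by induction on the length $|\vec{I}|$ of $\vec{I}$. The case $\vec{I} = \epsilon$ is trivial, since then $\vec{H}.\vec{I} = \vec{H}$. For the inductive step, write $\vec{I} = \vec{I}'.B$ with $a \notin B$ and $a \notin \vec{I}'$.

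The key observation is that the second clause in the definition of $\approx_a$ gives $\vec{H}.\vec{I}'.B \approx_a \vec{H}.\vec{I}'$ directly. The clause literally states $\vec{G}.B \approx_a \vec{H}$ iff $\vec{G} \approx_a \vec{H}$ when $a \notin B$. Instantiate it with $\vec{G} := \vec{H}.\vec{I}'$ and right-hand side $\vec{H}.\vec{I}'$. Since $\vec{H}.\vec{I}' \approx_a \vec{H}.\vec{I}'$ holds by reflexivity of the equivalence closure, we obtain $\vec{H}.\vec{I}'.B \approx_a \vec{H}.\vec{I}'$.

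Combining this with the induction hypothesis by transitivity and symmetry of $\approx_a$, which is an equivalence relation by definition, gives $\vec{H}.\vec{I} \approx_a \vec{H}$. The lemma then follows: if $\vec{G} \approx_a \vec{H}.\vec{I}$, compose with $\vec{H}.\vec{I} \approx_a \vec{H}$ to get $\vec{G} \approx_a \vec{H}$. Conversely, if $\vec{G}\approx_a\vec{H}$, compose with the symmetric fact $\vec{H} \approx_a \vec{H}.\vec{I}$.

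The only delicate point is how to read the definition. The relation is the equivalence closure of a set of generating clauses that are themselves stated as biconditionals. So one must check that the clause for $a \notin B$ does yield the generating pair $(\vec{X}.B,\vec{X})$ for every $\vec{X}$. This follows by taking the inductive reading of the definition on the combined length of the sequences and applying the clause with equal sequences on the right, where reflexivity supplies the premise.

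Once this is granted, everything reduces to closure properties of an equivalence relation. No case analysis on the third clause, where $a \in B$, is needed, because $a$ never belongs to a group occurring in $\vec{I}$.
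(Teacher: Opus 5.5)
Your proof is correct. The paper omits it as straightforward, and your argument is the routine one it intends: get $\vec{X}.B \approx_a \vec{X}$ for $a \notin B$ from the second defining clause plus reflexivity, then iterate along $\vec{I}$ using symmetry and transitivity of $\approx_a$; as you note, this reads the defining clauses as properties of the final equivalence relation $\approx_a$ itself, which is how the paper uses them throughout.
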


Similarly, if $\vec{I}\in \power(A\setminus B)^\ast$, then $\vec{G}\approx_B \vec{H}.\vec{I}$ if, and only if, $\vec{G} \approx_B \vec{H}$.

\begin{lemma}\label{lemmaApproxEmpty} \label{corolApproxEmptyIntersection}
$\epsilon \approx_B \vec{G}$ if, and only if, $\vec{G} \in \power(A\setminus B)^\ast$.
\end{lemma}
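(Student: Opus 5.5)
The plan is to reduce the group statement to the single-agent one and then use Lemma~\ref{lemmaApproxDelete}. Since ${\approx_B} = \Inter_{a\in B}\approx_a$, the claim $\epsilon \approx_B \vec{G}$ means that $\epsilon \approx_a \vec{G}$ holds for every $a \in B$. Likewise, $\vec{G} \in \power(A\setminus B)^\ast$ means that $a \notin \vec{G}$ for every $a\in B$. It therefore suffices to prove, for each single agent $a$,
\[ \epsilon \approx_a \vec{G} \quad\text{iff}\quad a \notin \vec{G}. \]
The case $B=\emptyset$ is then trivial on both sides, since $\approx_\emptyset$ is universal and $\power(A)^\ast$ contains every sequence.

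For the direction from right to left, assume $a\notin\vec{G}$. I apply Lemma~\ref{lemmaApproxDelete} with $\vec{H}=\epsilon$ and $\vec{I}=\vec{G}$, and with $\epsilon$ in place of the lemma's $\vec{G}$. This gives that $\epsilon \approx_a \epsilon.\vec{G}$ iff $\epsilon\approx_a\epsilon$. The right-hand side holds by the base clause, so $\epsilon\approx_a \vec{G}$.

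For the direction from left to right, I use an invariant argument. Define $n_a(\vec{G})$ as the number of occurrences in $\vec{G}$ of sets containing $a$. I show that $\vec{G}\approx_a\vec{H}$ implies $n_a(\vec{G})=n_a(\vec{H})$. Since $\approx_a$ is the equivalence closure of the generating clauses, and equality of $n_a$ is already an equivalence relation, it is enough to check that each generating clause preserves equality of $n_a$.

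I would do this by induction on the total length of the two sequences, treating the clauses for all agents simultaneously.
\begin{itemize}
\item The base pair $\epsilon,\epsilon$ is immediate.
\item For the clause $\vec{G}.B$ versus $\vec{H}$ with $a\notin B$, appending $B$ leaves $n_a$ unchanged, and the induction hypothesis applies to the pair $\vec{G},\vec{H}$.
\item For the clause $\vec{G}.B$ versus $\vec{H}.B$ with $a\in B$, we have $\vec{G}\approx_a\vec{H}$ because $a\in B$, so by induction $n_a(\vec{G})=n_a(\vec{H})$. Appending $B$ adds one occurrence to each side, so equality is preserved.
\end{itemize}
Given the invariant, $\epsilon\approx_a\vec{G}$ yields $n_a(\vec{G})=n_a(\epsilon)=0$, that is, $a\notin\vec{G}$.

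The main obstacle is making this induction rigorous. The definition of $\approx_a$ is a mutually recursive family over all agents, closed under equivalence. The cleanest way is to define the relations as the least family of equivalences satisfying the clauses, and to note that the family $\{(\vec{G},\vec{H}) : n_a(\vec{G})=n_a(\vec{H})\}_{a\in A}$ is itself a family of equivalences closed under the clauses. Leastness then gives the required inclusion. Combined with the reduction in the first paragraph, this proves the lemma.
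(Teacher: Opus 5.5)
Your proof is correct. The paper gives no argument for this lemma, since it falls under the ``straightforward proofs'' that are omitted. The natural comparison is therefore with the routine induction on the length of $\vec{G}$ that is presumably intended. That induction strips a final set not containing $a$ via the second clause, and then observes that $\epsilon$ cannot be $\approx_a$-related to a sequence ending in a set that contains $a$.

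That last observation is exactly where such an induction is informal. Because $\approx_a$ is the equivalence closure of a mutually recursive family, one has to rule out relatedness obtained through symmetry and transitivity. Your invariant $n_a$, combined with the leastness argument, handles this cleanly. As a bonus, it immediately yields the paper's non-examples $\vec{G}.a \not\approx_a \vec{G}$ and $ab.ab \not\approx_a ab$.

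Two small remarks on presentation.
\begin{itemize}
\item The ``induction on the total length of the two sequences'' is not a sound measure here. A premise such as $\vec{G}\approx_a\vec{H}$ may itself have been obtained by a transitivity chain passing through longer sequences. Your final paragraph makes this induction unnecessary: just verify directly that the family $\{(\vec{G},\vec{H}) : n_a(\vec{G}) = n_a(\vec{H})\}_{a\in A}$ consists of equivalence relations closed under the three generating clauses, and conclude by leastness.
\item For the right-to-left direction you only use the easy half of Lemma~\ref{lemmaApproxDelete}, whose proof the paper also omits. That half follows at once from the second clause plus symmetry, by induction on $\vec{G}$, so your argument does not depend on the harder ``only if'' direction of that lemma.
\end{itemize}
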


Consequently, we also have as a corollary that whenever $B\cap C = \emptyset$, then $C \approx_B \vec{G}$ iff $\vec{G}\in \power(A\setminus B)^\ast$.

\begin{lemma}\label{lemmaDecompositionSingleAgent}
If $a \in B$, then $\vec{G}.B \approx_a \vec{H}$ iff there are $\vec{H_1}$, $\vec{H_2}\in \power(A)^\ast$ such that $\vec{H} = \vec{H_1}.B.\vec{H_2}$ with $\vec{G}\approx_B \vec{H_1}$ and $a \notin \vec{H_2}$.	
\end{lemma}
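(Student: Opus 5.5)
The plan is to derive the lemma from an explicit, non-recursive description of $\approx_a$ obtained by looking at the \emph{last block containing $a$}. Every sequence $\vec{K}$ either lies in $\power(A\setminus\{a\})^\ast$, or decomposes uniquely as $\vec{K}=\vec{K_1}.C.\vec{K_2}$ with $a\in C$ and $a\notin\vec{K_2}$. Define an auxiliary relation $T_a$ by: $\vec{K}\,T_a\,\vec{K'}$ iff either
\begin{itemize}
\item both $\vec{K}$ and $\vec{K'}$ lie in $\power(A\setminus\{a\})^\ast$, or
\item both decompose with the \emph{same} last $a$-block $C$, and their prefixes satisfy $\vec{K_1}\approx_C\vec{K_1'}$.
\end{itemize}
The statement of the lemma is exactly the second clause read off for $\vec{K}=\vec{G}.B$, whose last $a$-block is $B$ with prefix $\vec{G}$ and empty tail. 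So it suffices to prove ${\approx_a}={T_a}$ for every agent $a$.

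The inclusion ${T_a}\subseteq{\approx_a}$ is direct. In the first case, Lemma~\ref{lemmaApproxEmpty} gives $\epsilon\approx_a\vec{K}$ and $\epsilon\approx_a\vec{K'}$, and we combine these by symmetry and transitivity. In the second case, the third generating clause turns $\vec{K_1}\approx_C\vec{K_1'}$ into $\vec{K_1}.C\approx_a\vec{K_1'}.C$. Lemma~\ref{lemmaApproxDelete} then lets us append the $a$-free tails $\vec{K_2}$ and $\vec{K_2'}$ on either side. This already gives the right-to-left direction of the lemma.

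For ${\approx_a}\subseteq{T_a}$, I will read the definition of $\approx$ as the least family of equivalence relations, indexed by agents, closed under the three generating clauses. It then suffices to check two things for the family $(T_a)_{a\in A}$.
\begin{itemize}
\item \emph{Each $T_a$ is an equivalence relation.} This holds because $\approx_C$, as an intersection of equivalence relations, is one, and because the decomposition of a sequence is unique.
\item \emph{The family satisfies the generating clauses.} The clause $\epsilon\,T_a\,\epsilon$ is immediate. For $a\notin B$, the sequences $\vec{G}.B$ and $\vec{G}$ have the same decomposition, so $\vec{G}.B\,T_a\,\vec{H}$ iff $\vec{G}\,T_a\,\vec{H}$. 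For $a\in B$, the sequences $\vec{G}.B$ and $\vec{H}.B$ both have last $a$-block $B$ with prefixes $\vec{G}$ and $\vec{H}$, so the clause requires $\vec{G}\approx_B\vec{H}$.
\end{itemize}

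The step I expect to be the main obstacle is this last requirement, because it mixes $T_a$ with the original relation $\approx_b$ for $b\in B$: the clauses are mutually recursive across agents. To avoid circularity I will argue by induction on $|\vec{G}|+|\vec{H}|$, proving simultaneously for all agents that $\vec{G}\approx_a\vec{H}$ iff $\vec{G}\,T_a\,\vec{H}$. The recursive calls are always on strictly shorter prefixes, so the induction hypothesis identifies $\approx_b$ with $T_b$ there. Transitivity chains through sequences of arbitrary length, so I will handle them via the minimality argument rather than inside the length induction.
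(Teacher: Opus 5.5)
Your proposal is correct, and it takes a different and more rigorous route than the paper. The paper's proof goes by induction on $\vec{H}$, in three cases:
\begin{itemize}
\item for $\vec{H}=\epsilon$ it simply asserts $\vec{G}.B\not\approx_a\epsilon$;
\item for $\vec{H}=\vec{H'}.C$ with $a\notin C$ it strips $C$ by the second clause, applies the induction hypothesis, and appends $C$ to $\vec{H_2}$;
\item for $a\in C$ it reads the third clause as forcing $B=C$ and $\vec{G}\approx_B\vec{H'}$, and takes $\vec{H_2}=\epsilon$.
\end{itemize}
That argument uses the defining clauses as an exhaustive inversion principle. This is precisely what is not evident once $\approx_a$ is an equivalence closure, since a chain witnessing $\vec{G}.B\approx_a\vec{H'}.C$ may pass through arbitrary longer sequences. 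Your explicit invariant $T_a$ together with the minimality argument is what actually justifies that inversion. It also yields both directions at once, and through the $a$-free case it gives the nontrivial direction of Lemma~\ref{lemmaApproxEmpty}. You only use the easy directions of Lemmas~\ref{lemmaApproxEmpty} and~\ref{lemmaApproxDelete}, which follow from the second clause alone, so no hidden circularity arises there. The paper's route is shorter, but it presupposes the point your argument establishes.

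One correction concerns your last paragraph: the obstacle you anticipate is not real, and you should drop the length induction. As you partly notice, an induction on $|\vec{G}|+|\vec{H}|$ cannot by itself give ${\approx_a}\subseteq T_a$, because witnessing chains may leave the bounded lengths.

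For minimality you only need the family $(T_a)_{a\in A}$ to be closed under the \emph{generating} direction of the third clause. Its premise is $\vec{G}\,T_b\,\vec{H}$ for all $b\in B$. The inclusion $T_b\subseteq{\approx_b}$, which you proved first, turns this into $\vec{G}\approx_B\vec{H}$, and hence $\vec{G}.B\,T_a\,\vec{H}.B$. The converse direction of that clause is then a consequence of ${\approx_a}=T_a$, not something $T$ has to be checked against.

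If you would rather have $T$ satisfy all clauses as biconditionals from the outset, define it by recursion on length, using $\bigcap_{c\in C}T_c$ in place of $\approx_C$. Then $T\subseteq{\approx}$ becomes a genuine length induction, and ${\approx}\subseteq T$ is pure minimality.
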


\weg{
\begin{proof}
     The proof proceeds by induction on $\vec{H}$. If $\vec{H}=\epsilon$, since $a\in B$, $\vec{G}.B\not\approx_a \vec{H}$. If $\vec{H}=\vec{H'}.C$, we distinguish case $a\notin C$ from case $a\in C$. 
If $a\notin C$, then $\vec{G}.B\approx_a \vec{H'}.C$ implies that $\vec{G}.B \approx_a \vec{H'}$, so, by inductive hypothesis, there are $\vec{H'_1}, \vec{H'_2}$ such that $\vec{H'}=\vec{H_1'}.B.\vec{H'_2}$, $\vec{H'_1}\approx_B \vec{G}$ and $a\notin \vec{H'_2}$. We now take $\vec{H_1}:=\vec{H'_1}$ and $\vec{H_2}:=\vec{H'_2}.C$ to obtain the desired result.
If $a\in C$, then $\vec{G}.B\approx_a \vec{H'}.C$ implies that $B =C$ and $\vec{G} \approx_B \vec{H'}$, so we can simply take $\vec{H_1}:= \vec{H'}$ and $\vec{H_2}:=\epsilon$.
\end{proof}
}

\begin{lemma}\label{lemmaApproxNonEmptyIntersection}
If $B\cap C \neq \emptyset$, then $C \approx_B \vec{H}$ iff there are $\vec{H_1}$, $\vec{H_2}\in \power(A)^\ast$ such that $\vec{H} = \vec{H_1}.C.\vec{H_2}$ with $\vec{H_1} \in \power(A{\setminus}(B\cup C))^*$ and $\vec{H_2}\in \power(A\setminus B)^\ast$.	
\end{lemma}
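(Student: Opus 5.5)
Throughout, read $C$ as the one-element sequence $\epsilon.C$. Since ${\approx_B} = \bigcap_{b\in B}\approx_b$, the claim reduces to analysing $C \approx_b \vec{H}$ agent by agent. We split $B$ into $B\cap C$, which is non-empty by hypothesis, and $B\setminus C$. Each part is handled by one earlier lemma, and the two answers are then glued together.

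\emph{Agents $b \in B\setminus C$.} By the second clause of the resolution relation, $\epsilon.C \approx_b \epsilon$ holds because $b\notin C$ and $\epsilon\approx_b\epsilon$. So, by transitivity and symmetry, $C\approx_b \vec{H}$ iff $\epsilon\approx_b \vec{H}$. By Lemma~\ref{lemmaApproxEmpty}, this holds iff $b\notin\vec{H}$.

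\emph{Agents $b\in B\cap C$.} By Lemma~\ref{lemmaDecompositionSingleAgent} with $\vec{G}=\epsilon$, $C\approx_b\vec{H}$ iff $\vec{H}=\vec{H_1}.C.\vec{H_2}$ with $\epsilon\approx_C\vec{H_1}$ and $b\notin\vec{H_2}$. By Lemma~\ref{lemmaApproxEmpty}, the condition $\epsilon\approx_C\vec{H_1}$ is exactly $\vec{H_1}\in\power(A\setminus C)^\ast$.

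\emph{Uniqueness of the split (main step).} For different $b\in B\cap C$, Lemma~\ref{lemmaDecompositionSingleAgent} could a priori produce different decompositions, so we must show they coincide. In every such decomposition, $\vec{H_1}$ contains no set meeting $C$, while the displayed occurrence of $C$ does meet $C$ because $C\supseteq B\cap C\neq\emptyset$. Hence that occurrence is the \emph{first} element of $\vec{H}$ that intersects $C$. The decomposition is therefore uniquely determined by $\vec{H}$ alone, independently of $b$.

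\emph{Forward direction.} Assume $C\approx_B\vec{H}$. The previous step gives a single split $\vec{H}=\vec{H_1}.C.\vec{H_2}$ with $\vec{H_1}\in\power(A\setminus C)^\ast$ and $b\notin\vec{H_2}$ for all $b\in B\cap C$. The agents of $B\setminus C$ do not occur in $\vec{H}$ at all, so they occur in neither $\vec{H_1}$ nor $\vec{H_2}$. Together this gives $\vec{H_1}\in\power(A\setminus(B\cup C))^\ast$ and $\vec{H_2}\in\power(A\setminus B)^\ast$.

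\emph{Converse.} Assume $\vec{H}=\vec{H_1}.C.\vec{H_2}$ with $\vec{H_1}\in\power(A\setminus(B\cup C))^\ast$ and $\vec{H_2}\in\power(A\setminus B)^\ast$. For $b\in B\cap C$: $\vec{H_1}$ avoids $C$ and $b\notin\vec{H_2}$, so $C\approx_b\vec{H}$ by the second case above. For $b\in B\setminus C$: $b$ occurs neither in $\vec{H_1}$ nor in $\vec{H_2}$, nor in $C$, so $b\notin\vec{H}$, and the first case gives $C\approx_b\vec{H}$. Intersecting over all $b\in B$ yields $C\approx_B\vec{H}$.
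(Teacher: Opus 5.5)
Your proof is correct and follows essentially the same route as the paper's own (omitted) argument. You split $B$ into $B\cap C$, handled by Lemma~\ref{lemmaDecompositionSingleAgent} with $\vec{G}=\epsilon$ together with Lemma~\ref{lemmaApproxEmpty}, and $B\setminus C$, whose agents must be absent from $\vec{H}$. Your observation that the split $\vec{H}=\vec{H_1}.C.\vec{H_2}$ is forced (the displayed $C$ is the first entry of $\vec{H}$ meeting $C$), and is therefore the same for every $b\in B\cap C$, fills a step the paper leaves implicit: it fixes a single $a\in B\cap C$ and then concludes that no agent of $B$ occurs in $\vec{H_2}$.
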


\weg{
	Let $G,H \in \power(A)$ with $G \cap H \neq \emptyset$.
    
    We first show that if $\vec{I} = \vec{I_1}.G.\vec{I_2}$ where $G\cup H \notin \vec{I_1}$, $H\notin \vec{I_2}$, then $G\approx_H \vec{I}$. Let $a\in G\cap H$. Since $G \notin \vec{I_1}$, $\epsilon \approx_G \vec{I_1}$ by Lemma \ref{lemmaApproxEmpty}. In particular then, $\epsilon \approx_b \vec{I_1}$ for all $b\in G$. Hence, $G= \epsilon.G.\epsilon \approx_a \vec{I_1}.G.\epsilon = \vec{I_1}.G$ (by item (ii) of Definition \ref{defAsychronousRelation}). Now, since $H\notin \vec{I_2}$, $a\notin \vec{I_2}$ so $G\approx_a \vec{I_1}.G.\vec{I_2}$ by Lemma \ref{lemmaApproxDelete}. Now, let $a\in H\setminus G$. Obviously, $a\notin G$ and $a\notin \vec{I_1}.G.\vec{I_2}$, so $G \approx_a \vec{I_1}.G.\vec{I_2}$. Therefore, $G \approx_H \vec{I_1}.G.\vec{I_2}$, \emph{i.e.} $G\approx_H \vec{I}$.
    
    We now show the reverse---namely, if $G\approx_H \vec{I}$ then there are $\vec{I_1}, \vec{I_2}$ such that $\vec{I} = \vec{I_1}.G.\vec{I_2}$ and $G\cup H \notin \vec{I_1}$, $H\notin \vec{I_2}$. Suppose then that $G\approx_H \vec{I}$.  Let $a \in G\cap H$. Since $G \approx_a \vec{I}$ and $a\in G$, by Definition \ref{defAsychronousRelation}, there are $\vec{I_1},\vec{I_2}$ such that $\vec{I}=\vec{I_1}.G.\vec{I_2}$, $a\notin \vec{I_2}$ and $\epsilon \approx_b \vec{I_1}$ for all $b\in G$. Moreover, $\epsilon \approx_b \vec{I_1}$ implies $b \notin \vec{I_1}$ by Lemma \ref{lemmaApproxEmpty}. Therefore, $G \notin \vec{I_1}$. Now, let $a\in H\setminus G$. From $G\approx_a \vec{I}$ and $a\notin G$ we conclude $a\notin \vec{I}$. In particular, then, $a\notin \vec{I_1}$ and $a\notin \vec{I_2}$. Therefore, $H\notin \vec{I_1}$, so $G\cup H \notin \vec{I_1}$ (because we already had $G\notin \vec{I_1}$) and $H \notin \vec{I_2}$.}

\begin{lemma}\label{lemmaCompositionSequences}
    If $\vec{G}\approx_{see_B(\vec{H})}\vec{I}$ and $\vec{H}\approx_B \vec{J}$, then $\vec{G}.\vec{H}\approx_B \vec{I}.\vec{J}$.
\end{lemma}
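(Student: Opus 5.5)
Induction on the length of $\vec{H}$, with $B$ universally quantified, keeps the statement uniform. Throughout I use the defining clauses of $\approx_a$ directly, together with the fact that $\approx_C$ is an intersection of equivalence relations, so that $\vec{X}\approx_C\vec{Y}$ is checked agent by agent for $c\in C$.

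\emph{Base case} $\vec{H}=\epsilon$. Then $\see_B(\epsilon)=B$, so the hypothesis gives $\vec{G}\approx_B\vec{I}$. Also $\epsilon\approx_B\vec{J}$, so by Lemma~\ref{lemmaApproxEmpty} we have $\vec{J}\in\power(A\setminus B)^\ast$. The remark after Lemma~\ref{lemmaApproxDelete} (appending a sequence avoiding $B$ preserves $\approx_B$) then yields $\vec{G}\approx_B\vec{I}.\vec{J}$, which is the claim.

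\emph{Inductive step} $\vec{H}=\vec{H'}.C$. It suffices to prove $\vec{G}.\vec{H'}.C\approx_a\vec{I}.\vec{J}$ for each $a\in B$, under the hypotheses $\vec{G}\approx_{\see_B(\vec{H})}\vec{I}$ and $\vec{H}\approx_B\vec{J}$. Note that $\see_a(\vec{H})\subseteq\see_B(\vec{H})$ and $\vec{H}\approx_a\vec{J}$.
\begin{itemize}
\item Case $a\notin C$. Then $\see_a(\vec{H})=\see_a(\vec{H'})$, and by the second clause (symmetry and transitivity of $\approx_a$) $\vec{H'}\approx_a\vec{J}$. The induction hypothesis for the singleton $\{a\}$ gives $\vec{G}.\vec{H'}\approx_a\vec{I}.\vec{J}$. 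The second clause again gives $\vec{G}.\vec{H'}.C\approx_a\vec{G}.\vec{H'}$, and transitivity concludes.
\item Case $a\in C$. Apply Lemma~\ref{lemmaDecompositionSingleAgent} to $\vec{H'}.C\approx_a\vec{J}$. This gives $\vec{J}=\vec{J_1}.C.\vec{J_2}$ with $\vec{H'}\approx_C\vec{J_1}$ and $a\notin\vec{J_2}$. Moreover $\see_a(\vec{H})=\see_{C}(\vec{H'})$, so the hypothesis yields $\vec{G}\approx_{\see_C(\vec{H'})}\vec{I}$. The induction hypothesis, applied to $\vec{H'}$ with group $C$, gives $\vec{G}.\vec{H'}\approx_C\vec{I}.\vec{J_1}$. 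The third clause then gives $\vec{G}.\vec{H'}.C\approx_a\vec{I}.\vec{J_1}.C$, and Lemma~\ref{lemmaApproxDelete} (since $a\notin\vec{J_2}$) gives $\vec{G}.\vec{H'}.C\approx_a\vec{I}.\vec{J_1}.C.\vec{J_2}=\vec{I}.\vec{J}$.
\end{itemize}

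The main subtlety is making sure the induction hypothesis is invoked with the correct group: $\{a\}$ in the first case and $C$ in the second. For this, the hypothesis on $\vec{G},\vec{I}$ must be weakened from $\see_B(\vec{H})$ to the smaller set $\see_a(\vec{H})$, which is a subset of it. This weakening uses $\see_B=\Union_{b\in B}\see_b$, together with the fact that $\approx_X\subseteq\approx_Y$ whenever $Y\subseteq X$. The remaining computation of $\see_a(\vec{H'}.C)$ in each case is immediate from the definition of the view.
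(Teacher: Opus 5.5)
Your proposal is correct and follows essentially the same route as the paper: induction on the length of $\vec{H}$, the same base case via Lemma~\ref{lemmaApproxEmpty} and Lemma~\ref{lemmaApproxDelete}, and an agent-wise inductive step that invokes the induction hypothesis with $\{a\}$ when $a\notin C$ and with $C$ (via Lemma~\ref{lemmaDecompositionSingleAgent}) when $a\in C$. The paper first separates $B\cap C=\emptyset$ from $B\cap C\neq\emptyset$ and only then splits agents into $B\cap C$ and $B\setminus C$; your direct split on $a\in C$ versus $a\notin C$ merges its first case into your $a\notin C$ subcase, which is a presentational difference only.
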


\begin{proof} 
    The proof proceeds by induction on the length of $\vec{H}$. If $\vec{H}=\epsilon$, suppose $\vec{G}\approx_B \vec I$ (because $see_B(\epsilon)=B$) and $\epsilon \approx_B \vec{J}$. by Lemma \ref{coApproxDeleteGroup}, $\epsilon \approx_B \vec{J} \Imp \vec{J} \in \power(A \setminus B)$ and then $\vec{G}\approx_B \vec{I}$ implies $\epsilon.\vec{G}=\vec{G}\approx_B \vec{I}.\vec{J}$. If $\vec{H}=\vec{H'}.C$, suppose $\vec{G}\approx_{see_B(\vec{H'}.C)}\vec{I}$ and $\vec{H'}.C\approx_B \vec{J}$. We now distinguish case $B\cap C = \emptyset$ from case $B\cap C \neq \emptyset$.
    \begin{itemize}
        \item If $B\cap C = \emptyset$, $see_B(\vec{H'}.C)=see_B(\vec{H'})$ and $\vec{H'}.C\approx_B \vec{J} \Imp \vec{H'}\approx_B \vec{J}$ (Corollary \ref{coApproxDeleteGroup}) so, by inductive hypothesis, $\vec{G}.\vec{H'}\approx_B \vec{I}.\vec{J}$. Hence $\vec{G}.\vec{H'}.C\approx_B \vec{I}.\vec{J}$ by Corollary \ref{coApproxDeleteGroup} again.
        \item If $B \cap C \neq \emptyset$, then $see_B(\vec{H'}.C)=see_{B\cup C}(\vec{H'})$. Note that since $B \subseteq B\cup C$, $\vec{G}\approx_{see_{B\cup C}(\vec{H'})}\vec{I}$ implies $\vec{G}\approx_{see_B(\vec{H'})}\vec{I}$ (and respectively for $C)$. Let $a\in B \cap C$. By Lemma \ref{lemmaDecompositionSingleAgent}, $\vec{H'}.C\approx_a \vec{J}$ implies $\vec{J}=\vec{J_1}.C.\vec{J_2}$ where $\vec{H'}\approx_C \vec{J_1}$ and $a\notin \vec{J_2}$. Then, we have $\vec{G}\approx_{see_C(\vec{H'})}\vec{I}$ and $\vec{H'}\approx_C \vec{J_1}$ so, by induction hypothesis, $\vec{G}.\vec{H'}\approx_C \vec{I}.\vec{J_1}$. Hence $\vec{G}.\vec{H'}.C\approx_a \vec{I}.\vec{J_1}.C.\vec{J_2}=\vec{I}.\vec{J}$. Let now $a\in B \setminus C$. Then, $\vec{H'}.C\approx_a \vec{J}$ iff $\vec{H'} \approx_a \vec{J}$, and we have $\vec{G}\approx_{see_a(\vec{H'})}\vec{I}$ (for $see_a(\vec{H'}.C)=see_a(\vec{H'})$). So, by induction hypothesis, $\vec{G}.\vec{H'}\approx_a \vec{I}.\vec{J}$ and hence $\vec{G}.\vec{H'}.C\approx_a \vec{I}.\vec{J}$. Therefore $\vec{G}.\vec{H'}.C\approx_B \vec{I}.\vec{J}$
    \end{itemize}
\end{proof}

\begin{lemma} \label{lemmaApproxToSee} \label{coApproxToSee} \label{lemmaApproxToSim} \label{coApproxToSim}
$\vec{G} \approx_a \vec{H}$ implies $\see_a(\vec{G}) = \see_a(\vec{H})$.
\end{lemma}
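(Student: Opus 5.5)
Since $\approx_a$ is defined as the equivalence closure of three generating clauses, and equality of views is itself an equivalence relation, I will show that the relation
\[ \vec{G} \mathrel{E} \vec{H} \quad\text{iff}\quad \see_b(\vec{G}) = \see_b(\vec{H}) \text{ for all } b \text{ in the relevant agent} \]
contains the generating pairs. Because the third clause relates agent $a$ to agents $b \in B$, the induction has to be carried out for all agents simultaneously. Precisely, I will prove by induction on the generating derivation (equivalently, on $|\vec{G}|+|\vec{H}|$) the claim that for every agent $a$, if the pair $(\vec{G},\vec{H})$ is generated for $a$, then $\see_a(\vec{G})=\see_a(\vec{H})$. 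Closing under reflexivity, symmetry and transitivity preserves equality of views, so the lemma follows for the full $\approx_a$.

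\emph{Base case.} The base clause $\epsilon\approx_a\epsilon$ is trivial.

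\emph{Clause with $a\notin B$.} For the clause $\vec{G}.B\approx_a\vec{H}$ with $a\notin B$, the definition of view gives $\see_a(\vec{G}.B)=\see_a(\vec{G})$, since $\{a\}\cap B=\emptyset$. By the induction hypothesis applied to $\vec{G}\approx_a\vec{H}$, this equals $\see_a(\vec{H})$.

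\emph{Clause with $a\in B$.} For the clause $\vec{G}.B\approx_a\vec{H}.B$ with $a\in B$, we have $\see_a(\vec{G}.B)=\see_{\{a\}\cup B}(\vec{G})=\see_B(\vec{G})$. Using the stated identity $\see_B(\vec{G})=\bigcup_{b\in B}\see_b(\vec{G})$, this is $\bigcup_{b\in B}\see_b(\vec{G})$. The hypothesis of the clause gives $\vec{G}\approx_b\vec{H}$ for every $b\in B$, and these are shorter derivations for other agents. By the induction hypothesis each $\see_b(\vec{G})=\see_b(\vec{H})$. Hence the union equals $\see_B(\vec{H})=\see_a(\vec{H}.B)$.

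\emph{Main obstacle.} The only delicate point is making the induction well-founded. The generating clauses refer to $\approx_b$ for other agents, and in fact to the full equivalence-closed $\approx_b$, not merely to generated pairs. I plan to handle this by defining all relations $\approx_a$ as the least family closed under the clauses and the equivalence rules. I then show that the family $\{(\vec{G},\vec{H}) : \see_a(\vec{G})=\see_a(\vec{H})\}_{a\in A}$ is closed under those same rules, which is exactly the case analysis above plus the trivial equivalence rules. Minimality then yields the inclusion. This avoids any length-based induction subtleties arising from transitivity steps through longer sequences.
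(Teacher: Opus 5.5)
Your proof is correct, but it takes a different route from the paper's.

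\textbf{The paper's route.} The paper inducts on $|\vec{G}|$ and relies on the characterization lemmas for $\approx_a$ to fix the shape of $\vec{H}$:
\begin{itemize}
\item Lemma~\ref{lemmaApproxEmpty} gives $a\notin\vec{H}$ in the base case.
\item When $a\in B$, Lemma~\ref{lemmaDecompositionSingleAgent} writes $\vec{H}=\vec{H_1}.B.\vec{H_2}$ with $a\notin\vec{H_2}$, and Lemma~\ref{lemmaApproxDelete} drops $\vec{H_2}$.
\item It then inverts the third clause to get $\vec{G}\approx_b\vec{H_1}$ for all $b\in B$, and applies the induction hypothesis, tacitly for all agents at once.
\end{itemize}

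\textbf{Your route.} You use rule induction instead. You show that the family $\{(\vec{G},\vec{H}) : \see_a(\vec{G})=\see_a(\vec{H})\}_{a\in A}$ is closed under the three generating clauses and under reflexivity, symmetry and transitivity. Minimality then gives the inclusion.

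\textbf{What each buys.} You only ever use the clauses in the forward direction. So you need none of the inversion and decomposition lemmas. Their proofs are omitted in the paper, and they are exactly where the interaction with the equivalence closure has to be analysed. You need only the identity $\see_B(\vec{G})=\bigcup_{b\in B}\see_b(\vec{G})$, which both proofs use. Your argument also works whether the equivalence closure is read as interleaved with the clauses or applied once at the end, since each ``same view'' relation is already an equivalence relation. The paper's route, in turn, reuses machinery it needs anyway, for instance for Lemma~\ref{lemmaCompositionSequences}.

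\textbf{One presentational fix.} Drop the parenthetical claiming that induction on the derivation is ``equivalently'' induction on $|\vec{G}|+|\vec{H}|$. As you note yourself at the end, a transitivity step can pass through an intermediate sequence longer than both endpoints, so length is not a valid measure for the closure. The minimality argument is what carries the proof.
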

\begin{proof}
The proof is by induction on the length of $\vec{G}$.

Base case. We have:
$\epsilon\approx_a\vec{H}$ implies $a \notin \vec{H}$ (Lemma \ref{lemmaApproxEmpty}) so $\see_a(\epsilon) = \{a\} = \see_a(\vec{H})$.

Induction case. Now consider $\vec{G}.B$. We distinguish $a \notin B$ from $a \in B$.
If $a \notin B$ we have:
$\vec{G}.B\approx_a\vec{H}$ iff (by definition) $\vec{G}\approx_a\vec{H}$ , which implies (induction) $\see_a(\vec{G}) = \see_a(\vec{H})$, so $\see_a(\vec{G}.B) = \see_a(\vec{H})$.
If now $a \in B$ the resolution sequence compared with must have shape $\vec{H}.B.\vec{I}$ where $a\notin \vec{I}$ (Lemma \ref{lemmaDecompositionSingleAgent}) so that:
$\vec{G}.B\approx_a\vec{H}.B.\vec{I}$ iff $\vec{G}.B\approx_a\vec{H}.B$ (Lemma \ref{lemmaApproxDelete}), which implies $\vec{G}\approx_b\vec{H}$ for all $b \in B$. This implies (induction) $\see_b(\vec{G}) = \see_b(\vec{H})$ for all $b \in B$, which implies $\Union_{b \in B}\see_b(\vec{G}) = \Union_{b \in B}\see_b(\vec{H})$ so (by definition) $\see_a(\vec{G}.B) = \see_a(\vec{H}.B) = \see_a(\vec{H}.B.\vec{I})$.
\end{proof}
Consequently, we also have that $\vec{G} \approx_B \vec{H}$ implies $\see_B(\vec{G}) = \see_B(\vec{H})$. From \cite{BalbianiD24} we further recall that for all models $M=(S,\sim,V)$, ${\sim^{\vec{G}}_B} = {\sim_{\see_B(\vec{G})}}$. In other words, we might as well have stated that $\vec{G} \approx_B \vec{H}$ implies ${\sim^{\vec{G}}_B} = {\sim^{\vec{H}}_B}$, or that $\vec{G} \approx_B \vec{H}$ implies $\sim_{\see_B(\vec{G})} = \sim_{\see_B(\vec{H})}$, and will quote Lemma~\ref{lemmaApproxToSee} in such cases.

However, $\see_a(\vec{G}) = \see_a(\vec{H})$ does not imply $\vec{G} \approx_a \vec{H}$. Typical counterexamples are that $\see_a(\vec{G}.a)=\see_a(\vec{G})$ whereas $\vec{G}.a \not\approx_a \vec{G}$ and $\see_a(ab.ab) = \see_a(ab)$ whereas $ab.ab \not\approx_a ab$.

We proceed to define the semantics.

\begin{definition}[Semantics]\label{defSemantics}
By induction on $\phi \in \lang_{DR}$, where $p\in P$, $B \subseteq A$ and $\vec{G} \in \powerset(A)^*$. 
\[ \begin{array}{lll}
M,w,\vec{G} \models p & \text{iff} & w \in V(p) \\
M,w,\vec{G} \models \top & \text{iff} &\text{true} \\
M,w,\vec{G} \models \neg \phi & \text{iff} & M,w,\vec{G} \not\models\phi \\
M,w,\vec{G} \models \phi\et\psi & \text{iff} & M,w,\vec{G} \models\phi \text{ and } M,w,\vec{G} \models \psi \\
M,w,\vec{G} \models D_B \phi & \text{iff} & M,v,\vec{H} \models \phi \text{ for all } v \in W, \ \vec{H} \in \power(A)^\ast \text{ such that } w \sim^{\vec{G}}_B v \text{ and } \vec{G} \approx_B \vec{H} \\
M,w,\vec{G} \models R_B \phi & \text{iff} & M,w,\vec{G}.B \models \phi 
\end{array} \]
\end{definition}
As not uncommon in history-based semantics, two notions of validity emerge. A formula $\phi\in\lang_{DR}$ is \emph{$\epsilon$-valid} if $M,w,\epsilon \models\phi$ for all models $M = (W,\sim,V)$ and for all $w \in W$. A formula $\phi\in\lang_{DR}$ is \emph{$\ast$-valid} (or {\em always valid}) if $M,w,\vec{G} \models\phi$ for all $M = (W,\sim,V)$, $w \in W$, and all $\vec{G} \in \powerset(A)^*$. 
%
We should acknowledge that we are uncertain if $\epsilon$-validity and $\ast$-validity correspond. This is a question left for future research.

This asynchronous semantics looks very much like the synchronous semantics of the previous section, except for the clause for distributed knowledge. Let us be explicit about some differences and correspondences. First, Example~\ref{example.asdf} shows that $M,w,\vec{G} \models \phi$ is not equivalent to (in the synchronous semantics) $M^{\vec{G}},w \models \phi$. So there is a real difference.

\begin{example} \label{example.asdf}
Consider three agents $a,b,c$ and model $M$ and updated $M^{ab}$ as in Figure \ref{fig1}. Instead of naming worlds by $w$, $v$, etcetera, we name them with their valuation, where $\np$ denotes $\lnot p$. Reflexive arrows are omitted. We also assume symmetry and transitivity.


\begin{figure}[h]
\begin{center}
\scalebox{.9}{
\begin{tikzpicture}
    \node (10) at (0,0)     {$p\nq$};
    \node (00) at (1.5,0.7) {$\np\nq$};
    \node (11) at (0,2)     {$p q$};
    \node (01) at (1.5,2.7) {$\np q$};
    \node (10b) at (5,0)     {$p\nq$};
    \node (00b) at (6.5,0.7) {$\np\nq$};
    \node (11b) at (5,2)     {$p q$};
    \node (01b) at (6.5,2.7) {$\np q$};
    \draw (10) -- node[fill=white, inner sep= 1pt] {$b$} (00);
    \draw (10) -- node[fill=white, inner sep= 1pt] {$abc$} (11);
    \draw (11) -- node[fill=white, inner sep= 1pt] {$a$} (01);
%
    \draw (10b) -- node[fill=white, inner sep= 1pt] {$abc$} (11b);
\end{tikzpicture}
}
\end{center}
\caption{An epistemic model $M$ (on the left) and its update $M^{ab}$ (on the right) with three agents.}
\end{figure}\label{fig1}


\noindent
We now have that $M^{ab},pq,\epsilon \models K_cK_a p$ but not that $M,pq,ab \not\models K_cK_a p$. Agent $c$ is unaware of agents $a$ and $b$ resolving their knowledge in model $M$: resolution $ab$ is indistinguishable from the empty sequence $\epsilon$ for her. She therefore does not know agent $a$ has learnt the truth about $p$ as a consequence of this resolution. Since $M,p q, \epsilon \not\models K_a p$ and $ab \approx_c \epsilon$, therefore $M,pq,ab \not\models K_cK_a p$. And therefore also $M,pq,\epsilon \not\models R_{ab}K_cK_a p$. Now consider the prior synchronous semantics. Then $M^{ab},pq \models K_cK_a p$ and therefore  $M,pq \models R_{ab}K_cK_a p$.
\end{example}

But we can also look at this in another way, rather illustrating a correspondence. Let us for a vanishing moment consider a {\em synchronous} distributed knowledge modality $\bm{D}_B \phi$, interpreted as follows on our history-based models, wherein we only have replaced $\vec{G} \approx_B \vec{H}$ by $\vec{G} = \vec{H}$.
\[ \begin{array}{lll}
M,w,\vec{G} \models \bm{D}_B \phi & \text{iff} & M,v,\vec{H} \models \phi \text{ for all } v \in W, \ \vec{H} \in \power(A)^\ast \text{ such that } w \sim^{\vec{G}}_B v \text{ and } \vec{G} = \vec{H}
\end{array} \]
Now write $M,w,\vec{G}\models \phi$ anywhere for the synchronous $M^{\vec{G}},w\models\phi$ (while replacing all $D_B$ by $\bm{D}_B$). This embeds the synchronous semantics into the asynchronous semantics. So, with respect to the above example, indeed, $M,pq,ab \not\models K_cK_a p$, but on the other hand we now have $M,pq,ab \models \bm{K}_c\bm{K}_a p$, which after all corresponds to $M^{ab}, pq \models K_cK_a p$. Isn't that neat? 

However, let us now go back to one language and two different semantics again. A further maybe somewhat curious observation is that resolution $R_B$ has the same semantics either way, only the interpretation of distributed knowledge $D_B$ is different synchronously and asynchronously. Despite the identical semantics, with asynchrony, resolution $R_B$ encodes {\em partial synchronization} for group of agents $B$ \emph{without} any agents not in $B$ being aware of that, whereas, with synchrony, resolution $R_B$ encodes {\em full synchronization} for all agents however with aspects of {\em partial observation}: group of agents $B$ jointly learn (all) each other's knowledge whereas all agents not in $B$ learn that, but not what the agents in $B$ learn. The agents not in $B$ only partially observed the resolution. 

Preparing the ground for the asynchronous axiomatization presented in the next section, let us review what parts of the synchronous axiomatization remain valid and what are now invalid. It is fairly simple. All axioms and rules of {\bf RD} remain valid (or validity preserving), except (RD1) and (RD2); and maybe (NecR). If $\epsilon$-valid and $*$-valid were the same (the open question), then we would have necessitation of resolution. (It is easy to see why: assume $\proves \phi$ and (NecR). Given arbitrary $(M,w)$, from the first we get that $M,w,\epsilon \models \phi$, and from that and the second $M,w,\epsilon \models R_B \phi$, so that $M,w,B \models \phi$. We therefore easily show that $\phi$ is $*$-valid by induction on the length of resolution sequences.) 

\begin{example} \label{example.rd1rd2}
Model $M$ in Figure \ref{fig2} provides a counterexample to (RD1), and model $M'$ below provides a counterexample to (RD2), where we have again named worlds with valuations of atoms.

\begin{figure}[h]
\begin{center}
\scalebox{.9}{
\begin{tikzpicture}
%
\node (10) at (0,0) {$p\nq$};
\node (00) at (1.5,.7) {$\np \nq$};
\node (11) at (0,2) {$p q$};
\node (01) at (1.5,2.7) {$\np q$};
\draw (00) -- node[fill=white,inner sep=1pt] {$d$} (10);
\draw (01) -- node[fill=white,inner sep=1pt] {$d$} (11);
\draw (10) -- node[fill=white,inner sep=1pt] {$abc$} (11);
\end{tikzpicture}
\qquad\qquad\qquad\qquad
\begin{tikzpicture}
%
\node (10) at (0,0) {$p\nq$};
\node (00) at (1.5,.7) {$\np \nq$};
\node (11) at (0,2) {$p q$};
\node (01) at (1.5,2.7) {$\np q$};
\draw (00) -- node[fill=white,inner sep=1pt] {$b$} (10);
\draw (01) -- node[fill=white,inner sep=1pt] {$b$} (11);
\draw (00) -- node[fill=white,inner sep=1pt] {$b$} (01);
\draw (10) -- node[fill=white,inner sep=1pt] {$bc$} (11);
\end{tikzpicture}
}
\end{center}
\caption{Epistemic models $M$ (on the left) and $M'$ (on the right) with four agents.}
\end{figure}\label{fig2}

We have that $M,pq,\epsilon \models D_{abc} R_{ab} \lnot K_a q$ but $M,pq,\epsilon \not \models R_{ab}D_{bc}\lnot K_a q$---after resolution $ab$, agents $b$ and $c$ can imagine $a$ has further shared knowledge with $d$, thereby learning that $q$. Therefore (RD1) is invalid.

Looking at $M'$, we have that $M',pq, \epsilon \models K_a R_{bc}K_b p$ but $M',pq,\epsilon \not\models R_{bc}K_a K_b p$ because $a$ is not aware of $b$ and $c$ sharing their knowledge so she does not know $b$ has learnt that $p$. Therefore (RD2) is invalid.
\end{example}

On the other hand, there are now novel validities involving distributed knowledge and resolution.
\begin{lemma}\label{lemmaValiditiesRD} 
 \hfill
 
$(1)\quad$ If $B \cap C \neq \emptyset$, then $\models R_B D_C \phi \imp D_{B \cup C} R_{B.\vec{I}} \phi$ for all $\vec{I}\in \power(A\setminus C)^\ast$.

$(2)\quad$ If $B \cap C \neq \emptyset$, then $\models D_{B\union C} R_{B.\vec{I}}\phi$ for all $\vec{I}\in \power(A\setminus C)^\ast$ implies $\models R_B D_C \phi$.

$(3)\quad$ If $B \cap C = \emptyset$, then $\models R_B D_C \phi \eq D_{C} \phi$. 
\end{lemma}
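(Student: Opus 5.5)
We argue semantically at an arbitrary pointed history $(M,w,\vec{G})$, so that all three items hold for $\ast$-validity (and hence for $\epsilon$-validity). Two facts recur. The first is ${\sim^{\vec{G}}_B} = {\sim_{\see_B(\vec{G})}}$, recalled after Lemma~\ref{lemmaApproxToSee}. The second comes from the definition of the view: $\see_C(\vec{G}.B)=\see_{B\cup C}(\vec{G})$ when $B\cap C\neq\emptyset$, and $\see_C(\vec{G}.B)=\see_C(\vec{G})$ otherwise. The strategy is to show that the accessible pairs $(v,\vec{H})$ on the two sides of each formula match, or that one set is contained in the other.

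\emph{Item (3).} Let $B\cap C=\emptyset$. By definition, $M,w,\vec{G}\models R_BD_C\phi$ iff $M,w,\vec{G}.B\models D_C\phi$. The relation is unchanged, since ${\sim^{\vec{G}.B}_C}={\sim_{\see_C(\vec{G}.B)}}={\sim_{\see_C(\vec{G})}}={\sim^{\vec{G}}_C}$. By Lemma~\ref{coApproxDeleteGroup} together with symmetry of $\approx_C$, we get $\vec{G}.B\approx_C\vec{H}$ iff $\vec{G}\approx_C\vec{H}$. Hence both sides quantify over exactly the same pairs $(v,\vec{H})$, and the equivalence follows at every history.

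\emph{Item (1).} Let $B\cap C\neq\emptyset$, $\vec{I}\in\power(A\setminus C)^\ast$, and assume $M,w,\vec{G}\models R_BD_C\phi$. Take $v,\vec{H}$ with $w\sim^{\vec{G}}_{B\cup C}v$ and $\vec{G}\approx_{B\cup C}\vec{H}$; we must show $M,v,\vec{H}.B.\vec{I}\models\phi$. It suffices to show that $(v,\vec{H}.B.\vec{I})$ is a $D_C$-successor of $(w,\vec{G}.B)$. For the worlds, ${\sim^{\vec{G}.B}_C}={\sim_{\see_{B\cup C}(\vec{G})}}={\sim^{\vec{G}}_{B\cup C}}$. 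For the histories we apply Lemma~\ref{lemmaCompositionSequences} with $\vec{G}$, $\vec{H}:=B$, $\vec{I}:=\vec{H}$ and $\vec{J}:=B.\vec{I}$. Its first hypothesis is $\vec{G}\approx_{\see_C(B)}\vec{H}$, which is exactly our assumption because $\see_C(B)=B\cup C$. Its second hypothesis is $B\approx_C B.\vec{I}$. This follows from Lemma~\ref{lemmaApproxNonEmptyIntersection}, with $\vec{H_1}=\epsilon$ and $\vec{H_2}=\vec{I}\in\power(A\setminus C)^\ast$. The lemma then yields $\vec{G}.B\approx_C\vec{H}.B.\vec{I}$, as required.

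\emph{Item (2).} Here we run the argument in reverse. Assume $D_{B\cup C}R_{B.\vec{I}}\phi$ is valid for every $\vec{I}\in\power(A\setminus C)^\ast$. By (T$_D$), $R_{B.\vec{I}}\phi$ is then valid. So $\phi$ holds at every pointed history of the form $(v,\vec{K}.B.\vec{I})$, for any model, any world $v$, any $\vec{K}$, and any such $\vec{I}$. Now consider a $D_C$-successor $(v,\vec{H}')$ of $(w,\vec{G}.B)$, so that $\vec{G}.B\approx_C\vec{H}'$. It then suffices to show that $\vec{H}'$ can be written as $\vec{K}.B.\vec{I}$ with $\vec{I}$ avoiding $C$. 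Fix $a\in B\cap C$. Lemma~\ref{lemmaDecompositionSingleAgent} gives $\vec{H}'=\vec{H_1}.B.\vec{H_2}$ with $\vec{G}\approx_B\vec{H_1}$ and $a\notin\vec{H_2}$.

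This is the step I expect to be the main obstacle. The decomposition only guarantees that $\vec{H_2}$ avoids $a$, whereas we need it to avoid all of $C$; an agent $c\in C\setminus B$ might appear in $\vec{H_2}$. What I would try first is to play the constraints from the different $c\in C$ against each other. For $c\in C\setminus B$, we have $\vec{G}.B\approx_c\vec{H}'$ iff $\vec{G}\approx_c\vec{H}'$. I would also choose the decomposition through the \emph{last} occurrence of $B$ in $\vec{H}'$ that is relevant to $a$. The aim is to show, by induction on $|\vec{H_2}|$, that any set in $\vec{H_2}$ meeting $C$ can be absorbed into $\vec{K}$. Failing that, I would use more of the hypothesis: it holds for all models and all prefixes $\vec{K}$. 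Using this, I would aim for a bisimulation-style argument showing that trailing resolutions by groups disjoint from $B$ but meeting $C$ do not affect the truth of $\phi$ at the relevant points.
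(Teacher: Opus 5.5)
Your items (3) and (1) are correct. In fact you establish them at every history, not only at $\epsilon$ as the paper does. For (1) you go through Lemma~\ref{lemmaCompositionSequences}, as in the soundness proof of (RD), whereas the paper uses the characterization in Lemma~\ref{lemmaApproxNonEmptyIntersection}.

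The genuine gap is item (2). You leave it open, and the reduction you set up there cannot succeed. For $\vec{G}\neq\epsilon$ it is false that every $\vec{H}'$ with $\vec{G}.B\approx_C\vec{H}'$ has the form $\vec{K}.B.\vec{I}$ with $\vec{I}\in\power(A\setminus C)^\ast$. Take $B=\{a\}$, $C=\{a,c\}$, $\vec{G}=\{c,d\}$ and $\vec{H}'=\{a\}.\{c,d\}$. Then $\{c,d\}.\{a\}\approx_a\{a\}\approx_a\{a\}.\{c,d\}$ and $\{c,d\}.\{a\}\approx_c\{c,d\}\approx_c\{a\}.\{c,d\}$. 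Yet the only occurrence of $\{a\}$ in $\vec{H}'$ is followed by $\{c,d\}$, which meets $C$. Here $\vec{H}'$ differs from $\{c,d\}.\{a\}$ only by commuting two disjoint resolutions. So your route would need an extra lemma showing that truth is invariant under such commutations, and nothing like that is in the paper.

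Moreover, even a completed $\ast$-version of (2) would not yield the $\epsilon$-version, because passing to $\epsilon$ weakens the hypothesis of (2) as well as its conclusion. The paper's $\models$ is $\epsilon$-validity: soundness and completeness are stated at $\epsilon$, and whether the two notions coincide is left open. So your remark ``and hence for $\epsilon$-validity'' fails for item (2).

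The missing idea is to stay at $(M,w,\epsilon)$ and not discard the outer $D_{B\cup C}$ via (T$_D$), since that modality is exactly what accounts for the prefix. By Lemma~\ref{lemmaApproxNonEmptyIntersection}, with the roles of $B$ and $C$ swapped, $B\approx_C\vec{H}'$ iff $\vec{H}'=\vec{H_1}.B.\vec{H_2}$ with $\vec{H_1}\in\power(A\setminus(B\cup C))^\ast$ and $\vec{H_2}\in\power(A\setminus C)^\ast$. This is precisely the decomposition you were missing: $\vec{H_2}$ avoids all of $C$.

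Now let $(v,\vec{H}')$ be a $D_C$-successor of $(w,B)$. Then $w\sim_{B\cup C}v$, since ${\sim^B_C}={\sim_{\see_C(B)}}={\sim_{B\cup C}}$. Also $\epsilon\approx_{B\cup C}\vec{H_1}$ by Lemma~\ref{lemmaApproxEmpty}. So $(v,\vec{H_1})$ is a $D_{B\cup C}$-successor of $(w,\epsilon)$. The hypothesis instance with $\vec{I}:=\vec{H_2}$, evaluated at $(M,w,\epsilon)$, gives $M,v,\vec{H_1}.B.\vec{H_2}\models\phi$. Hence $M,w,B\models D_C\phi$, i.e.\ $M,w,\epsilon\models R_BD_C\phi$.

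Applying (T$_D$) first, as you do, leaves you under $\epsilon$-validity with $\phi$ only at histories $B.\vec{I}$. That misses all successors with nonempty $\vec{H_1}$, for example $\vec{H_1}=\{d\}$ with $d\notin B\cup C$.
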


\medskip We recall once more, comparing to the first and second items jointly, (RD1) $\models R_B D_C \phi \eq D_{B \cup C} R_B \phi$ when $B \cap C \neq \emptyset$, and, comparing to the third item  (RD2) $\models R_B D_C \phi \eq D_C R_B \phi$ when $B \cap C = \emptyset$.
The proofs of the above are fairly straightforward but are omitted, as we will present a complete axiomatization {\bf RAD} in the next section. Then, the third item is (in one direction) an instantiation of the later axiom (RD) for the case that $\vec{G}=B$ and $\vec{H}=\epsilon$, whereas for the first item we have $\vec{G} = B$ as well as $\vec{H}= B.\vec{I}$. The second item instantiates derivation rule (RDI) of {\bf RAD}. 

\weg{
\begin{proof}
Let $M = (W,\sim,V)$ be a model, $w\in W$ be a world, and $B,C \in \power(A)$.

To prove $(1)$, assume $G \cap H = \emptyset$.
\begin{align*}
M,w,\epsilon \models R_G D_H \phi
&\Eq M,w,G \models D_H \phi \\
&\Eq M,v, \vec{I} \models \phi & &\text{for all } v,\vec{I} \text{ s.t. } w\sim^{G}_H v, G \approx_H \vec{I} \\
&\Eq M,v, \vec{I} \models \phi & &\text{for all } v,\vec{I} \text{ s.t. } w\sim_H v, H \notin \vec{I} \text{ by Lemma \ref{corolApproxEmptyIntersection}} \\
&\Eq M,v, \vec{I} \models \phi & &\text{for all } v,\vec{I} \text{ s.t. } w\sim_H v, \epsilon \approx_H \vec{I} \text{ by Lemma \ref{lemmaApproxEmpty}} \\
&\Eq M,w, \epsilon \models D_H \phi 
\end{align*}

Therefore $\models R_G D_H \phi \eq D_{H} \phi$.\\

Let us now move to $(2)$ : suppose $G\cap H \neq \emptyset$. Let $M = (W,\sim,V)$ be a model and $w\in W$ a world. Suppose $M,w,\epsilon \models R_G D_H\phi$. Then, $ M,w,G \models D_H \phi$, which means $ M,v,\vec{I} \models \phi$ for all $v,\vec{I}$ such that $w\sim^G_H v$  and $G \approx_H \vec{I}$. Since $G\cap H \neq \emptyset, \sim_H^G = \sim_{G\cup H}$, and, by Lemma \ref{lemmaApproxNonEmptyIntersection}, $G\approx_H \vec{I}$ iff $\vec{I}=\vec{I_1}.G.\vec{I_2}$ where $G\cup H \notin \vec{I_1}$ and $H \notin \vec{I_2}$. Hence, we have $M,v,\vec{I_1}.G.\vec{I_2}\models \phi$ so $M,v,\vec{I_1}\models R_{G.\vec{I_2}}\phi$ for all $w\sim_{G\cup H}v$, all $\vec{I_1}\not\ni G\cup H$ and all $\vec{I_2}\in \power(A\setminus H)^\ast$. Now, by Lemma \ref{corolApproxEmptyIntersection} $G\cup H \notin \vec{I_1}$ iff $\epsilon \approx_{G\cup H}\vec{I_1}$ . Therefore, we get $M,w,\epsilon\models D_{G\cup H}R_{G.\vec{I_2}}$ for all $\vec{I_2}\in \power(A\setminus H)^\ast$.

Therefore $\models R_G D_H \phi \imp D_{G \cup H} R_{G.\vec{I}} \phi \text{ for all } \vec{I}\in \power(A\setminus H)$. \\

Finally, to prove (3), let us assume again that $G\cap H \neq \emptyset$. Suppose also that $\models D_{G\union H} R_{G.\vec{I}}\phi$ for all $\vec{I}\in \power(A\setminus H)^\ast$. What needs to be shown is that $\models R_G D_H\phi$. Suppose then, towards a contradiction, that $\not\models R_G D_H \phi$. That means there are a model $M = (W,\sim,V)$ and $w\in W$ such that $M,w,\epsilon \not\models R_G D_H \phi$, \emph{i.e.} $M,w,G \not\models D_H \phi$. Then, there are $v\in W,\vec{I'}\in \power(A)^\ast$ such that $w\sim^G_H v$, $G \approx_H \vec{I'}$ and $M,v,\vec{I'}\not\models \phi$. However, since $G\cap H \neq \emptyset$, $\sim^G_H=\sim_{G\cup H}$ so $w\sim_{G\cup H}v$, and, by Lemma \ref{lemmaApproxNonEmptyIntersection}, $G \approx_H \vec{I'}$ if, and onlyf if, $\vec{I'}=\vec{I_1'}.G.\vec{I_2'}$ where $G\cup H \notin \vec{I_1'}$ and $H\notin\vec{I_2'}$. So there is $\vec{I_2'}\in \power(A \setminus H)^\ast$ such that $M,v,\vec{I_1'}.G.\vec{I_2'}\not\models\phi$ \emph{i.e.} $M,v,\vec{I_1'} \not\models R_{G.\vec{I_2'}}\phi$, where $w\sim_{G\cup H}v$ and $G\cup H \notin\vec{I_1'}$. Now, by Lemma \ref{corolApproxEmptyIntersection}, $G\cup H \notin\vec{I_1'}$ iff $\epsilon \approx_{G\cup H} \vec{I_1'}$. Hence we have:
\begin{displaymath}
    M,v,\vec{I_1'} \not \models R_{G.\vec{I_2'}}\phi \quad \text{for some } w\sim_{G \cup H}v \text{ and } \epsilon \approx_{G\cup H} \vec{I_1'}
\end{displaymath}
which implies $M,w,\epsilon \not\models D_{G\cup H}R_{G.\vec{I_2'}}\phi$. Since $\vec{I_2}\in \power(A\setminus H)^\ast$, this contradicts the hypothesis that $\models D_{G\cup H}R_{G.\vec{I}}\phi $ for all $\vec{I}\in \power(A\setminus H)^\ast$. Therefore, $\models R_G D_H \phi$.

\end{proof}

\bigskip
}


Even for fairly simple (`small') models given a set of agents $A$, and even when everyone's view is $A$ (for all $a$, $\see_a(\vec{G})=A$) there is no bound to uncertainty caused by asynchrony (see the example below). This is different from synchrony, where once everyone's view is $A$, this is commonly known. Two notes: with synchrony, we can also have arbitrary higher-order uncertainty, but at the price of large models (with enchained equivalence classes for different agents); and with asynchrony, we can also have common knowledge that everyone's view is $A$, namely after resolution with $A$ (full synchronization).

\begin{example}
Let $|A| \geq 3$. Consider a resolution sequence $\vec{G}$ without $\emptyset$ and without singleton sets (irrelevant), and without $A$. Suppose towards a contradiction that there is bound to the length $|\vec{G}|$ of $\vec{G}$ after which a further resolution is no longer informative. Let $I \subseteq A$ with $1 < |I| < |A|$. Below we define a (unique) model $M$ and a (unique) formula $\psi \in \lang_{DR}$ that distinguishes $\vec{G}$ from $\vec{G}.I$. Consider the formula
\begin{center}
    $\phi := \Et_{a \in A} K_a p \et \M_a \Et_{b \in A{\setminus}\{a\}} \neg(K_b p \vel K_b \neg p)$
\end{center}
\noindent and a model $M$ consisting of $2n+1$ states namely $\{s^1\} \union \{ s^2_a \mid a \in A\} \union \{ t^2_a \mid a \in A\}$, where the relations $\sim_a$ are the reflexive closure of the conditions: for all $a,b\in A$ with $a \neq b$, $s^2_a \sim_b t^2_a$, and for all $a\in A$, $s^2_a \sim_a s^1$, and where valuation $V(p) = \{s^1\} \union \{ s^2_a \mid a \in A\}$.
We now have that $M, s^1,\epsilon \models \phi$ (namely, already in standard epistemic logic, $M,s^1 \models \phi$). Let $\Sigma$ be the sequence of length $|\vec{G}|$ of agents $a \in A$ such that the last is a member of $I$ but not of $B$ preceding $I$, the before last is a member of $B$ but not of $B'$ preceding $B$, and so on. Let the first $B$ in $\vec{G}.I$ apart from the selected member $a$ in $\Sigma$ also contain an agent $b \neq a$. Now consider \[ \psi := K_\Sigma (K_a K_b p \et K_b K_a p) \] where $K_\Sigma$ abbreviates the stack $K_x\dots K_y$ listing all the members of $\Sigma$. Then $M,s^1,\vec{G}\not\models \psi$ whereas $M,s^1,\vec{G}.I \models \psi$.

For four agents $a,b,c,d$, (so-called `windmill') model $M$ is depicted in Figure \ref{windmillModel}. In such a model we have that, for example, $M,s^1,ab.bc \models K_c (K_a K_b p \et K_b K_a p)$ but $M,s^1,ab \not\models K_c (K_a K_b p \et K_b K_a p)$. For an example where everyone's view is $A$ but uncertainty still remains, consider the sequence of resolutions $abc.bcd.abd$ after which all four agents have accessibility relation $\sim_A$. We however have $M,s^1,abc.bcd.abd \not \models K_cK_aK_dp$, but now $M,s^1,abc.bcd.abd.abc \models K_cK_aK_dp $.
\end{example}

\begin{figure}[h]
\begin{center}
\scalebox{.75}{
\begin{tikzpicture}
\node (1) at (0,0) {$p (s^1)$};
\node (2a) at (1.8,0) {$p$};
\draw[-] (1) -- node[above] {$a$} (2a);
\node (2ah) at (2.5,1.5) {$\neg p$};
\draw[-] (2a) -- node[right] {$bcd$} (2ah);
\node (2b) at (0,1.8) {$p$};
\draw[-] (1) -- node[left] {$b$} (2b);
\node (2bh) at (-1.5,2.5) {$\neg p$};
\draw[-] (2b) -- node[below] {$acd$} (2bh);
\node (2c) at (-1.8,0) {$p$};
\draw[-] (1) -- node[below] {$c$} (2c);
\node (2ch) at (-2.5,-1.5) {$\neg p$};
\draw[-] (2c) -- node[left] {$abd$} (2ch);
\node (2d) at (0,-1.8) {$p$};
\draw[-] (1) -- node[right] {$d$} (2d);
\node (2dh) at (1.5,-2.5) {$\neg p$};
\draw[-] (2d) -- node[below] {$abc$} (2dh);
\end{tikzpicture}
}
\end{center}
\caption{`Windmill' model $M$ with four agents.}
\end{figure}\label{windmillModel}

\section{Axiomatization}  \label{sec.axiomatization}

In this section we show soundness and completeness of the axiomatization \textbf{RAD}, that is composed of the axioms and rules given in Table \ref{axiomatisationRAD}. That $\phi \in \lang_{DR}$ is derivable in \textbf{RAD} is denoted $\vdash \phi$. Axiomatization \textbf{RAD} contains an infinitary derivation rule (RDI) using admissible forms, defined below.  As we have an infinitary derivation rule, in the completeness part of the proof we proceed by maximal consistent theories, also defined below, instead of the usual maximal consistent sets. Showing completeness involves `unravelling' a canonical model, in order to get it into the right shape. Before we proceed, let us comment on the differences between axiomatizations {\bf RD} and {\bf RAD}. First, as $R_\epsilon\phi$ is $\phi$ by definition, the {\bf RD} axioms (K$_D$) \dots (DG) involving distributed knowledge are derivable from (in fact, instantiations of) the {\bf RAD} axioms (RK$_D$) \dots (RDG). Second, as resolutions $R_{\vec{G}}$ for a singleton sequence $\vec{G}$ are simply $R_B$ for some $B \subseteq A$, we can also equate the {\bf RD} axioms (RA), (RN) and (RC) with the similarly named {\bf RAD} axioms (where the {\bf RAD} axiom R$\top$ is derivable in {\bf RD} using (NecR)). So the only difference is that the {\bf RD} axioms (RD1) and (RD2) are missing (and they are not theorems, because we have shown through counterexample that they are invalid), instead of which we now have axiom (RD) and rule (RDI) (that allow to derive the validities listed in Lemma \ref{lemmaValiditiesRD}). Finally, {\bf RD} has (NecR) but not {\bf RAD}, where an open question is whether (NecR) is derivable in {\bf RAD}. Let us now proceed.

    \begin{table}
    \centering
    \begin{tabular}{lll}
         (taut)  & all instantiations of propositional tautologies & \\
         (RK$_D$) & $R_{\vec{G}}D_B(\phi \imp \psi)\imp R_{\vec{G}}(D_B\phi \imp D_B \psi)$ & \\
         (RT$_D$) & $R_{\vec{G}}D_B \phi \imp R_{\vec{G}}\phi$ & \\
         (R5$_D$) & $R_{\vec{G}}\lnot D_B\phi \imp R_{\vec{G}}D_B \lnot D_B \phi$ & \\
         (RDG)    & $R_{\vec{G}}D_B \phi \imp R_{\vec{G}}D_C \phi$ & if $B \subseteq C$ \\
         (RA)      & $R_{\vec{G}} p \eq p$ & \\
         (R$\top$) & $R_{\vec{G}}\top$ &  \\
         (RN)      & $R_{\vec{G}} \lnot \phi \eq \lnot R_{\vec{G}} \phi$ & \\
         (RC)      & $R_{\vec{G}} (\phi \land \psi) \eq (R_{\vec{G}} \phi \land R_{\vec{G}} \psi)$ & \\
         (RD)      & $R_{\vec{G}} D_B \phi \imp D_{see_B(\vec{G})}R_{\vec{H}}\phi$ & for all $\vec{H}\approx_B \vec{G}$ \\
         (NecD)    & From $\phi$, infer $D_B \phi$ & \\
         (MP)      & From $\phi$ and $\phi \imp \psi$, infer $\psi$ & \\
         (RDI)     & \multicolumn{2}{l}{From $\alpha(D_{see_B(\vec{G})}R_{\vec{H}}\phi)$ for all $\vec{H}\approx_B \vec{G}$, infer $\alpha(R_{\vec{G}} D_B\phi)$}
    \end{tabular}
    \caption{Axiomatization \textbf{RAD} for the logic of resolving asynchronous distributed knowledge}
    \label{axiomatisationRAD}
\end{table}

    An \emph{admissible form} $\alpha$ is defined by $\alpha ::= \sharp \ |\ (\phi \imp \alpha) \ |\ D_B \alpha$    where $\phi \in \lang_{DR}$ and $B \subseteq A$. The set of admissible forms is denoted $AForm$. An admissible form contains a unique occurrence of $\sharp$. For $\alpha \in AForm$ and $\phi \in \lang_{DR}$, $\alpha(\phi)$ is the formula obtained by replacing $\sharp$ in $\alpha$ by $\phi$.

\begin{lemma}
    If $\vdash \phi \imp \psi$, then $\vdash \alpha(\phi) \imp \alpha(\psi)$.
\end{lemma}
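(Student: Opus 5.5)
We argue by structural induction on the admissible form $\alpha$, following its grammar $\alpha ::= \sharp \mid (\chi \imp \alpha') \mid D_B \alpha'$. We show, for every $\alpha$, that $\vdash \phi \imp \psi$ implies $\vdash \alpha(\phi)\imp\alpha(\psi)$. The only tools needed are (taut), (MP), (NecD) and the distribution axiom for $D_B$. The latter is obtained from (RK$_D$) by taking $\vec{G}=\epsilon$, since $R_\epsilon\phi$ is $\phi$ by definition. Note that (NecR) is not available in \textbf{RAD}, but admissible forms never place $\sharp$ under a resolution modality, so we never need it.

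\emph{Base case} $\alpha=\sharp$. Here $\alpha(\phi)\imp\alpha(\psi)$ is literally $\phi\imp\psi$, which is derivable by hypothesis.

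\emph{Case} $\alpha=(\chi\imp\alpha')$. By the induction hypothesis, $\vdash \alpha'(\phi)\imp\alpha'(\psi)$. The formula
\[(\alpha'(\phi)\imp\alpha'(\psi))\imp((\chi\imp\alpha'(\phi))\imp(\chi\imp\alpha'(\psi)))\]
is an instance of a propositional tautology, so it is derivable by (taut). One application of (MP) then gives $\vdash \alpha(\phi)\imp\alpha(\psi)$.

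\emph{Case} $\alpha=D_B\alpha'$. By the induction hypothesis, $\vdash \alpha'(\phi)\imp\alpha'(\psi)$. Applying (NecD) gives $\vdash D_B(\alpha'(\phi)\imp\alpha'(\psi))$. The instance of (RK$_D$) with $\vec{G}=\epsilon$ gives $D_B(\alpha'(\phi)\imp\alpha'(\psi))\imp(D_B\alpha'(\phi)\imp D_B\alpha'(\psi))$. Applying (MP) yields $\vdash D_B\alpha'(\phi)\imp D_B\alpha'(\psi)$, which is the required formula.

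There is essentially no obstacle. The only point to check is that (RK$_D$) with the empty sequence really is the plain K-axiom for $D_B$, that is, that $R_\epsilon$ is the identity on formulas, as the paper states. Everything else is routine propositional and modal reasoning.
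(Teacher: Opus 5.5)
Your proof is correct and follows the same approach as the paper, which just says the result follows by straightforward induction on $\alpha$. You fill in the omitted cases correctly: a tautology with (MP) for $(\chi\imp\alpha')$, and (NecD) with the $\vec{G}=\epsilon$ instance of (RK$_D$) for $D_B\alpha'$.
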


\begin{proof}
    The proof proceeds by straightforward induction on $\alpha$.
\end{proof}

\begin{theorem}[Soundness]
    If $\vdash\phi$  then $\models \phi$.
\end{theorem}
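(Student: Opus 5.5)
The plan is to prove the statement for $\ast$-validity, i.e.\ that every theorem $\phi$ satisfies $M,w,\vec{K}\models\phi$ for all $M$, $w$ and all histories $\vec{K}$. This is the natural target: since there is no (NecR), $\epsilon$-validity is not preserved by (NecD). Indeed, $M,w,\epsilon\models D_B\phi$ quantifies over histories $\vec{H}$ with $\epsilon\approx_B\vec{H}$, which need not be $\epsilon$. The proof goes by induction on derivations. The basic tool is the identity $M,w,\vec{K}\models R_{\vec{G}}\phi$ iff $M,w,\vec{K}.\vec{G}\models\phi$, which follows from the semantics by induction on $|\vec{G}|$. With it, the axioms (RA), (R$\top$), (RN) and (RC) are immediate. (RK$_D$), (RT$_D$), (R5$_D$) and (RDG) reduce to checking, at the history $\vec{K}.\vec{G}$, that the relation $(w,\vec{L})\mapsto(v,\vec{L'})$ given by $w\sim^{\vec{L}}_Bv$ and $\vec{L}\approx_B\vec{L'}$ is an equivalence relation (using Lemma~\ref{lemmaApproxToSee} to see that $\sim^{\vec{L}}_B=\sim^{\vec{L'}}_B$ whenever $\vec{L}\approx_B\vec{L'}$), and that it shrinks as $B$ grows. (MP) is trivial, and (NecD) is immediate for $\ast$-validity.

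For (RD), I first prove an auxiliary identity by induction on $\vec{G_2}$ straight from the definition of $\see$:
\[\see_B(\vec{G_1}.\vec{G_2})=\see_{\see_B(\vec{G_2})}(\vec{G_1}).\]
Combined with ${\sim^{\vec{G}}_B}={\sim_{\see_B(\vec{G})}}$, this gives ${\sim^{\vec{K}.\vec{G}}_B}={\sim^{\vec{K}}_{\see_B(\vec{G})}}$. Now assume $M,w,\vec{K}.\vec{G}\models D_B\phi$, fix $\vec{H}\approx_B\vec{G}$, and take $v$ and $\vec{J}$ with $w\sim^{\vec{K}}_{\see_B(\vec{G})}v$ and $\vec{K}\approx_{\see_B(\vec{G})}\vec{J}$. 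Lemma~\ref{lemmaCompositionSequences} gives $\vec{K}.\vec{G}\approx_B\vec{J}.\vec{H}$, and the identity gives $w\sim^{\vec{K}.\vec{G}}_Bv$. Hence $M,v,\vec{J}.\vec{H}\models\phi$, that is, $M,v,\vec{J}\models R_{\vec{H}}\phi$, which is what (RD) requires.

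For (RDI), I first establish a pointwise converse of (RD). The claim is that if $M,w,\vec{K}\models D_{\see_B(\vec{G})}R_{\vec{H}}\phi$ for every $\vec{H}\approx_B\vec{G}$, then $M,w,\vec{K}\models R_{\vec{G}}D_B\phi$. This needs a decomposition lemma converse to Lemma~\ref{lemmaCompositionSequences}: if $\vec{K}.\vec{G}\approx_B\vec{L}$, then $\vec{L}=\vec{J}.\vec{H}$ for some $\vec{J}\approx_{\see_B(\vec{G})}\vec{K}$ and $\vec{H}\approx_B\vec{G}$. I expect this to be the main obstacle. I will prove it by induction on $|\vec{G}|$, following the case split of Lemma~\ref{lemmaCompositionSequences}.
\begin{itemize}
\item If the last set $C$ of $\vec{G}$ satisfies $B\cap C=\emptyset$, use Lemma~\ref{lemmaApproxDelete} to strip $C$ and absorb it into $\vec{H}$ (which stays $\approx_B\vec{G}$).
\item If $B\cap C\neq\emptyset$, apply Lemma~\ref{lemmaDecompositionSingleAgent} for each $a\in B\cap C$ to locate an occurrence of $C$ in $\vec{L}$. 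The delicate point is to show that all agents in $B\cap C$ point to the same occurrence; I would argue this from the fact that the suffixes after the chosen occurrences are free of the respective agents, which forces the last occurrence of $C$ to serve everyone. Then apply the induction hypothesis with group $C$ (and with the agents of $B\setminus C$ separately), and merge the decompositions.
\end{itemize}
If a uniform split proves hard to obtain, my fallback is to strengthen the statement to a single agent $a$, decomposing $\vec{L}$ per agent, and then intersect the results.

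Finally, I lift the pointwise converse through admissible forms by induction on $\alpha$. The statement to prove is: whenever, at every point, the truth of all $X_{\vec{H}}$ implies the truth of $Y$, then every point satisfying all $\alpha(X_{\vec{H}})$ satisfies $\alpha(Y)$. The case $\sharp$ is the hypothesis itself. The case $\psi\imp\alpha$ has a fixed antecedent and so goes through pointwise. The case $D_C\alpha$ uses that a box commutes with arbitrary intersections of truth sets. Taking $X_{\vec{H}}=D_{\see_B(\vec{G})}R_{\vec{H}}\phi$ and $Y=R_{\vec{G}}D_B\phi$, the $\ast$-validity of all premisses of (RDI) then yields the $\ast$-validity of its conclusion.
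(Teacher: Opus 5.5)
\textbf{There is a genuine gap in your (RDI) case.} The rest is fine: your handling of the axioms, of (RD) via $\see_B(\vec{G_1}.\vec{G_2})=\see_{\see_B(\vec{G_2})}(\vec{G_1})$ and Lemma~\ref{lemmaCompositionSequences}, of (NecD), and the lifting through admissible forms all work. You are right that (NecD) is only straightforward for $\ast$-validity.

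The problem is the decomposition lemma your (RDI) case rests on: it is false. Take $B=\{a,c\}$, $\vec{K}=\{c\}$, $\vec{G}=\{a\}$ and $\vec{L}=\{a\}.\{c\}$. Then $\{c\}.\{a\}\approx_a\{a\}\approx_a\{a\}.\{c\}$ and $\{c\}.\{a\}\approx_c\{c\}\approx_c\{a\}.\{c\}$, so $\vec{K}.\vec{G}\approx_B\vec{L}$, while $\see_B(\vec{G})=\{a,c\}$. By Lemma~\ref{lemmaApproxEmpty}, every $\vec{J}\approx_{\{a,c\}}\{c\}$ contains $c$ and avoids $a$, and no prefix of $\{a\}.\{c\}$ does.

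Your single-agent fallback fails the same way, e.g.\ for $B=\{a\}$, $\vec{K}=\{c\}$, $\vec{G}=\{a\}.\{a,c\}$, $\vec{L}=\{a\}.\{c\}.\{a,c\}$. The point you flag as delicate is harmless: all $a\in B\cap C$ pick out the last set of $\vec{L}$ meeting $B\cap C$. The real issue is that $\approx_B$ ignores the relative order of resolutions with disjoint groups. So $\vec{L}$ may place a resolution coming from $\vec{G}$ before one coming from $\vec{K}$. In your induction this shows up at the merge step, where agents of $B\setminus C$ may have the $\vec{K}$-part of their history after the chosen occurrence of $C$.

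\textbf{How to repair it.} The pointwise converse still holds: in the example, $\{a\}.\{c\}\approx_A\{c\}.\{a\}$, and the latter does split. Proving it needs two ingredients your plan lacks.
\begin{itemize}
\item \emph{Invariance.} If $\vec{L}\approx_A\vec{L'}$, then $M,v,\vec{L}\models\psi$ iff $M,v,\vec{L'}\models\psi$. This is a routine induction on $\psi$, using Lemma~\ref{lemmaApproxToSee} for $D_C$ and the third clause of $\approx$ for $R_C$.
\item \emph{Decomposition up to $\approx_A$.} From $\vec{K}.\vec{G}\approx_B\vec{L}$, obtain $\vec{J}\approx_{\see_B(\vec{G})}\vec{K}$ and $\vec{H}\approx_B\vec{G}$ with $\vec{L}\approx_A\vec{J}.\vec{H}$.
\end{itemize}
For the second, consider the case $\vec{G}=\vec{G'}.C$ with $B\cap C\neq\emptyset$ and $\vec{L}=\vec{L_1}.C.\vec{L_2}$. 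Commute the resolutions of $\vec{L_2}$ that feed into agents of $B\setminus C$ to the front of $C$. Swapping adjacent resolutions with disjoint groups preserves $\approx_A$. None of these resolutions can meet $C$: otherwise some $b\in B\setminus C$ would have in its history a resolution extending the final history, in $\vec{K}.\vec{G'}$, of an agent of $B\cap C$, which is incompatible with $\vec{K}.\vec{G'}\approx_b\vec{L}$. Then apply the induction hypothesis to the group $B\cup C$.

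\textbf{Relation to the paper.} The paper never meets this issue because it checks (RDI) only for $\alpha=\sharp$ at $\epsilon$, where the trivial split $\vec{L}=\epsilon.\vec{L}$ suffices. However, other admissible forms such as $D_C\sharp$ already force a nonempty $\vec{K}$ even for $\epsilon$-validity. So the paper's ``similarly'' passes over the same point.
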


\begin{proof}
    It needs to be shown that all axioms are valid and all rules preserve validity. As for the axioms, their validity is obvious, except for (RD). Let us then show $\models R_{\vec{G}}D_B \phi \imp D_{see_B(\vec{G})}R_{\vec{H}}\phi$ for all $\vec{H}\in \power(A)^\ast$ such that $\vec{G}\approx_B \vec{H}$.
    Suppose there are $\vec{G},\vec{H},B$ and $M= (W,\sim,V)$, $w\in W$ such that $M,w,\epsilon \models R_{\vec{G}}D_B \phi$ and $M,w,\epsilon \not \models D_{see_B(\vec{G})}R_{\vec{H}}\phi$, where $\vec{G}\approx_B \vec{H}$. Then, there are $v\in W$ and $\vec{I}\in \power(A)^\ast$ such that $w \sim_{see_B(\vec{G})} v$ \emph{i.e.} $w \sim^{\vec{G}}_Bv$, $\epsilon \approx_{see_B(\vec{G})} \vec{I}$ and $M,v,\vec{I}\not \models R_{\vec{H}}\phi$, so $v,\vec{I}.\vec{H}\not \models \phi$. Since $\epsilon \approx_{see_B(\vec{G})}\vec{I}$ and $\vec{G}\approx_B \vec{H}$, by Lemma \ref{lemmaCompositionSequences}, $\epsilon.\vec{G}=\vec{G}\approx_B \vec{I}.\vec{H}$. Hence from $M,v,\vec{I}.\vec{H}\not \models \phi$ we can conclude $M,w,\vec{G}\not\models D_B \phi$ \emph{i.e.} $M,w,\epsilon\not\models R_{\vec{G}}D_B \phi$. This contradicts the hypothesis.

    We now turn to the rules. That (MP) and (Nec) preserve validity can be standardly shown. Let us consider (RDI).
    For clarity, we only consider the case where $\alpha = \sharp$, whilst others can be treated similarly.
    Suppose $\models D_{see_B(\vec{G})}R_{\vec{H}}\phi$ for all $\vec{H}\in \power(A)^\ast$ such that $\vec{G} \approx_B \vec{H}$. Suppose also, towards a contradiction, $\not\models R_{\vec{G}}D_B \phi$. Then there are $M=(W,\sim,V), w \in W$ such that $M,w,\epsilon \not\models R_{\vec{G}}D_B \phi$, \emph{i.e.} $M,w,\vec{G}\not \models D_B \phi$. Hence, there are $v\in W, \vec{H} \in \power(A)^\ast$ such that $w\sim_B^{\vec{G}}v$, $\vec{G} \approx_B \vec{H}$ and $M,v,\vec{H}\not \models \phi$, so $M,v,\epsilon \not \models R_{\vec{H}}\phi$. But now, since $w\sim_B^{\vec{G}}v$, also $w\sim_{see_B(\vec{G})}v$. Moreover, $\epsilon \approx_{see_B(\vec{G})} \epsilon$ by definition. Hence, from $M,v,\epsilon \not \models R_{\vec{H}}\phi$ we get $M,w,\epsilon\not \models D_{see_B(\vec{G})} R_{\vec{H}} \phi$, where $\vec{G}\approx_B \vec{H}$. This contradicts the hypothesis. Therefore, $\models R_{\vec{G}}D_B \phi$.
\end{proof}

{\bf Standard frames and semi-standard frames.} \label{section.standard}
Instead of models wherein $\sim_B$ is equal to $\inter_{b \in B} \sim_b$ (and even by definition) we need models wherein $\sim_B$ may be a proper subset of $\inter_{b \in B} \sim_b$. The first we name {\em standard models}, based on {\em standard frames}, whereas the second are {\em semi-standard models} based {\em semi-standard frames}. More precisely, a semi-standard frame is a structure $(W,\sim)$ where $W$ is a non-empty set and, for all $B\subseteq A$, $\sim_B$ is an equivalence relation on $W$ such that for all $B,C\subseteq A$, if $B\subseteq C$ then $\sim_C \subseteq \sim_B$. The completeness proofs involving distributed knowledge often involve `unravelling' a canonical model based on a semi-standard frame into one that is based on a standard frame with the same information content \cite{FaginHV92,AgotnesW17}. In our completeness proof we employ the results relating standard and semi-standard frames obtained in \cite[Section 6]{BalbianiD24} and in particular \cite[Proposition 6.10]{BalbianiD24} that the validities on standard frames and semi-standard frames correspond. Here, we merely adapt this to our setting, where the result to use is that from any model $M = (W,\sim,V)$ based on a semi-standard frame we can construct a model $M' = (W',\sim',V')$ based on a standard frame; in $M'$ the worlds consist of pairs $(w,f)$ where $w \in W$ and $f$ is a function of the set $\mathcal F$ of such functions of type $\power(A) \times A \imp \power(W)$. One can then show that (i) if $(w,f)\sim'^{\vec{G}}_B (v,g)$, then $w\sim^{\vec{G}}_B v$, and that (ii) if $w\sim^{\vec{G}}_B v$, then there is $g\in \mathcal{F}$ such that $(w,f)\sim'^{\vec{G}}_B (v,g)$, for all $\vec{G}\in \power(A)^\ast$. 

Providing a procedure to transform each model based on a semi-standard frame into another modally equivalent model that is based on a standard frame requires the following notions.
For all $B \subseteq A$ and for all $w \in W$, $\lbrack w\rbrack_{B}$ is the equivalence class of $w$ modulo $\sim_{B}$.
For all $X,Y \in \power(W)$, let $X+Y=(X\setminus Y)\cup(Y\setminus X)$\footnote{Note that $(\power(W),\emptyset,W,+,\cap)$ is a Boolean ring.}.
We recall that ${\mathcal F}$ is the set of all functions of type $f:\power(A)\times A\longrightarrow\power(W)$.

Let now $M=(W,\sim,V)$ be a model based on a semi-standard frame. We define the model $M'=(W',\sim',V')$ by $W':=W\times{\mathcal F}$, $(w,f) \in V'(p)$ iff $w\in V(p)$ and for all $B \subseteq A$, $\sim'_{B}$ is the binary relation on $W'$ such that for all $(w,f),(v,g) \in W'$, $(w,f){\sim'_{B}}(v,g)$ iff, for all $C \subseteq A$, the two following conditions hold:

        $- \quad \lbrack w\rbrack_{C}+\Sigma_{a \in C}f(C,a)=\lbrack v\rbrack_{C}+\Sigma_{a \in C}g(C,a)$ 
        
        $- \quad \text{if } a\in B \cap C, \text{ then } f(C,a)=g(C,a) \text{ for all } a \in A$. \\

Obviously, all $\sim'_B$ are equivalence relations, so $(W',\sim')$ is a frame. We now show that it is a standard frame.

\begin{lemma}\label{lemmaTheFrameIsSemiStandard}
    The frame $(W',{\sim'})$ is semi-standard.
\end{lemma}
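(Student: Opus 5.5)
Showing that $(W',\sim')$ is semi-standard amounts to two checks: that each $\sim'_B$ is an equivalence relation, and that $\sim'$ is antitone in the group, i.e.\ $B\subseteq C$ implies $\sim'_C\subseteq\sim'_B$. The first check only needs to be made explicit, since the text already notes it. The second is a direct comparison of the defining conditions.

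\emph{Equivalence.} Fix $B\subseteq A$. The relation $(w,f)\sim'_B(v,g)$ is a conjunction, over all $C\subseteq A$, of two kinds of conditions. The first is an equality of values of the map
\[(w,f)\mapsto [w]_C+\Sigma_{a\in C}f(C,a),\]
which sends $W'$ into $\power(W)$. The second is a family of equalities $f(C,a)=g(C,a)$, one for each $a\in B\cap C$. A relation defined as ``these two maps agree on both arguments'' is the kernel of a map, so it is reflexive, symmetric and transitive. An intersection of kernels is again an equivalence relation, so each $\sim'_B$ is one. We do not need the Boolean-ring structure of $(\power(W),\emptyset,W,+,\cap)$ here. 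We only need that $+$ is a well-defined operation, so that the sums are determined by $(w,f)$ alone.

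\emph{Antitonicity.} Let $B\subseteq C'$ and assume $(w,f)\sim'_{C'}(v,g)$. We show $(w,f)\sim'_B(v,g)$ condition by condition, for every $C\subseteq A$. The first condition does not mention the group index at all. It is therefore literally the same for $\sim'_{C'}$ and $\sim'_B$, and it transfers unchanged. For the second condition, take $a\in B\cap C$. Since $B\subseteq C'$ we get $a\in C'\cap C$, so the hypothesis gives $f(C,a)=g(C,a)$. Hence the requirements for $\sim'_B$ form a subset of those for $\sim'_{C'}$, and $\sim'_{C'}\subseteq\sim'_B$ follows.

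\emph{Obstacle.} There is no real obstacle. The only point needing care is reading the garbled second clause, ``if $a\in B\cap C$ then $f(C,a)=g(C,a)$ for all $a\in A$''. It should be read as quantifying over $a\in B\cap C$; with that reading, the inclusion argument above goes through directly. The harder facts, that the frame is actually \emph{standard} ($\sim'_B=\bigcap_{b\in B}\sim'_b$) and properties (i)--(ii), are not part of this lemma. They would use the Boolean-ring structure as in \cite[Section 6]{BalbianiD24}.
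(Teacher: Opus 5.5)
Your proof is correct and follows the paper's argument. The paper also notes that the first clause does not mention the group index and that each instance of the second clause for $B$ is an instance for the larger group; it just phrases this as a proof by contradiction, and it calls the equivalence-relation property obvious where you check it via kernels. One small correction to your closing aside: the paper proves standardness by the same clause-splitting argument, with no Boolean-ring reasoning. The ring structure is needed only for two later steps: the cancellation showing that $(w,f)\sim'_B(v,g)$ implies $w\sim_B v$, and the construction of $g$ in the converse direction.
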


\begin{proof}
    Let $B,B' \subseteq A$.
    Suppose $B \subseteq B'$.
    Suppose $\sim'_{B'}{\not\subseteq}\sim'_{B}$.
    Hence, there exist $(w,f),(v,g) \in W'$ such that $(w,f){\sim'_{B'}}(w,g)$ and $(w,f){\not\sim'_{B}}(v,g)$.
    Thus, either there exists $C \subseteq A$ such that $\lbrack w\rbrack_{C}+\Sigma_{a \in C}f(C,a) \neq \lbrack v\rbrack_{C}+\Sigma_{a \in C}g(C,a)$, or there exist $C \subseteq A$ and $a \in B \cap C$ such that $f(C,a) \neq g(C,a)$.
    In the former case, since $(w,f){\sim'_{B'}}(v,g)$, therefore $\lbrack w\rbrack_{C}+\Sigma_{a \in C}f(C,a)=\lbrack v\rbrack_{C}+\Sigma_{a \in C}g(C,a)$: a contradiction.
    In the latter case, since $B \subseteq B'$, also $a \in B'\cap C$.
    Since $(w,f){\sim'_{B'}}(v,g)$, therefore $f(C,a)=g(C,a)$: a contradiction.
    Therefore $\sim'_{B'}{\subseteq}\sim'_{B}$.
\end{proof}

\begin{lemma}\label{lemmaTheFrameIsStandard}
    The semi-standard frame $(W',{\sim'})$ is standard.
\end{lemma}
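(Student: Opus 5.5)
We must show that for all $B\subseteq A$, ${\sim'_B} = \bigcap_{b\in B}{\sim'_b}$, where $\sim'_b$ abbreviates $\sim'_{\{b\}}$. The inclusion ${\sim'_B}\subseteq\bigcap_{b\in B}{\sim'_b}$ follows at once from Lemma~\ref{lemmaTheFrameIsSemiStandard}, since $\{b\}\subseteq B$ for every $b\in B$.

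For the converse, assume $(w,f)\sim'_b(v,g)$ for every $b\in B$. We check the two defining conditions of $\sim'_B$ directly.

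\emph{First condition.} For every $C\subseteq A$ we need
\[
\lbrack w\rbrack_{C}+\Sigma_{a \in C}f(C,a)=\lbrack v\rbrack_{C}+\Sigma_{a \in C}g(C,a).
\]
This condition does not depend on the subscript group at all. So it is inherited from any single $\sim'_b$ with $b\in B$.

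\emph{Second condition.} Let $C\subseteq A$ and $a\in B\cap C$; we need $f(C,a)=g(C,a)$. Since $a\in B$, we have $(w,f)\sim'_a(v,g)$. Because $a\in\{a\}\cap C$, the second condition for $\sim'_{\{a\}}$ yields $f(C,a)=g(C,a)$, as required. Hence $(w,f)\sim'_B(v,g)$.

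\emph{The case $B=\emptyset$.} Here the intersection over $B$ is by convention the universal relation. We must therefore show that $\sim'_\emptyset$ is universal, and this is the main obstacle. The second condition is vacuous for $B=\emptyset$, but the first condition is not automatic. Moreover, the frame cannot simply be declared to set $\sim'_\emptyset$ universal, because $\sim'_\emptyset$ must contain every $\sim'_B$.

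The plan is to read the statement for nonempty $B$, which is the content actually needed: relations are only ever instantiated as $\sim_{\see_B(\vec G)}$, and $\see_B(\vec{G})\supseteq B$ is nonempty whenever $B$ is. Alternatively, one can adopt the convention that the intersection over the empty family is $\sim'_\emptyset$ itself, which is consistent with semi-standardness. Either choice should be noted explicitly, since the standardness identity at $\emptyset$ otherwise fails in general.

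The remaining step is routine: confirm that every $\sim'_B$ is indeed an equivalence relation, and that the valuation plays no role in the argument.
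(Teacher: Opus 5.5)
Your argument for nonempty $B$ is correct and is essentially the paper's. The paper shows ${\sim'_B}\cap{\sim'_{B'}}\subseteq{\sim'_{B\cup B'}}$ by contradiction, using exactly your two observations: the first condition does not mention the group, and the second is checked agent by agent. Iterating this over singletons gives your identity. You are also right about $B=\emptyset$, and here you go beyond the paper, which passes over the case in silence: the identity cannot hold there under the convention $\sim_\emptyset=W\times W$. In fact it always fails, since taking $C=\{a\}$ and $f,g$ differing only at $(\{a\},a)$ gives $(w,f)\not\sim'_\emptyset(w,g)$.

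What does not hold up is your reasoning for setting this case aside.

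\begin{itemize}
\item Declaring $\sim'_\emptyset:=W'\times W'$ is harmless for semi-standardness, since the universal relation contains every $\sim'_B$. What it would break is Lemma~\ref{lemmaSimStandardToSemiStandard} at $B=\emptyset$, because the canonical model's $\sim_\emptyset$ is not universal: theories containing $D_\emptyset p$ and $\neg D_\emptyset p$ are not $\sim_\emptyset$-related.
\item The nonempty case is not ``the content actually needed''. $D_\emptyset$ belongs to $\lang_{DR}$ and $\see_\emptyset(\vec{G})=\emptyset$ for every $\vec{G}$, so $\sim'_\emptyset$ is used when evaluating $D_\emptyset$.
\item Your alternative convention changes what $D_\emptyset$ means on standard frames: it is no longer the universal modality. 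Relating it back to the paper's semantics requires exactly the step below.
\end{itemize}

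The missing idea is a generated-submodel step. Every $D_B$-clause at $(x,\vec{G})$ moves only along $\sim'_{\see_B(\vec{G})}\subseteq\sim'_\emptyset$ (Lemma~\ref{lemmaTheFrameIsSemiStandard}). Hence restricting $M'$ to the $\sim'_\emptyset$-class of the evaluation point preserves truth. On that class $\sim'_\emptyset$ is universal, and your argument gives ${\sim'_B}=\bigcap_{b\in B}{\sim'_b}$ for every nonempty $B$. So the restriction is standard in the full sense, and Proposition~\ref{propStandardToSemiStandard} goes through.
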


\begin{proof}
    Let $B,B' \subseteq A$.
    Suppose $\sim'_{B\cup B'}{\not\supseteq}\sim'_{B}\cap\sim'_{B'}$.
    Hence, there exist $(w,f),(v,g) \in W'$ such that $(w,f){\not\sim'_{B\cup B'}}(v,g)$, $(w,f){\sim'_{B}}(v,g)$ and $(w,f){\sim'_{B'}}(v,g)$.
    Thus, for all $C \subseteq A$, $\lbrack w\rbrack_{C}+\Sigma_{b \in C}f(C,b)=\lbrack v\rbrack_{C}+\Sigma_{b \in C}g(C,b)$.
    Moreover, for all $C \subseteq A$ and for all $b \in B \cap C$, $f(C,b)=g(C,b)$ and for all $C \subseteq A$ and for all $b \in B' \cap C$, $f(C,b)=g(C,b)$. Since $(w,f){\not\sim'_{B\cup B'}}(v,g)$, there exist $E \subseteq A$ and $a \in (B\cup B')\cap E$ such that $f(E,a) \neq g(E,a)$. But, since $a\in B \cup B'$, either $a \in B$, or $a \in B'$.
    In the former case, since for all $C \subseteq A$ and for all $b \in B \cap C$, $f(C,b)=g(C,b)$, therefore $f(E,a)=g(E,a)$: a contradiction. In the latter case, since for all $C \subseteq A$ and for all $b \in B'\cap C$, $f(C,b)=g(C,b)$, therefore $f(E,a)=g(E,a)$: a contradiction.
    Therefore $\sim'_{B\cup B'}\supseteq\sim'_{B}\cap\sim'_{B'}$.
\end{proof}

\begin{lemma}\label{lemmaSimStandardToSemiStandard}
    For all $w,v\in W, f,g \in \mathcal{F}$ and $B \subseteq A$, if $(w,f)\sim'_B (v,g)$, then $w\sim_B v$.
\end{lemma}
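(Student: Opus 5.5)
We take $C:=B$ in the first defining condition of $\sim'_B$ and use the fact that the second condition cancels the $f$- and $g$-terms. Assume $(w,f)\sim'_B(v,g)$. The first condition, instantiated at $C=B$, gives
\[ \lbrack w\rbrack_{B}+\Sigma_{a \in B}f(B,a)=\lbrack v\rbrack_{B}+\Sigma_{a \in B}g(B,a). \]
The second condition, instantiated at $C=B$, gives $f(B,a)=g(B,a)$ for every $a\in B\cap B=B$. Hence the two sums $\Sigma_{a \in B}f(B,a)$ and $\Sigma_{a \in B}g(B,a)$ are the same element $S$ of $\power(W)$.

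We then cancel $S$ in the Boolean ring $(\power(W),\emptyset,W,+,\cap)$. Since $+$ is symmetric difference, every element is its own additive inverse ($X+X=\emptyset$), and $+$ is associative and commutative with neutral element $\emptyset$. Adding $S$ to both sides of $\lbrack w\rbrack_{B}+S=\lbrack v\rbrack_{B}+S$ yields $\lbrack w\rbrack_{B}=\lbrack v\rbrack_{B}$. Since $\sim_B$ is an equivalence relation on $W$, equality of the $\sim_B$-classes of $w$ and $v$ is exactly $w\sim_B v$, which is what the lemma claims.

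There is essentially no obstacle here. The only thing to check is the degenerate case $B=\emptyset$. There the sums are empty, so both equal $\emptyset$ and the cancellation step is trivial. In any case $\sim_\emptyset$ is the universal relation in a semi-standard frame, because $\emptyset\subseteq C$ gives $\sim_C\subseteq\sim_\emptyset$, and this is consistent with the conclusion. The argument therefore only needs the instance $C=B$ of the definition, and it never uses the instances $C\neq B$.
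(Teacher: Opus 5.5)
Your proof is correct and is essentially the paper's own argument: instantiate both defining conditions of $\sim'_B$ at $C=B$, use the second to identify the two sums, and cancel them in the symmetric-difference ring to get $[w]_B=[v]_B$. One small caveat: your aside that $\sim_\emptyset$ is universal in every semi-standard frame does not follow from antimonotonicity, which only gives that $\sim_\emptyset$ contains every $\sim_C$. Nothing in your argument depends on that aside, since the cancellation already covers $B=\emptyset$.
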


\begin{proof}
    Let $B \subseteq A$ and $(w,f),(v,g) \in W'$ be such that $(w,f) \sim'_{B}(v,g)$.
    Hence, for all $C \subseteq A$, $\lbrack w\rbrack_{C}+\Sigma_{a \in C}f(C,a)=\lbrack v\rbrack_{C}+\Sigma_{a \in C}g(C,a)$.
    Moreover, for all $a \in B \cap C$, $f(C,a)=g(C,a)$.
    Thus, for $C = B$ we get $\lbrack w\rbrack_{B}+\Sigma_{a \in B}f(B,a)=\lbrack v\rbrack_{B}+\Sigma_{a \in B}g(B,a)$.
    Moreover, for all $a \in B$, $f(B,a)=g(B,a)$.
    Consequently, $\lbrack w\rbrack_{B}=\lbrack v\rbrack_{B}$.
    Hence, $w \sim_{B} v$.
\end{proof}

\begin{corollary}\label{coSimStandardToSemiStandard}
     For all $w,v\in W, f,g \in \mathcal{F}$ and $\vec{G}\in \power(A)^\ast, B \subseteq A$, if $(w,f)\sim'^{\vec{G}}_B (v,g)$, then $w\sim^{\vec{G}}_B v$.
\end{corollary}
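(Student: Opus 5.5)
The plan is to reduce the corollary to Lemma~\ref{lemmaSimStandardToSemiStandard} via the view function. The key fact, recalled from \cite{BalbianiD24} just after Lemma~\ref{lemmaApproxToSee}, is that the relation of a group $B$ after a resolution sequence $\vec{G}$ is the relation of the group $\see_B(\vec{G})$ in the original frame. This is the very definition of the view. On a semi-standard frame I will read this as the definition of $\sim^{\vec{G}}_B$ there, namely ${\sim^{\vec{G}}_B} := {\sim_{\see_B(\vec{G})}}$. On the standard frame $(W',\sim')$ of Lemma~\ref{lemmaTheFrameIsStandard}, the identity ${\sim'^{\vec{G}}_B} = {\sim'_{\see_B(\vec{G})}}$ holds as a genuine equality by the recalled result.

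With this in hand the argument is short. Assume $(w,f)\sim'^{\vec{G}}_B (v,g)$. Rewriting gives $(w,f)\sim'_{\see_B(\vec{G})}(v,g)$. Apply Lemma~\ref{lemmaSimStandardToSemiStandard} to the group $C := \see_B(\vec{G})\subseteq A$; this yields $w\sim_{\see_B(\vec{G})} v$, which is $w\sim^{\vec{G}}_B v$ by the same reading on $M$.

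If one prefers not to rely on the identity for semi-standard frames, I would instead prove it by induction on the length of $\vec{G}$, following the recursive definition of $\see$. In the base case $\vec{G}=\epsilon$, the view is $\see_B(\epsilon)=B$ and there is nothing to show. For $\vec{G}.C$ with $B\cap C=\emptyset$, the relation of $B$ is untouched and $\see_B(\vec{G}.C)=\see_B(\vec{G})$. For $B\cap C\neq\emptyset$, the relation of $B$ after $\vec{G}.C$ is obtained from the relations of $B\cup C$ after $\vec{G}$, matching $\see_B(\vec{G}.C)=\see_{B\cup C}(\vec{G})$. In each case the induction hypothesis applies to a group of the same form, so the identity transfers.

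The only delicate point is making sure the update on semi-standard frames is defined group-wise, with the resolved group's relation being $\sim_{\see}$ rather than an intersection of individual relations. In semi-standard frames those two differ, and only the former gives the matching needed here. Once this convention is fixed, the corollary is just Lemma~\ref{lemmaSimStandardToSemiStandard} instantiated at the group $\see_B(\vec{G})$.
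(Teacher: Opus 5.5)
Your proposal is correct and matches the argument the paper leaves implicit (it states the corollary without proof): rewrite both relations via ${\sim^{\vec{G}}_B}={\sim_{\see_B(\vec{G})}}$, which is the convention the paper itself uses for the canonical model in the Truth Lemma, and then instantiate Lemma~\ref{lemmaSimStandardToSemiStandard} at the group $\see_B(\vec{G})$. One caveat, which the paper shares, concerns the degenerate case $B=\emptyset$: there $\see_\emptyset(\vec{G})=\emptyset$, but $\sim'_\emptyset$ is not the universal relation on $W'$, so the identity on $M'$ cannot come from the standard-frame result and should instead be taken from the group-wise convention you adopt for semi-standard frames, which applies because $(W',\sim')$ is semi-standard by Lemma~\ref{lemmaTheFrameIsSemiStandard}.
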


\begin{lemma}\label{lemmaSimSemiStandardToStandard}
    For all $w,v\in W, f \in \mathcal{F}$ and $B \subseteq A$, if $w\sim_B v$, then there is $g\in \mathcal{F}$ such that $(w,f)\sim'_B (v,g)$.
\end{lemma}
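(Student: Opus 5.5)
The plan is to build $g$ from $f$ explicitly, one pair of arguments $(C,a)$ at a time, using the Boolean ring structure of $(\power(W),\emptyset,W,+,\cap)$. Two facts about $+$ drive everything: $X+X=\emptyset$, and $+$ is associative and commutative. So changing a single summand $f(C,a_C)$ to $f(C,a_C)+[w]_C+[v]_C$ changes the whole sum $\Sigma_{a\in C}f(C,a)$ by exactly $[w]_C+[v]_C$. The second condition in the definition of $\sim'_B$ forbids changing $f(C,a)$ for $a\in B\cap C$. We may therefore only modify values at agents in $C\setminus B$, and the proof splits according to whether that set is empty.

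Fix $w\sim_B v$ and $f\in\mathcal F$, and define $g$ as follows for each $C\subseteq A$.

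If $C\setminus B\neq\emptyset$, choose one agent $a_C\in C\setminus B$. Put $g(C,a_C):=f(C,a_C)+[w]_C+[v]_C$, and $g(C,a):=f(C,a)$ for every other $a\in A$.

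If $C\subseteq B$ (this includes $C=\emptyset$), put $g(C,a):=f(C,a)$ for all $a$.

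The second condition then holds by construction, since $a_C\notin B$. For the first condition, treat the two cases in turn.

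In the first case,
\[ [v]_C+\Sigma_{a\in C}g(C,a)=[v]_C+[w]_C+[v]_C+\Sigma_{a\in C}f(C,a)=[w]_C+\Sigma_{a\in C}f(C,a), \]
using $[v]_C+[v]_C=\emptyset$.

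In the second case the two sums coincide, so we need $[w]_C=[v]_C$. This comes from semi-standardness of the original frame: $C\subseteq B$ gives ${\sim_B}\subseteq{\sim_C}$. Hence $w\sim_B v$ yields $w\sim_C v$, and the two equivalence classes are equal. For $C=\emptyset$ this also holds, because ${\sim_\emptyset}\supseteq{\sim_B}$ by the same monotonicity, and both sums are empty.

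This establishes $(w,f)\sim'_B(v,g)$. I expect no real obstacle. The only delicate point is the case $C\subseteq B$, where no free agent is available to absorb the difference $[w]_C+[v]_C$. That case must be settled by the semi-standard inclusion ${\sim_B}\subseteq{\sim_C}$, not by any adjustment of $g$, and it is worth checking explicitly that $C=\emptyset$ falls under it.
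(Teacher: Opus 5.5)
Your proof is correct and is essentially the paper's argument: a single agent of $C\setminus B$ absorbs the difference $[w]_C+[v]_C$, and the case $C\subseteq B$ is settled by the semi-standard inclusion ${\sim_B}\subseteq{\sim_C}$. The only difference is cosmetic---the paper puts $[w]_C+[v]_C+\sum_{b\in C\setminus B}f(C,b)$ on the least-indexed agent of $C\setminus B$ and sets $g$ to $\emptyset$ on the other agents of $C\setminus B$ and outside $C$, whereas you change only one value of $f$ (and you rightly appeal to semi-standardness of $(W,\sim)$, where the paper's text misprints $(W',\sim')$).
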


\begin{proof}
    Let $w,v\in W$. Suppose $w\sim_B v$, \emph{i.e.} $[w]_B=[v]_B$. Since $(W',\sim')$ is semi-standard, for all $C \subseteq A$, if $C \subseteq B$, then $w \sim_C v$ and $[w]_C=[v]_C$.
    To construct $g:\power(A)\times A\longrightarrow\power(W)$, consider an enumeration $\lbrace a_1, a_2, \dots, a_k \rbrace$ of the agents in $A$. For all $C \subseteq A$ and for all $a \in A$, we define $g(C,a)$ as follows:
    
    $-$ if $a \in B\cap C$ then $g(C,a):=f(C,a)$
    
    $-$ if $a \in C\setminus B$, let $1 \leq i \leq k$ be such that $a = a_i$. Now, if $i = min\lbrace 1 \leq j \leq k \ |\ a_j \in C\setminus B \rbrace$, then $g(C,a):= [w]_C+[v]_C + \sum_{b\in C\setminus B}f(C,b)$; otherwise $g(C,a):=\emptyset$.
    
   $-$ if $a\notin C$ then $g(C,a):=\emptyset$.

\noindent The reader may easily verify that $(w,f){\sim'_{B}}(v,g)$.
\end{proof}

\begin{corollary}\label{coSimSemiStandardToStandard}
    For all $w,v\in W, f \in \mathcal{F}$ and $\vec{G}\in \power(A)^\ast, B \subseteq A$, if $w\sim^{\vec{G}}_B v$, then there is $g\in \mathcal{F}$ such that $(w,f)\sim'^{\vec{G}}_B (v,g)$.
\end{corollary}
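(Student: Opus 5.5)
The plan is to reduce the sequence case to the single-group case, Lemma~\ref{lemmaSimSemiStandardToStandard}, by means of the view. The key fact, recalled from \cite{BalbianiD24} after Lemma~\ref{lemmaApproxToSee}, is that the relation of a group after a resolution sequence is the relation of its view: ${\sim^{\vec{G}}_B} = {\sim_{\see_B(\vec{G})}}$. That statement was made for standard frames, so I first check that it carries over to both models here. In the semi-standard model $M$, the update $M^{\vec{G}}$ is read as acting on the group-indexed family: $\sim^{C}_{D}$ becomes $\sim_{C\cup D}$ when $C\cap D\neq\emptyset$ and stays $\sim_D$ otherwise. With that reading I will prove ${\sim^{\vec{G}}_B} = {\sim_{\see_B(\vec{G})}}$ for $M$ by induction on the length of $\vec{G}$, following exactly the three recursive clauses defining $\see_B$. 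For $M'$ the identity holds directly, since $(W',\sim')$ is standard by Lemma~\ref{lemmaTheFrameIsStandard}, and then both readings coincide.

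With the identity in hand, the argument runs as follows. Assume $w\sim^{\vec{G}}_B v$, and put $C:=\see_B(\vec{G})$. By the identity in $M$ we get $w\sim_C v$. Lemma~\ref{lemmaSimSemiStandardToStandard}, applied with group $C$, then gives some $g\in\mathcal F$ with $(w,f)\sim'_C(v,g)$. Finally, the identity in $M'$ turns this into $(w,f)\sim'^{\vec{G}}_B(v,g)$, since $\see_B(\vec{G})$ is a purely syntactic function of $B$ and $\vec{G}$ and so is the same $C$ in both models.

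The main obstacle is making sure that the update on a semi-standard frame is defined so that the inductive identity holds for $M$. The delicate case is $B\cap C\neq\emptyset$. There the step is ${\sim^{\vec{G}.C}_B} = {\sim^{\vec{G}}_{B\cup C}}$, and this requires the update to replace agent-level relations by group relations $\sim_{B\cup C}$ rather than by intersections $\bigcap_{a}\sim_a$, because in a semi-standard frame those intersections may be strictly larger. I will fix that convention explicitly, matching the one used implicitly in Corollary~\ref{coSimStandardToSemiStandard}. Once it is fixed, the induction is routine.
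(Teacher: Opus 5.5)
Your proposal is correct and takes the route the paper intends: the paper states this corollary without proof, relying on the identity ${\sim^{\vec{G}}_B} = {\sim_{\see_B(\vec{G})}}$ and on Lemma~\ref{lemmaSimSemiStandardToStandard} applied to the group $\see_B(\vec{G})$. Your explicit choice of the group-level update convention on semi-standard frames is correct, and it spells out what the paper assumes silently, for instance in both directions of the $D_B$ case of Lemma~\ref{TruthLemma}.
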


We can now prove the following lemma that allows one to transform any semi-standard model into a standard model while preserving the satisfaction relation.

\begin{lemma}\label{lemmaSemiStandardToStandard}
     Let semi-standard $M=(W,\sim,V)$ be given and standard $M'=(W',\sim',V')$ be constructed from $M$ as above. For all $\phi \in \lang_{DR}$, $w\in W$, $f\in \mathcal{F}$ and $\vec{G}\in \power(A)^\ast$: $M,w,\vec{G} \models \phi$ iff $M',(w,f),\vec{G} \models \phi$.
\end{lemma}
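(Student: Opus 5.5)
The plan is a structural induction on $\phi \in \lang_{DR}$, proving the claim simultaneously for all $w\in W$, $f\in\mathcal F$ and $\vec{G}\in\power(A)^\ast$. The universal quantification over histories is essential, because the resolution clause shifts the history. Atoms are immediate from the definition of $V'$, since $(w,f)\in V'(p)$ iff $w\in V(p)$ and the history plays no role. The cases for $\top$, $\neg$ and $\et$ follow directly from the induction hypothesis. For $R_B\psi$ we have $M,w,\vec{G}\models R_B\psi$ iff $M,w,\vec{G}.B\models\psi$. By the induction hypothesis at history $\vec{G}.B$, this holds iff $M',(w,f),\vec{G}.B\models\psi$, which is iff $M',(w,f),\vec{G}\models R_B\psi$. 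Nothing about the frames is used here.

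The substantive case is $D_B\psi$. Note that the resolution relation $\approx_B$ on histories is purely syntactic and identical in $M$ and $M'$, so only the world component must be transferred. For the direction from right to left, suppose $M',(w,f),\vec{G}\models D_B\psi$, and take $v\in W$ and $\vec{H}$ with $w\sim^{\vec{G}}_B v$ and $\vec{G}\approx_B\vec{H}$. By Corollary~\ref{coSimSemiStandardToStandard} there is $g\in\mathcal F$ with $(w,f)\sim'^{\vec{G}}_B(v,g)$. Hence $M',(v,g),\vec{H}\models\psi$, and by the induction hypothesis $M,v,\vec{H}\models\psi$. For the direction from left to right, suppose $M,w,\vec{G}\models D_B\psi$, and take $(v,g)$ and $\vec{H}$ with $(w,f)\sim'^{\vec{G}}_B(v,g)$ and $\vec{G}\approx_B\vec{H}$. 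Corollary~\ref{coSimStandardToSemiStandard} gives $w\sim^{\vec{G}}_B v$, so $M,v,\vec{H}\models\psi$, and the induction hypothesis (instantiated with $g$) gives $M',(v,g),\vec{H}\models\psi$.

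The only delicate point, which I expect to be the main obstacle, is to be precise about what $\sim^{\vec{G}}_B$ means in a semi-standard model. There the group relation is primitive rather than an intersection, so I would read $\sim^{\vec{G}}_B$ as $\sim_{\see_B(\vec{G})}$ in both $M$ and $M'$. This is consistent with the identity ${\sim^{\vec{G}}_B}={\sim_{\see_B(\vec{G})}}$ recalled from \cite{BalbianiD24} for standard models, and $M'$ is standard by Lemma~\ref{lemmaTheFrameIsStandard}. Under this reading, both corollaries reduce to Lemmas~\ref{lemmaSimStandardToStandard} and~\ref{lemmaSimSemiStandardToStandard} applied to the group $\see_B(\vec{G})$, so the induction goes through. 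The key observation is that the witness $g$ may depend on $f$, $v$ and the group, and this is harmless because the induction hypothesis is uniform in $f$.
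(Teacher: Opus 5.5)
Your proof is correct and follows the paper's argument essentially verbatim: induction on $\phi$ uniformly in $w$, $f$ and $\vec{G}$, with the $D_B$ case handled by Corollaries~\ref{coSimStandardToSemiStandard} and~\ref{coSimSemiStandardToStandard}; you argue the two directions directly, where the paper argues by contradiction. Your explicit reading of $\sim^{\vec{G}}_B$ as $\sim_{\see_B(\vec{G})}$ in semi-standard models is the convention the paper uses implicitly (e.g.\ in the Truth Lemma), and it is what makes those two corollaries immediate consequences of Lemmas~\ref{lemmaSimStandardToSemiStandard} and~\ref{lemmaSimSemiStandardToStandard}.
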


\begin{proof}
    The proof proceeds by induction on $\phi$. Let $w\in W, f \in \mathcal{F}$ and $\vec{G}\in \power(A)^\ast$.
    \begin{itemize}
        \item If $\phi = p$, then obviously $M,w,\vec{G}\models p \Eq M',(w,f),\vec{G} \models p $.
        \item If $\phi = \top$, then obviously $M,w,\vec{G}\models \top$ and $M',(w,f),\vec{G} \models \top $.
        \item If $\phi = \lnot \psi$, $\phi = \psi \land \psi$ or $\phi = R_B \psi$, we conclude by applying the inductive hypothesis. This is straightforward.
        \item If $\phi = D_B \psi$, suppose $M,w,\vec{G}\models D_B \psi$ and $M',(w,f),\vec{G}\not \models D_B \psi$. Then, there are $(v,g) \in W'$ and $\vec{H} \in \power(A)^\ast$ such that $(w,f) \sim'^{\vec{G}}_B(v,g)$, $\vec{G}\approx_B \vec{H}$ and $M',(v,g),\vec{H}\not\models \psi$. By induction hypothesis, then $M,v,\vec{H}\not\models \psi$. Since $(w,f) \sim'^{\vec{G}}_B(v,g)$, by Corollary \ref{coSimStandardToSemiStandard}, $w\sim_B^{\vec{G}}v$. Moreover, $\vec{G}\approx_B \vec{H}$. Therefore $M,w,\vec{G}\not\models D_B \phi$. This contradicts the hypothesis.
        
        Suppose now $M,w,\vec{G}\not \models D_B \phi$ and $M',(w,f),\vec{G} \models D_B \psi$. Then, there are $v\in W, \vec{H} \in \power(A)^\ast$ such that $w\sim_B^{\vec{G}}v$, $\vec{G}\approx_B \vec{H}$ and $M,v,\vec{H} \not \models \psi$. Now, by Corollary \ref{coSimSemiStandardToStandard}, there is $g\in \mathcal{F}$ such that $(w,f)\sim'^{\vec{G}}_B(v,g)$. By induction hypothesis, from $M,v,\vec{H}\not\models \psi$ we get $M',(v,g),\vec{H}\not\models \psi$, where $(w,f)\sim_B'^{\vec{G}}(v,g)$ and $\vec{G}\approx_B \vec{H}$. Hence $M',(w,f),\vec{G}\not\models D_B \psi$: a contradiction.
    \end{itemize}
\end{proof}

\begin{proposition}\label{propStandardToSemiStandard}
    For all $\phi \in \lang_{DR}$, if $\phi$ is valid on standard frames then $\phi$ is valid on semi-standard frames.
\end{proposition}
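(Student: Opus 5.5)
The plan is a contrapositive argument that rests on Lemma~\ref{lemmaSemiStandardToStandard}. Suppose $\phi$ is not valid on semi-standard frames. Then there are a model $M=(W,\sim,V)$ based on a semi-standard frame, a world $w \in W$ and (for $\ast$-validity) a sequence $\vec{G}\in\power(A)^\ast$, or simply $\vec{G}=\epsilon$ (for $\epsilon$-validity), such that $M,w,\vec{G}\not\models\phi$. We build the model $M'=(W',\sim',V')$ with $W'=W\times\mathcal{F}$ as above. By Lemmas~\ref{lemmaTheFrameIsSemiStandard} and~\ref{lemmaTheFrameIsStandard}, $M'$ is based on a standard frame.

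Next we choose any $f\in\mathcal{F}$; for instance the constant function $f(C,a)=\emptyset$ will do, and $\mathcal{F}$ is non-empty, so $W'$ is non-empty. Lemma~\ref{lemmaSemiStandardToStandard} then gives $M',(w,f),\vec{G}\not\models\phi$, with the same history $\vec{G}$. Hence $\phi$ is not valid on standard frames. The argument works uniformly for both notions of validity, because Lemma~\ref{lemmaSemiStandardToStandard} quantifies over all histories $\vec{G}$.

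One point of bookkeeping has to be checked. On the standard frame $M'$, the relations $\sim'_B$ must coincide with $\bigcap_{b\in B}\sim'_b$, so that $M'$ really is a model in the sense of Section~\ref{sec.radk}. This requires the intersection property for arbitrary $B$, including $B=\emptyset$ (universal relation), and not only the binary union property of Lemma~\ref{lemmaTheFrameIsStandard}.

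For $B\neq\emptyset$, the property follows from Lemma~\ref{lemmaTheFrameIsStandard} by induction on $|B|$, together with the inclusion from Lemma~\ref{lemmaTheFrameIsSemiStandard}.

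For $B=\emptyset$, the second condition in the definition of $\sim'_\emptyset$ is vacuous, but the first condition is not. So $\sim'_\emptyset$ may fail to be universal, and this is the one step I expect to be the real obstacle. I would first try to argue that it is harmless. The semantics only uses $\sim'^{\vec{G}}_B$ through Corollaries~\ref{coSimStandardToSemiStandard} and~\ref{coSimSemiStandardToStandard}, which already hold for $B=\emptyset$. Alternatively, I would note that "standard" here means exactly the conditions established in Lemmas~\ref{lemmaTheFrameIsSemiStandard} and~\ref{lemmaTheFrameIsStandard}, following \cite[Proposition 6.10]{BalbianiD24}. If neither works, I would redefine $\sim'_\emptyset$ as universal on $W'$ and take $\sim'_\emptyset$ from the semi-standard side to be the given one, checking that the $D_\emptyset$ case of the induction in Lemma~\ref{lemmaSemiStandardToStandard} still goes through.
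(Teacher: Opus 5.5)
Your main line is the paper's own proof, which is just ``By Lemma~\ref{lemmaSemiStandardToStandard}'': a countermodel $M,w,\vec{G}$ on a semi-standard frame is carried over to $M',(w,f),\vec{G}$. Your induction on $|B|$ correctly turns Lemmas~\ref{lemmaTheFrameIsSemiStandard} and~\ref{lemmaTheFrameIsStandard} into $\sim'_B=\bigcap_{b\in B}\sim'_b$ for $B\neq\emptyset$.

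The $B=\emptyset$ issue you flag is real, and the paper passes over it too. None of your three remedies settles it as stated.
\begin{itemize}
\item The first misses the point. Lemma~\ref{lemmaSemiStandardToStandard} does hold for $M'$ as defined, but $\sim'_\emptyset$ is not universal: take $f$ constantly $\emptyset$ and $g$ equal to $f$ except $g(C,a)=W$ for one pair with $a\in C$; then $(w,f)\not\sim'_\emptyset(w,g)$. So $M'$ is not a standard model, and no failure on a standard frame has been produced.
\item The second only relocates the problem. If $\sim_\emptyset$ may be any equivalence relation containing the others, ``standard'' no longer refers to the models of the semantics, where $\sim_\emptyset=W\times W$ by the convention of Section~\ref{sec.rdk}. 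Identifying the two notions of validity then needs exactly the argument below.
\item The third fails whenever $\sim_\emptyset$ is not universal on the semi-standard side. This is the typical case: in the canonical model, $\Sigma\not\sim_\emptyset\Delta$ whenever $D_\emptyset p\in\Sigma$ and $\neg p\in\Delta$. Then $D_\emptyset\psi$ can hold at $w$ in $M$ while failing in $M'$ at some $(v,g)$ with $w\not\sim_\emptyset v$.
\end{itemize}

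The missing idea is a generated-submodel step before the construction.
\begin{itemize}
\item By semi-standardness every $\sim_C\subseteq\sim_\emptyset$, and resolutions do not move the world. So evaluating any formula at $(u,\vec{G})$ with $u\in[w]_\emptyset$ only visits worlds in $[w]_\emptyset$.
\item Restricting $M$ to $[w]_\emptyset$ therefore preserves truth at all such points and histories. It yields a semi-standard model whose $\sim_\emptyset$ is universal.
\item Build $M'$ from this restriction and set $\sim'_\emptyset:=W'\times W'$, leaving the other $\sim'_B$ unchanged. Only $D_\emptyset$ ever uses $\sim'_\emptyset$, since $\see_B(\vec{G})\supseteq B$.
\end{itemize}
Now $(W',\sim')$ is standard in the full sense, including $\sim'_\emptyset=\bigcap_{b\in\emptyset}\sim'_b$. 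The $D_\emptyset$ case of Lemma~\ref{lemmaSemiStandardToStandard} also goes through: both sides quantify over all worlds and all histories, and the induction hypothesis holds for every $g\in\mathcal{F}$.
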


\begin{proof}
    By Lemma~\ref{lemmaSemiStandardToStandard}.
\end{proof}

We now move to the proof of the completeness of \textbf{RAD}. This requires further terminology.

    A \emph{theory} $T$ is a set of formulas $\phi \in \lang_{DR}$ such that $T$ contains all formulas derivable in \textbf{RAD}; $T$ is closed under (MP) and $T$ is closed under (RDI). A theory $T$ is \emph{consistent} if it does not contain $\bot$. Furthermore, $T$ is \emph{maximal consistent} if $T$ is consistent and, for all theories $T'$, if $T \subsetneq T'$, then $T'$ is not consistent.
Note that the only inconsistent theory is the theory containing all formulas. 
For $T$ a theory, $\chi \in \lang_{DR}$ and $B\subseteq A$ we define $T + \chi:= \lbrace \phi \ |\ \chi \imp \phi \in T \rbrace$ and $D_BT := \lbrace \phi \ | \ D_B \phi \in T  \rbrace$.

\begin{lemma}\label{lemmaTheories}
    Let $T$ be a theory, $\chi \in \lang_{DR}$ and $B\subseteq A$. Then, $T + \chi$ and $D_BT$ are also theories. Moreover, $T\subseteq T + \chi$ and $\chi \in T+\chi$. Finally, if $\lnot\chi \notin T$, $T+\chi$ is consistent, and if $\chi \notin T$, $T+ \lnot\chi$ is consistent.
\end{lemma}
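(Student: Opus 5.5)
We verify the three closure conditions for each construction separately, then check the remaining claims; the key tool throughout is the lemma that $\vdash \phi \imp \psi$ implies $\vdash \alpha(\phi)\imp\alpha(\psi)$, together with the observation that admissible forms are closed under prefixing by $\chi \imp \cdot$ and by $D_B$.

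For $T+\chi$: every theorem $\phi$ is in $T+\chi$, since $\vdash \chi\imp\phi$ by (taut) and (MP), and $T$ contains all theorems. For closure under (MP), if $\chi\imp\phi$ and $\chi\imp(\phi\imp\psi)$ are in $T$, then the tautology $(\chi\imp\phi)\imp((\chi\imp(\phi\imp\psi))\imp(\chi\imp\psi))$ is in $T$, and two applications of (MP) inside $T$ give $\chi\imp\psi\in T$. For closure under (RDI), suppose $\alpha(D_{see_B(\vec{G})}R_{\vec{H}}\phi)\in T+\chi$ for all $\vec{H}\approx_B\vec{G}$. Then $\beta(D_{see_B(\vec{G})}R_{\vec{H}}\phi)\in T$ for all such $\vec{H}$, where $\beta := (\chi\imp\alpha)$ is again an admissible form. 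Since $T$ is closed under (RDI), $\beta(R_{\vec{G}}D_B\phi) = \chi\imp\alpha(R_{\vec{G}}D_B\phi)\in T$, that is, $\alpha(R_{\vec{G}}D_B\phi)\in T+\chi$. Next, $T\subseteq T+\chi$ holds because for $\phi\in T$ the theorem $\phi\imp(\chi\imp\phi)$ and (MP) give $\chi\imp\phi\in T$. Finally, $\chi\in T+\chi$ because $\chi\imp\chi$ is a theorem.

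For $D_BT$: if $\vdash\phi$ then $\vdash D_B\phi$ by (NecD), so $\phi\in D_BT$. For (MP), if $D_B\phi$ and $D_B(\phi\imp\psi)$ are in $T$, then $D_B\psi\in T$ by (RK$_D$) instantiated with $\vec{G}=\epsilon$, recalling that $R_\epsilon$ is the identity. For (RDI), we use the admissible form $D_B\alpha$ and argue exactly as in the previous paragraph.

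For consistency, suppose $\lnot\chi\notin T$ but $\bot\in T+\chi$. Then $\chi\imp\bot\in T$, and the tautology $(\chi\imp\bot)\imp\lnot\chi$ with (MP) puts $\lnot\chi\in T$, a contradiction. Symmetrically, if $\chi\notin T$ and $\lnot\chi\imp\bot\in T$, then $\chi\in T$ via the tautology $(\lnot\chi\imp\bot)\imp\chi$, again a contradiction.

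The only step requiring care is closure under the infinitary rule (RDI). It goes through precisely because the grammar of admissible forms allows prefixing by both $\phi\imp\cdot$ and $D_B\cdot$. Everything else is routine propositional reasoning together with (NecD) and (RK$_D$).
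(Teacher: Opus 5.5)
Your proof is correct and is the standard argument the paper means by ``This is standard''. Closure of $T+\chi$ and $D_BT$ under (RDI) comes from the admissible forms $(\chi\imp\alpha)$ and $D_B\alpha$, closure under (MP) from tautologies and (RK$_D$) at $\vec{G}=\epsilon$, and the remaining claims are purely propositional (the monotonicity lemma for admissible forms that you call the key tool is never actually used, which does no harm).
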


\begin{proof}
    This is standard.
\end{proof}

\begin{lemma}\label{lemmaRDI}
    Let $\vec{G} \in \power(A)^\ast, B\subseteq A$ and $\phi \in \lang_{DR}$. If $T$ is a theory and $\alpha(R_{\vec{G}}D_B \phi) \notin T$, then there is $\vec{H} \in \power(A)$ such that $\vec{G}\approx_B \vec{H}$ and $\alpha(D_{see_B(\vec{G})}R_{\vec{H}}\phi)\notin T$.
\end{lemma}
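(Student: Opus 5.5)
This is the contrapositive of the closure of theories under (RDI). The plan is to argue by contradiction, directly from the definition of a theory.

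Assume $T$ is a theory with $\alpha(R_{\vec{G}}D_B \phi) \notin T$. Suppose, towards a contradiction, that for every $\vec{H} \in \power(A)^\ast$ with $\vec{G}\approx_B \vec{H}$ we have $\alpha(D_{see_B(\vec{G})}R_{\vec{H}}\phi)\in T$. Then $T$ contains every premise of the instance of (RDI) determined by $\alpha$, $\vec{G}$, $B$ and $\phi$. By definition, a theory is closed under (RDI), so $\alpha(R_{\vec{G}}D_B\phi) \in T$, which contradicts the assumption. Hence some $\vec{H}$ with $\vec{G}\approx_B\vec{H}$ satisfies $\alpha(D_{see_B(\vec{G})}R_{\vec{H}}\phi)\notin T$. (The statement writes $\vec{H}\in\power(A)$; this should read $\vec{H}\in\power(A)^\ast$, the range of the rule.)

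The only point needing care is what "closed under (RDI)" means. I take it as: whenever all the (possibly infinitely many) premises $\alpha(D_{see_B(\vec{G})}R_{\vec{H}}\phi)$, for $\vec{H}\approx_B\vec{G}$, lie in $T$, the conclusion lies in $T$. With that reading the lemma is immediate, and I expect no real obstacle. No induction on $\alpha$ and no semantic facts about $\approx_B$ are needed; all the substance is built into the definition of a theory.

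This lemma is what will later let a maximal consistent theory missing $R_{\vec{G}}D_B\phi$ (taking $\alpha=\sharp$, or $\alpha = D_C\sharp$ or $\psi\imp\sharp$ in the Lindenbaum-style and truth-lemma steps) produce a witness history $\vec{H}$.
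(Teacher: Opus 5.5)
Your proposal is correct and matches the paper, whose proof just notes that the lemma follows because $T$ is closed under (RDI); your contrapositive spells that out. Your correction of $\vec{H}\in\power(A)$ to $\vec{H}\in\power(A)^\ast$ is right, and the orientation $\vec{G}\approx_B\vec{H}$ versus $\vec{H}\approx_B\vec{G}$ in the rule does not matter, since $\approx_B$ is an intersection of equivalence relations and hence symmetric.
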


\begin{proof}
    This is straightforward, since $T$ is closed under (RDI).
\end{proof}

\begin{lemma}[Lindenbaum's Lemma]\label{LindenbaumLemma}
    If $T$ is a consistent theory, then there is a maximal consistent theory $\Sigma$ such that $T \subseteq \Sigma$.
\end{lemma}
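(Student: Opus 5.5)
Because \textbf{RAD} contains the infinitary rule (RDI), the usual Zorn argument over consistent sets is unavailable. Instead I would follow the standard countable construction for theories closed under infinitary rules. Since $P$ is countable and $A$ is finite, $\lang_{DR}$ is countable. The same holds for admissible forms and for triples $(\alpha,\vec{G},B,\phi)$. Enumerate all formulas as $\psi_0,\psi_1,\dots$. Build an increasing chain of consistent theories $T_0 \subseteq T_1 \subseteq \cdots$ with $T_0 := T$, and let $\Sigma := \bigcup_n T_n$.

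At stage $n$, given the consistent theory $T_n$, consider $\psi_n$.
\begin{itemize}
\item If $T_n + \psi_n$ is consistent, set $T_{n+1} := T_n + \psi_n$. This is a theory by Lemma~\ref{lemmaTheories}.
\item Otherwise $\neg\psi_n \in T_n$, so keep $T_n$.
\end{itemize}
The crucial extra step concerns rejected formulas $\psi_n$ of the shape $\alpha(R_{\vec{G}}D_B\phi)$. If such a $\psi_n$ is not added, then $T_n+\psi_n$ is inconsistent, so $\psi_n \notin T_n$ (otherwise $T_n + \psi_n = T_n$ would be consistent). By Lemma~\ref{lemmaRDI} there is $\vec{H}\approx_B\vec{G}$ with $\alpha(D_{see_B(\vec{G})}R_{\vec{H}}\phi)\notin T_n$. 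Lemma~\ref{lemmaTheories} then makes $T_n + \neg\alpha(D_{see_B(\vec{G})}R_{\vec{H}}\phi)$ consistent, and we take this as $T_{n+1}$. This "witnesses" the failure of the premise of (RDI).

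Next I would show that $\Sigma$ is a theory. It contains all theorems and is closed under (MP), since any two members lie in a common $T_n$. It is consistent, since each $T_n$ is. Maximality follows because for every $\psi$, either $\psi\in\Sigma$ or $\neg\psi\in\Sigma$. Given that, any proper extension $T'$ contains some $\psi\notin\Sigma$, hence contains both $\psi$ and $\neg\psi$, and so is inconsistent.

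The main obstacle is closure of $\Sigma$ under (RDI): a union of theories need not be closed under an infinitary rule. Suppose $\alpha(D_{see_B(\vec{G})}R_{\vec{H}}\phi)\in\Sigma$ for all $\vec{H}\approx_B\vec{G}$, but $\alpha(R_{\vec{G}}D_B\phi)=\psi_n\notin\Sigma$. Then $\psi_n$ was rejected at stage $n$. Hence a witness $\neg\alpha(D_{see_B(\vec{G})}R_{\vec{H}}\phi)$ for some $\vec{H}\approx_B\vec{G}$ was put into $T_{n+1}\subseteq\Sigma$. This contradicts the consistency of $\Sigma$.

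The one delicate point is ensuring that the rejection case really arises for every relevant formula. Every formula has the form $\sharp(\chi)$, and some formulas decompose in several ways as $\alpha(R_{\vec{G}}D_B\phi)$. I would handle this by enumerating pairs (formula, decomposition) rather than bare formulas, so that each possible instance of (RDI) is visited at some stage. Adding a witness at the first visit then handles all later visits, because $\Sigma$ already decides the formula.
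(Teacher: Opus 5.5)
Your proof is correct and is essentially the paper's construction. You build a chain $T_0\subseteq T_1\subseteq\cdots$ along an enumeration, and whenever an (RDI)-conclusion $\alpha(R_{\vec G}D_B\phi)$ is excluded, Lemma~\ref{lemmaRDI} supplies a witness $\neg\alpha(D_{see_B(\vec G)}R_{\vec H}\phi)$, which is added; the paper triggers this when the formula to be added has the shape $\neg\alpha(R_{\vec G}D_B\phi)$, and you trigger it when $\alpha(R_{\vec G}D_B\phi)$ is rejected, which is the same case. Your enumeration of (formula, decomposition) pairs actually repairs a point the paper passes over: e.g.\ $D_C R_E D_B\phi$ decomposes both with $\alpha=\sharp$ and with $\alpha=D_C\sharp$, and the paper's closure argument tacitly assumes the decomposition used at stage $k$ is the relevant one. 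Your final remark that the first witness ``handles all later visits'' is misleading, though: your stated rule, which is what you need, adds a witness at \emph{every} rejected visit for that visit's own decomposition, since a formula merely being decided at an earlier visit does not give closure under (RDI).
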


\begin{proof}
    Let $T$ be a consistent theory and $\lbrace \phi_0,\phi_1, \cdots \rbrace$ an enumeration of formulas in $\lang_{DR}$. For all $k\in \mathbb{N}$, we define the theory $T_k$ as follows (from the construction and Lemma \ref{lemmaTheories} it follows that these $T_k$ are theories, because $T$ is a theory):
    \begin{align*}
        T_0      &:= T \\
        T_{k+1}  &:= \begin{cases}
            T_k             &\text{if } \lnot \phi_k \in T_k \\
            T_k + \phi_k    &\text{if } \lnot \phi_k \notin T_k \text{ and } \phi_k \text{ does not have the shape } \lnot \alpha(R_{\vec{G}}D_B \psi) \\
            T_k + \lnot\alpha(D_{see_B(\vec{G})}R_{\vec{H}}\psi) &\text{if } \lnot \phi_k \notin T_k \text{ and } \phi_k \text{ does have the shape } \lnot \alpha(R_{\vec{G}}D_B \psi), \\ & \text{for some } \vec{H}\approx_B \vec{G} \text{ such that } \alpha(D_{see_B(\vec{G})}R_{\vec{H}}\psi)\notin T_k
        \end{cases}
    \end{align*}
    Note that in the third case of the inductive step, such a sequence $\vec{H}$ is known to exist by Lemma \ref{lemmaRDI}. 

    Let now $\Sigma:= \bigcup_{k \geq 0} T_k$. It needs to be shown that $\Sigma$ is a maximal consistent theory. First note that, by construction, for all $\phi \in \lang_{DR}$, either $\phi \in \Sigma$ or $\lnot \phi \in \Sigma$.

    We first show that $\Sigma$ is a theory. That $\Sigma$ contains \textbf{RAD} and is closed by modus ponens is straightforward. Suppose now there are $\vec{G}\in \power(A)^\ast, B \in \power(A), \phi \in \lang_{DR}$ and $\alpha \in AForm$ such that $\alpha(D_{see_B(\vec{G})}R_{\vec{H}}\phi) \in \Sigma$ for all $\vec{H}\approx_B \vec{G}$. Suppose also $R_{\vec{G}}D_B \phi \notin \Sigma$. Then $\lnot R_{\vec{G}}D_B\phi \in \Sigma$. Let $k \in \mathbb{N}$ such that $\phi_k = \lnot R_{\vec{G}}D_B\phi$. Then, since $\phi_k \in \Sigma$, $\lnot \phi_k \notin \Sigma$ so in particular $\lnot\phi_k \notin T_k$. Since $T_k$ is a theory and $R_{\vec{G}}D_B \phi \notin T_k$, by Lemma \ref{lemmaRDI}, there is $\vec{H}\approx_B \vec{G}$ such that $\alpha(D_{see_B(\vec{G})}R_{\vec{H}}\phi)\notin T_k$. Hence, by construction, $T_{k+1} = T_k + \lnot \alpha(D_{see_B(\vec{G})}R_{\vec{H}}\phi)$. Therefore $\alpha(D_{see_B(\vec{G})}R_{\vec{H}}\phi) \notin \Sigma$, which contradicts the hypothesis.

    Furthermore, $\Sigma$ is consistent because so is each $T_k$. We now show that $\Sigma$ is maximal consistent. Let $\Delta$ be a theory strictly containing $\Sigma$. Then, there is $\phi_k \in \lang_{DR}$ such that $\phi_k \notin \Sigma$ and $\phi_k \in \Delta$. Since $\phi_k \notin \Sigma$, $\lnot \phi_k \in \Sigma \subset \Delta$. Hence, $\lnot \phi_k \in \Delta$, so $\Delta$ is not consistent.
\end{proof}

We now (re)introduce the relation $\sim_B$. Let $\Sigma,\Delta$ be two maximal consistent theories, $B\subseteq A$. Then $\Sigma \sim_B \Delta$ iff $D_B \Sigma \subseteq \Delta$. 
Note that by definition $\sim_{B \cup C} \ \subseteq \ \sim_B \cap \sim_C$, but equality is not guaranteed.

\begin{lemma}[Existence Lemma]\label{ExistenceLemma}
    Let $\Sigma$ be a maximal consistent theory, $B\subseteq A,\phi \in \lang_{DR}$. If $\hat{D}_B \phi \in \Sigma$, then there is a maximal consistent theory $\Delta$ such that $\Sigma \sim_B \Delta$ and $\phi \in \Delta$.
\end{lemma}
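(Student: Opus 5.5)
The natural candidate for $\Delta$ is a maximal consistent extension of the theory $D_B\Sigma + \phi$. By Lemma~\ref{lemmaTheories}, $D_B\Sigma$ is a theory, since $\Sigma$ is one. Hence $D_B\Sigma + \phi$ is also a theory, and it contains both $D_B\Sigma$ and $\phi$. If it is consistent, Lindenbaum's Lemma~\ref{LindenbaumLemma} gives a maximal consistent theory $\Delta \supseteq D_B\Sigma + \phi$. Then $D_B\Sigma \subseteq \Delta$, which is $\Sigma \sim_B \Delta$, and $\phi \in \Delta$, as required.

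So everything reduces to the consistency of $D_B\Sigma + \phi$. By the last clause of Lemma~\ref{lemmaTheories} applied to the theory $D_B\Sigma$ and $\chi = \phi$, it suffices to show $\lnot\phi \notin D_B\Sigma$, i.e.\ $D_B\lnot\phi \notin \Sigma$.

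Suppose towards a contradiction that $D_B\lnot\phi \in \Sigma$. Reading $\hat{D}_B\phi$ as the abbreviation $\lnot D_B\lnot\phi$, we also have $\lnot D_B\lnot\phi \in \Sigma$. Both formulas then belong to the theory $\Sigma$. Since $\Sigma$ contains all tautologies and is closed under (MP), it follows that $\bot \in \Sigma$. This contradicts the consistency of $\Sigma$. (If $\hat{D}_B$ is defined with an extra double negation, one first uses the tautology $\lnot\lnot\psi\imp\psi$ together with (MP).)

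The only step needing any care is the consistency claim inherited from Lemma~\ref{lemmaTheories}, which the paper labels standard. I would briefly justify it as follows, since with the infinitary rule (RDI) one must make sure that closure under rules does not produce $\bot$ unexpectedly. If $\bot \in D_B\Sigma + \phi$, then $\phi \imp \bot \in D_B\Sigma$, so $D_B(\phi\imp\bot)\in\Sigma$. Using (NecD) and (RK$_D$) with $\vec{G}=\epsilon$, applied to the tautology $(\phi\imp\bot)\imp\lnot\phi$, this yields $D_B\lnot\phi\in\Sigma$, and we are back to the case already excluded.

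No appeal to (RDI) is needed at this point: the closure of $D_B\Sigma$ and $T+\chi$ under (RDI) is exactly what Lemma~\ref{lemmaTheories} provides, via admissible forms of the shapes $D_B\alpha$ and $\chi\imp\alpha$. So I expect no real obstacle; the substantive work lies in the truth lemma that follows.
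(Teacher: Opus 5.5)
Your proposal is correct and matches the paper's proof: from $\lnot D_B\lnot\phi\in\Sigma$ you get $\lnot\phi\notin D_B\Sigma$, so $D_B\Sigma+\phi$ is a consistent theory by Lemma~\ref{lemmaTheories}, and Lindenbaum's Lemma extends it to the required $\Delta$. Your added check of the consistency clause, via (NecD) and (RK$_D$) with $\vec{G}=\epsilon$, is a sound spelling-out of what the paper leaves as standard.
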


\begin{proof}
    Suppose $\hat{D}_B \phi \in \Sigma$. Then $\lnot D_B \lnot \phi \in \Sigma$ so $D_B\lnot \phi \notin \Sigma$. Hence $\lnot\phi \notin D_B\Sigma$.  So, by Lemma \ref{lemmaTheories}, $D_B\Sigma + \phi$ is consistent. Now, by Lindenbaum's Lemma, we can extend $D_B\Sigma +\phi$ to a maximal consistent theory $\Delta$ such that $D_B\Sigma + \phi \subseteq \Delta$. By Lemma \ref{lemmaTheories} again, $D_B\Sigma \subseteq D_B\Sigma + \phi$ so $D_B\Sigma \subseteq \Delta$. Hence $\Sigma \sim_B \Delta$. Moreover,  $\phi \in D_B\Sigma + \phi$, so $\phi \in \Delta$.
\end{proof}


\begin{definition}[Canonical model]
    The \emph{canonical model} $M=(W,\lbrace\sim_B\rbrace_{B\subseteq A},V)$ is defined by
    
        $-$ $W= \lbrace \Sigma \ | \ \Sigma \text{ is a maximal consistent theory} \rbrace$;
        
        $-$ $\Sigma \sim_B \Delta$ if, and only if, $D_B \Sigma \subseteq \Delta$;
        
        $-$ $V(p) =\lbrace \Sigma \in W \ | \ p \in \Sigma \rbrace $.

\end{definition}

Notice the canonical model is defined on a semi-standard frame.

\begin{lemma}[Truth Lemma]\label{TruthLemma}
    For all $\phi \in \lang_{DR}$, $\vec{G}\in \power(A)^\ast$ and $\Sigma \in W$: $R_{\vec{G}} \phi \in \Sigma$ iff $\Sigma, \vec{G}\models \phi$.
\end{lemma}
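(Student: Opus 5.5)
We will prove the Truth Lemma by induction on $\phi$, simultaneously for all $\vec{G}\in\power(A)^\ast$ and all $\Sigma\in W$. Here $R_{\vec{G}}$ stands for the stack $R_{B_1}\cdots R_{B_n}$ when $\vec{G}=B_1\dots B_n$, with $R_\epsilon\phi=\phi$.

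\textbf{Routine cases.}
\begin{itemize}
\item Atoms: $R_{\vec{G}}p\in\Sigma$ iff $p\in\Sigma$ by (RA) and maximality, iff $\Sigma\in V(p)$, iff $\Sigma,\vec{G}\models p$.
\item $\top$: use (R$\top$).
\item Negation and conjunction: use (RN) and (RC) together with maximal consistency (every formula or its negation belongs to $\Sigma$, and $\Sigma$ is closed under (MP)), then apply the induction hypothesis at the same $\vec{G}$.
\item $\phi=R_B\psi$: $R_{\vec{G}}R_B\psi$ is literally $R_{\vec{G}.B}\psi$, and $\Sigma,\vec{G}\models R_B\psi$ iff $\Sigma,\vec{G}.B\models\psi$. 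The induction hypothesis for $\psi$ at $\vec{G}.B$ then closes this case.

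Because the induction is on $\phi$ with $\vec{G}$ universally quantified, this step is legitimate.
\end{itemize}

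\textbf{The case $\phi=D_B\psi$.} Recall that $\Sigma,\vec{G}\models D_B\psi$ iff $\Delta,\vec{H}\models\psi$ for all $\Delta$ with $\Sigma\sim_B^{\vec{G}}\Delta$ and all $\vec{H}\approx_B\vec{G}$. We first need the fact that, in the canonical (semi-standard) model, $\sim^{\vec{G}}_B=\sim_{\see_B(\vec{G})}$, where the latter is the canonical relation $D_{\see_B(\vec{G})}\Sigma\subseteq\Delta$. This should follow from the way updates act on semi-standard frames (the group relation after $\vec{G}$ is the relation indexed by the view), in line with the recalled fact from \cite{BalbianiD24}. If necessary, we will take it as the definition of updated relations on semi-standard frames, which is how the semantics is meant to transfer via Lemma~\ref{lemmaSemiStandardToStandard}.

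\emph{Left to right.} Suppose $R_{\vec{G}}D_B\psi\in\Sigma$, and let $\vec{H}\approx_B\vec{G}$ and $\Sigma\sim_{\see_B(\vec{G})}\Delta$. Axiom (RD) gives $D_{\see_B(\vec{G})}R_{\vec{H}}\psi\in\Sigma$, hence $R_{\vec{H}}\psi\in\Delta$. The induction hypothesis for $\psi$ at $\vec{H}$ yields $\Delta,\vec{H}\models\psi$, so $\Sigma,\vec{G}\models D_B\psi$.

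\emph{Right to left.} Suppose $R_{\vec{G}}D_B\psi\notin\Sigma$. Since $\Sigma$ is closed under (RDI), Lemma~\ref{lemmaRDI} with $\alpha=\sharp$ provides some $\vec{H}\approx_B\vec{G}$ with $D_{\see_B(\vec{G})}R_{\vec{H}}\psi\notin\Sigma$. Hence $\hat D_{\see_B(\vec{G})}\lnot R_{\vec{H}}\psi\in\Sigma$, using (RN) to move the negation. The Existence Lemma~\ref{ExistenceLemma} then gives a maximal consistent $\Delta$ with $\Sigma\sim_{\see_B(\vec{G})}\Delta$ and $\lnot R_{\vec{H}}\psi\in\Delta$. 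By the induction hypothesis, $\Delta,\vec{H}\not\models\psi$, and therefore $\Sigma,\vec{G}\not\models D_B\psi$.

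\textbf{Main obstacle.} The delicate step is the identification $\sim^{\vec{G}}_B=\sim_{\see_B(\vec{G})}$ in a semi-standard model. There, intersecting individual relations need not give the group relation, so we must fix the semi-standard update to act on group indices: the relation of $C$ after $\vec{G}$ is $\sim_{\see_C(\vec{G})}$. We must also check that Lemma~\ref{lemmaSemiStandardToStandard} and its corollaries are stated for this reading. We will prove the identity by induction on $|\vec{G}|$, mirroring the recursive definition of $\see$.

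A secondary point is that the Existence Lemma is needed for theories closed under (RDI). That lemma is already available, so the rest of the argument is routine.
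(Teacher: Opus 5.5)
Your proof is correct and follows the paper's argument essentially step for step. The routine cases go through (RA), (R$\top$), (RN), (RC) and the identity $R_{\vec{G}}R_B\psi = R_{\vec{G}.B}\psi$. For $D_B\psi$, the left-to-right direction uses axiom (RD), and the right-to-left direction uses closure of $\Sigma$ under (RDI) together with the Existence Lemma.

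Your explicit stipulation that $\sim^{\vec{G}}_B$ is $\sim_{\see_B(\vec{G})}$ on semi-standard frames, where group relations are primitive, is exactly the identification the paper uses tacitly in both directions of the $D_B$ case. It is also the reading that Lemma~\ref{lemmaSemiStandardToStandard} needs in order to transfer the countermodel to standard frames.
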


\begin{proof}
    The proof proceeds by induction on $\phi$.
    \begin{itemize}
        \item If $\phi = p$: by (RA) we get: $R_{\vec{G}}p \in \Sigma \Eq p \in \Sigma \Eq \Sigma,\vec{G}\models p$.

        \item If $\phi = \top$: $\Sigma,\vec{G}\models\top$ and, by (R$\top$), $R_{\vec{G}}\top \in \Sigma$.
        
        \item If $\phi = \lnot \psi$: by (RN) we get: $R_{\vec{G}}\lnot \psi \in \Sigma \Eq \lnot R_{\vec{G}}\psi \in \Sigma \Eq R_{\vec{G}}\psi \notin \Sigma \overset{(IH)}{\Eq} \Sigma,\vec{G}\not \models \psi \Eq \Sigma,\vec{G}\models \lnot \psi$.

        \item If $\phi = \psi \land \chi$: by (RC) we get $R_{\vec{G}} (\psi \land \chi) \in \Sigma \Eq R_{\vec{G}}\psi \land R_{\vec{G}}\chi \in \Sigma \Eq R_{\vec{G}}\psi \in \Sigma \text{ and } R_{\vec{G}}\chi \in \Sigma \overset{(IH)}{\Eq} \Sigma,\vec{G} \models \psi \text{ and } \Sigma,\vec{G} \models \chi  \Eq \Sigma,\vec{G}\models \psi \land \chi$.

        \item If $\phi = R_B \psi$: $R_{\vec{G}}R_B \psi \in \Sigma \Eq R_{\vec{G}.B} \psi \in \Sigma \overset{(IH)}{\Eq}\Sigma,\vec{G}.B \models \psi \Eq \Sigma,\vec{G}\models R_B \psi$.

        \item If $\phi = D_B \psi$: suppose first $R_{\vec{G}}D_B \psi \in \Sigma$ and $\Sigma,\vec{G}\not \models D_B \psi$. Then there are $\Delta,\vec{H}$ such that $\Sigma \sim_B^{\vec{G}}\Delta$, $\vec{G}\approx_B \vec{H}$ and $\Delta,\vec{H}\not\models \psi$. By induction hypothesis, then, $R_{\vec{H}}\psi \notin \Delta$. Now, since $R_{\vec{G}}D_B \phi \in \Sigma$ and, by (RD) $R_{\vec{G}}D_B \psi \imp D_{see_B(\vec{G})}R_{\vec{H}}\psi \in \Sigma$ (because $\vec{G}\approx_B \vec{H}$), so $D_{see_B(\vec{G})}R_{\vec{H}}\psi \in \Sigma$.  Moreover, since $\Sigma \sim_B^{\vec{G}} \Delta$, also $\Sigma \sim_{see_B(\vec{G})}\Delta$, so $D_{see_B(\vec{G})}\Sigma \subseteq \Delta$. Then $R_{\vec{H}}\psi \in \Delta$, which contradicts $R_{\vec{H}}\psi \notin \Delta$.

        Suppose now $R_{\vec{G}}D_B \psi \notin \Sigma$ and $\Sigma,\vec{G} \models D_B \psi$. Since $R_{\vec{G}}D_B \psi \notin \Sigma$ and $\Sigma$ is closed under (RDI), there is $\vec{H}\in \power(A)^\ast$ such that $\vec{G}\approx_B \vec{H}$ and $D_{see_B(\vec{G})}R_{\vec{H}}\psi \notin \Sigma$. Then, $\lnot D_{see_B(\vec{G})}R_{\vec{H}}\psi \in \Sigma$, so  $ \hat{D}_{see_B(\vec{G})}\lnot R_{\vec{H}}\psi \in \Sigma$. By the Existence Lemma, there is a maximal consistent theory $\Delta$ such that $\Sigma \sim_{see_B(\vec{G})}\Delta$ and $\lnot R_{\vec{H}}\psi \in \Delta$, so $R_{\vec{H}}\psi \notin \Delta$. By induction hypothesis then, $\Delta,\vec{H}\not \models \psi$. But since $\Sigma \sim_{see_B(\vec{G})}\Delta$, also $\Sigma \sim_B^{\vec{G}}\Delta$ and $\vec{G}\approx_B \vec{H}$, so $\Delta,\vec{H}\not \models \psi$ implies $\Sigma,\vec{G}\not \models D_B \psi$, which contradicts the hypothesis.        
    \end{itemize}
    
\end{proof}

\begin{theorem}[Completeness]
    For all $\phi \in \lang_{DR}$, if $\models \phi$, then $\vdash \phi$ .
\end{theorem}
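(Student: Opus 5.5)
We argue by contraposition: assume $\not\vdash \phi$ and build a standard model together with a world at which $\phi$ fails at the empty history. Since $\not\vdash\phi$, the formula $\phi$ does not belong to the smallest theory, namely the set of all \textbf{RAD}-theorems closed under (MP) and (RDI). We first check that this smallest theory coincides with the set of derivable formulas. This holds because derivability is itself closed under (MP) and (RDI): these rules belong to the proof system.

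By Lemma~\ref{lemmaTheories}, the theory $T_0 + \lnot\phi$ (with $T_0$ the set of theorems) is consistent. Lindenbaum's Lemma (Lemma~\ref{LindenbaumLemma}) then extends it to a maximal consistent theory $\Sigma$ with $\lnot\phi\in\Sigma$, hence $\phi\notin\Sigma$.

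Next we evaluate $\phi$ in the canonical model $M$. We take $\vec{G}=\epsilon$ in the Truth Lemma (Lemma~\ref{TruthLemma}); since $R_\epsilon\phi$ is $\phi$ by convention, this yields $M,\Sigma,\epsilon\not\models\phi$. Before using $M$, we verify that it is a model based on a semi-standard frame. First, each $\sim_B$ is an equivalence relation, which follows standardly from (T$_D$) and (5$_D$), the $\vec{G}=\epsilon$ instances of (RT$_D$) and (R5$_D$), together with the derived 4-axiom. Second, $B\subseteq C$ implies $\sim_C\subseteq\sim_B$: if $D_B\psi\in\Sigma$, then (DG) gives $D_C\psi\in\Sigma$, so $D_B\Sigma\subseteq D_C\Sigma$. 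One point to settle is how $\sim^{\vec{G}}_B$ is interpreted in a semi-standard model. The Truth Lemma treats it as $\sim_{\see_B(\vec{G})}$, and we adopt that reading consistently, in line with Corollaries~\ref{coSimStandardToSemiStandard} and~\ref{coSimSemiStandardToStandard}.

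Finally we transfer the countermodel to a standard frame. Construct $M'$ from $M$ as above, over $W\times\mathcal F$. By Lemmas~\ref{lemmaTheFrameIsSemiStandard} and~\ref{lemmaTheFrameIsStandard}, $M'$ is standard. Lemma~\ref{lemmaSemiStandardToStandard} then gives $M',(\Sigma,f),\epsilon\not\models\phi$ for any $f\in\mathcal F$. Hence $\phi$ is not ($\epsilon$-)valid, which contradicts $\models\phi$. Note also that if $\models$ is read as $\ast$-validity, failure at $\epsilon$ already suffices.

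The main obstacle is the bookkeeping that makes the canonical model a legitimate semi-standard model for the history semantics. Concretely, we must ensure that $\sim^{\vec{G}}_B$ is read as $\sim_{\see_B(\vec{G})}$ there, and that this reading is exactly what the unravelling of Lemma~\ref{lemmaSemiStandardToStandard} preserves. Everything else is a direct chaining of the cited lemmas.
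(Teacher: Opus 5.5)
Your proposal is correct and follows the paper's proof step for step. You extend the theorems plus $\lnot\phi$ to a maximal consistent theory $\Sigma$ by Lindenbaum's Lemma, apply the Truth Lemma at the empty history to get $\Sigma,\epsilon\not\models\phi$ in the canonical semi-standard model, and transfer this failure to a standard model via Lemma~\ref{lemmaSemiStandardToStandard}, which is exactly what Proposition~\ref{propStandardToSemiStandard} packages. Your explicit checks that the canonical frame is semi-standard and that $\sim^{\vec{G}}_B$ is read as $\sim_{\see_B(\vec{G})}$ only spell out what the paper leaves as a one-line remark.
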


\begin{proof}
    Let $\phi \in \lang_{DR}$. We show the contrapositive: if $\not\vdash \phi$, then $\not \models \phi$. Suppose then $\not\vdash \phi$. Then $\phi \notin $ \textbf{RAD} so \textbf{RAD}$+ \lnot \phi$ is consistent: we extend it to a maximal consistent theory $\Sigma$, by Lindenbaum's Lemma. Therefore, $\lnot \phi = R_\epsilon \lnot \phi \in \Sigma$. By the Truth Lemma, $\Sigma,\epsilon \models \lnot \phi$ so $\Sigma,\epsilon \not \models \phi$. Hence $\phi$ is not valid on semi-standard frames. By Proposition \ref{propStandardToSemiStandard}, then, $\phi$ is not valid on standard frames. Therefore $\not \models \phi$. 
\end{proof}

\section{Conclusion and further research}  \label{sec.further}

We presented a logic of resolving asynchronous distributed knowledge, compared it to the logic of resolving (synchronous) distributed knowledge known from the literature, and provided an infinitary axiomatization for our asynchronous logic. There are a fair number of open questions about this novel logic.
(i) We defined two notions of validity, with respect to the empty resolution sequence and with respect to arbitrary resolution sequences, but we do not know whether these define the same set of validities.
(ii) We gave an infinitary axiomatization, but we have no proof that a finitary axiomatization does not exist.
(iii)  Is necessitation of resolution validity preserving (resp.\ an admissible derivation rule)?
(iv) Is satisfiability decidable, and if so, what is the complexity?
(v) Given the embedding of the synchronous into the asynchronous semantics, what is the expressivity hierarchy comparing the language fragments with synchronous and with asynchronous distributed knowledge, and with or without resolution? It seems fairly straightforward to show that resolving asynchronous distributed knowledge is more expressive than resolving synchronous distributed knowledge, or at least on the level of so-called update expressivity that describes relations between pointed epistemic models instead of properties of pointed epistemic models in the case of formula expressivity. Concerning the latter, clearly, a sequence of two resolutions $ab.bc$ does not correspond to a single resolution $B$ for some $B \subseteq A$, wherein all agents in $B$ are equally well informed. A more interesting question is whether resolving asynchronous distributed knowledge is more expressive than logical semantics of synchronous distributed knowledge with more involved dynamics than resolution, such as \cite{Baltag20,baltagsmets.aiml:2024,cdrv:2023}. For example, a sequence of resolutions $ab.bc$ corresponds to a single communication graph (as in \cite{cdrv:2023}) namely were $b$ and $c$ receive information from $a,b,c$ whereas $a$ only receives information from $a,b$, and, for a more involved example, the uncertainty of an agent $d$ between resolutions $ab$ and $ab.bc$ can be simulated as non-public information exchange as in \cite{baltagsmets.aiml:2024}. We wish to investigate that in the future.
(vi) Can we determine when a resolution in a given sequence is redundant (because uninformative), such as (immediately) repeating the same resolution?
(vii) We wish to extend the language and semantics with common knowledge.

\paragraph*{Acknowledgements.} We thank the referees for their reviews: their useful suggestions have been essential for improving the readability of a preliminary version of this paper.

\bibliographystyle{eptcs}
\bibliography{biblio2026}

\providecommand{\noopsort}[1]{}
\begin{thebibliography}{10}
\providecommand{\bibitemdeclare}[2]{}
\providecommand{\surnamestart}{}
\providecommand{\surnameend}{}
\providecommand{\urlprefix}{Available at }
\providecommand{\url}[1]{\texttt{#1}}
\providecommand{\href}[2]{\texttt{#2}}
\providecommand{\urlalt}[2]{\href{#1}{#2}}
\providecommand{\doi}[1]{doi:\urlalt{https://doi.org/#1}{#1}}
\providecommand{\eprint}[1]{arXiv:\urlalt{https://arxiv.org/abs/#1}{#1}}
\providecommand{\bibinfo}[2]{#2}

\bibitemdeclare{article}{AgotnesW17}
\bibitem{AgotnesW17}
\bibinfo{author}{T.~\surnamestart {\AA}gotnes\surnameend} \&
  \bibinfo{author}{Y.N. \surnamestart W{\'{a}}ng\surnameend}
  (\bibinfo{year}{2017}): \emph{\bibinfo{title}{Resolving distributed
  knowledge}}.
\newblock {\slshape \bibinfo{journal}{Artif. Intell.}} \bibinfo{volume}{252},
  pp. \bibinfo{pages}{1--21}, \doi{10.1016/j.artint.2017.07.002}.

\bibitemdeclare{article}{AlechinaBS12}
\bibitem{AlechinaBS12}
\bibinfo{author}{N.~\surnamestart Alechina\surnameend},
  \bibinfo{author}{P.~\surnamestart Balbiani\surnameend} \&
  \bibinfo{author}{D.~\surnamestart Shkatov\surnameend} (\bibinfo{year}{2012}):
  \emph{\bibinfo{title}{Modal logics for reasoning about infinite unions and
  intersections of binary relations}}.
\newblock {\slshape \bibinfo{journal}{J. Appl. Non Class. Logics}}
  \bibinfo{volume}{22}(\bibinfo{number}{4}), pp. \bibinfo{pages}{275--294},
  \doi{10.1080/11663081.2012.705960}.

\bibitemdeclare{inproceedings}{AptKW17}
\bibitem{AptKW17}
\bibinfo{author}{K.R. \surnamestart Apt\surnameend},
  \bibinfo{author}{E.~\surnamestart Kopczynski\surnameend} \&
  \bibinfo{author}{D.~\surnamestart Wojtczak\surnameend}
  (\bibinfo{year}{2017}): \emph{\bibinfo{title}{On the Computational Complexity
  of Gossip Protocols}}.
\newblock In: {\slshape \bibinfo{booktitle}{Proceedings\ of the 26th {IJCAI}}},
  pp. \bibinfo{pages}{765--771}, \doi{10.24963/ijcai.2017/106}.

\bibitemdeclare{incollection}{Areces:tenCate:2007}
\bibitem{Areces:tenCate:2007}
\bibinfo{author}{C.~\surnamestart Areces\surnameend} \&
  \bibinfo{author}{B.~\surnamestart ten Cate\surnameend}
  (\bibinfo{year}{2007}): \emph{\bibinfo{title}{Hybrid Logics}}.
\newblock In \bibinfo{editor}{J.~\surnamestart van Benthem\surnameend},
  \bibinfo{editor}{P.~\surnamestart Blackburn\surnameend} \&
  \bibinfo{editor}{F.~\surnamestart Wolter\surnameend}, editors: {\slshape
  \bibinfo{booktitle}{The Handbook of Modal Logic}},
  \bibinfo{publisher}{Elsevier}, \bibinfo{address}{Amsterdam, The Netherlands},
  \doi{10.1016/S1570-2464(07)80017-6}.

\bibitemdeclare{inproceedings}{BalbianiD24}
\bibitem{BalbianiD24}
\bibinfo{author}{P.~\surnamestart Balbiani\surnameend} \&
  \bibinfo{author}{H.~\surnamestart van Ditmarsch\surnameend}
  (\bibinfo{year}{2024}): \emph{\bibinfo{title}{Towards Dynamic Distributed
  Knowledge}}.
\newblock In \bibinfo{editor}{A.~\surnamestart Ciabattoni\surnameend},
  \bibinfo{editor}{D.~\surnamestart Gabelaia\surnameend} \&
  \bibinfo{editor}{I.~\surnamestart Sedl{\'{a}}r\surnameend}, editors:
  {\slshape \bibinfo{booktitle}{Proceedings\ of {Advances in Modal Logic}}},
  \bibinfo{publisher}{College Publications}, pp. \bibinfo{pages}{125--146}.

\bibitemdeclare{article}{balbianietal:2003}
\bibitem{balbianietal:2003}
\bibinfo{author}{P.~\surnamestart Balbiani\surnameend} \&
  \bibinfo{author}{D.~\surnamestart Vakarelov\surnameend}
  (\bibinfo{year}{2003}): \emph{\bibinfo{title}{{PDL} with Intersection of
  Programs: A Complete Axiomatization}}.
\newblock {\slshape \bibinfo{journal}{Journal of Applied Non-Classical Logics}}
  \bibinfo{volume}{13(3-4)}, pp. \bibinfo{pages}{231--276},
  \doi{10.3166/jancl.13.231-276}.

\bibitemdeclare{incollection}{baltagetal.hintikka:2018}
\bibitem{baltagetal.hintikka:2018}
\bibinfo{author}{A.~\surnamestart Baltag\surnameend},
  \bibinfo{author}{R.~\surnamestart Boddy\surnameend} \&
  \bibinfo{author}{S.~\surnamestart Smets\surnameend} (\bibinfo{year}{2018}):
  \emph{\bibinfo{title}{Group knowledge in interrogative epistemology}}.
\newblock In \bibinfo{editor}{H.~\surnamestart van Ditmarsch\surnameend} \&
  \bibinfo{editor}{G.~\surnamestart Sandu\surnameend}, editors: {\slshape
  \bibinfo{booktitle}{Jaakko Hintikka on Knowledge and Game-Theoretical
  Semantics}}, \bibinfo{series}{Outstanding contributions to logic 12},
  \bibinfo{publisher}{Springer}, pp. \bibinfo{pages}{131--164},
  \doi{10.1007/978-3-319-62864-6_5}.

\bibitemdeclare{inproceedings}{baltagetal:1998}
\bibitem{baltagetal:1998}
\bibinfo{author}{A.~\surnamestart Baltag\surnameend}, \bibinfo{author}{L.S.
  \surnamestart Moss\surnameend} \& \bibinfo{author}{S.~\surnamestart
  Solecki\surnameend} (\bibinfo{year}{1998}): \emph{\bibinfo{title}{The Logic
  of Public Announcements, Common Knowledge, and Private Suspicions}}.
\newblock In: {\slshape \bibinfo{booktitle}{Proceedings\ of 7th TARK}}, pp.
  \bibinfo{pages}{43--56}, \doi{10.1007/978-3-319-20451-2_38}.

\bibitemdeclare{article}{BaltagS13}
\bibitem{BaltagS13}
\bibinfo{author}{A.~\surnamestart Baltag\surnameend} \&
  \bibinfo{author}{S.~\surnamestart Smets\surnameend} (\bibinfo{year}{2013}):
  \emph{\bibinfo{title}{Protocols for belief merge: Reaching agreement via
  communication}}.
\newblock {\slshape \bibinfo{journal}{Log. J. {IGPL}}}
  \bibinfo{volume}{21}(\bibinfo{number}{3}), pp. \bibinfo{pages}{468--487},
  \doi{10.1093/JIGPAL/JZS049}.

\bibitemdeclare{inproceedings}{Baltag20}
\bibitem{Baltag20}
\bibinfo{author}{A.~\surnamestart Baltag\surnameend} \&
  \bibinfo{author}{S.~\surnamestart Smets\surnameend} (\bibinfo{year}{2020}):
  \emph{\bibinfo{title}{Learning What Others Know}}.
\newblock In: {\slshape \bibinfo{booktitle}{Proceedings\ of 23rd {LPAR}}},
  {\slshape \bibinfo{series}{EPiC Series in Computing}}~\bibinfo{volume}{73},
  pp. \bibinfo{pages}{90--119}, \doi{10.29007/plm4}.

\bibitemdeclare{unpublished}{baltagsmets.aiml:2024}
\bibitem{baltagsmets.aiml:2024}
\bibinfo{author}{A.~\surnamestart Baltag\surnameend} \&
  \bibinfo{author}{S.~\surnamestart Smets\surnameend} (\bibinfo{year}{2024}):
  \emph{\bibinfo{title}{Logics for Data Exchange and Communication}}.
\newblock \bibinfo{note}{Proceedings of the 15th {Advances in Modal Logic}
  Prague}.

\bibitemdeclare{inproceedings}{stefanjohan:2009}
\bibitem{stefanjohan:2009}
\bibinfo{author}{J.~\surnamestart van Benthem\surnameend} \&
  \bibinfo{author}{S.~\surnamestart Minica\surnameend} (\bibinfo{year}{2009}):
  \emph{\bibinfo{title}{Toward a Dynamic Logic of Questions}}.
\newblock In \bibinfo{editor}{X.~\surnamestart He\surnameend},
  \bibinfo{editor}{J.F. \surnamestart Horty\surnameend} \&
  \bibinfo{editor}{E.~\surnamestart Pacuit\surnameend}, editors: {\slshape
  \bibinfo{booktitle}{Logic, Rationality, and Interaction. Proceedings\ of LORI
  2009}}, \bibinfo{series}{LNCS 5834}, \bibinfo{publisher}{Springer}, pp.
  \bibinfo{pages}{27--41}, \doi{10.1007/s10992-012-9233-7}.

\bibitemdeclare{article}{Blackburn:Seligman:1995}
\bibitem{Blackburn:Seligman:1995}
\bibinfo{author}{P.~\surnamestart Blackburn\surnameend} \&
  \bibinfo{author}{J.~\surnamestart Seligman\surnameend}
  (\bibinfo{year}{1995}): \emph{\bibinfo{title}{Hybrid languages}}.
\newblock {\slshape \bibinfo{journal}{Journal of Logic, Language, and
  Information}} \bibinfo{volume}{4}, pp. \bibinfo{pages}{251--272},
  \doi{10.1007/BF01049415}.

\bibitemdeclare{mastersthesis}{boddy:2014}
\bibitem{boddy:2014}
\bibinfo{author}{R.~\surnamestart Boddy\surnameend} (\bibinfo{year}{2014}):
  \emph{\bibinfo{title}{Epistemic Issues and Group Knowledge}}.
\newblock Master's thesis, \bibinfo{school}{ILLC, University of Amsterdam}.
\newblock \urlprefix\url{https://eprints.illc.uva.nl/id/document/2156/}.
\newblock \bibinfo{note}{Master of Logic Series MoL-2014-03}.

\bibitemdeclare{mastersthesis}{carrington:2013}
\bibitem{carrington:2013}
\bibinfo{author}{R.~\surnamestart Carrington\surnameend}
  (\bibinfo{year}{2013}): \emph{\bibinfo{title}{Learning and Knowledge in
  Social Networks}}.
\newblock Master's thesis, \bibinfo{school}{ILLC, University of Amsterdam}.
\newblock \urlprefix\url{https://eprints.illc.uva.nl/id/document/2124/}.
\newblock \bibinfo{note}{Master of Logic Series MoL-2013-18}.

\bibitemdeclare{article}{cdrv:2023}
\bibitem{cdrv:2023}
\bibinfo{author}{A.~\surnamestart Casta{\~{n}}eda\surnameend},
  \bibinfo{author}{H.~\surnamestart van Ditmarsch\surnameend},
  \bibinfo{author}{D.A. \surnamestart Rosenblueth\surnameend} \&
  \bibinfo{author}{D.A. \surnamestart Vel\'azquez\surnameend}
  (\bibinfo{year}{2023}): \emph{\bibinfo{title}{Communication Pattern Logic:
  Epistemic and Topological Views}}.
\newblock {\slshape \bibinfo{journal}{Journal of Philosophical Logic}},
  \doi{10.1007/s10992-023-09713-8}.

\bibitemdeclare{article}{armandoetal.tark:2023}
\bibitem{armandoetal.tark:2023}
\bibinfo{author}{A.~\surnamestart Casta{\~{n}}eda\surnameend},
  \bibinfo{author}{H.~\surnamestart van Ditmarsch\surnameend},
  \bibinfo{author}{D.A. \surnamestart Rosenblueth\surnameend} \&
  \bibinfo{author}{D.A. \surnamestart Vel\'azquez\surnameend}
  (\bibinfo{year}{2023}): \emph{\bibinfo{title}{Comparing the update
  expressivity of communication patterns and action models}}.
\newblock {\slshape \bibinfo{journal}{Electronic Proceedings in Theoretical
  Computer Science}} \bibinfo{volume}{379}, pp. \bibinfo{pages}{157--172},
  \doi{10.4204/EPTCS.379.14}.

\bibitemdeclare{article}{ChristoffGratzlRoy2022}
\bibitem{ChristoffGratzlRoy2022}
\bibinfo{author}{Z.~\surnamestart Christoff\surnameend},
  \bibinfo{author}{N.~\surnamestart Gratzl\surnameend} \&
  \bibinfo{author}{O.~\surnamestart Roy\surnameend} (\bibinfo{year}{2022}):
  \emph{\bibinfo{title}{Priority Merge and Intersection Modalities}}.
\newblock {\slshape \bibinfo{journal}{The Review of Symbolic Logic}}
  \bibinfo{volume}{15}(\bibinfo{number}{1}), p. \bibinfo{pages}{165–196},
  \doi{10.1017/S1755020321000058}.

\bibitemdeclare{inproceedings}{Danecki:1985}
\bibitem{Danecki:1985}
\bibinfo{author}{R.~\surnamestart Danecki\surnameend} (\bibinfo{year}{1985}):
  \emph{\bibinfo{title}{Nondeterministic propositional dynamic logic with
  intersection is decidable}}.
\newblock In: {\slshape \bibinfo{booktitle}{Computation Theory}},
  \bibinfo{publisher}{Springer}, pp. \bibinfo{pages}{34--53},
  \doi{10.1007/3-540-16066-3_5}.

\bibitemdeclare{inproceedings}{degremontetal:2011}
\bibitem{degremontetal:2011}
\bibinfo{author}{C.~\surnamestart Degremont\surnameend},
  \bibinfo{author}{B.~\surnamestart L{\"o}we\surnameend} \&
  \bibinfo{author}{A.~\surnamestart Witzel\surnameend} (\bibinfo{year}{2011}):
  \emph{\bibinfo{title}{The synchronicity of dynamic epistemic logic}}.
\newblock In: {\slshape \bibinfo{booktitle}{Proceedings\ of 13th TARK}},
  \bibinfo{publisher}{ACM}, pp. \bibinfo{pages}{145--152},
  \doi{10.1145/2000378.2000395}.

\bibitemdeclare{article}{hvdetal.lucky:2024}
\bibitem{hvdetal.lucky:2024}
\bibinfo{author}{H.~\surnamestart van Ditmarsch\surnameend} \&
  \bibinfo{author}{M.~\surnamestart Gattinger\surnameend}
  (\bibinfo{year}{2024}): \emph{\bibinfo{title}{You can only be lucky once:
  optimal gossip for epistemic goals}}.
\newblock {\slshape \bibinfo{journal}{Mathematical Structures in Computer
  Science}}, p. \bibinfo{pages}{1–28}, \doi{10.1017/S0960129524000082}.

\bibitemdeclare{book}{hvdetal.del:2007}
\bibitem{hvdetal.del:2007}
\bibinfo{author}{H.~\surnamestart van Ditmarsch\surnameend},
  \bibinfo{author}{W.~\surnamestart van~der Hoek\surnameend} \&
  \bibinfo{author}{B.~\surnamestart Kooi\surnameend} (\bibinfo{year}{2007}):
  \emph{\bibinfo{title}{Dynamic Epistemic Logic}}.
\newblock {\slshape \bibinfo{series}{Synthese Library}} \bibinfo{volume}{337},
  \bibinfo{publisher}{Springer}, \doi{10.1007/978-1-4020-5839-4}.

\bibitemdeclare{article}{logicofgossiping:2020}
\bibitem{logicofgossiping:2020}
\bibinfo{author}{H.~\surnamestart van Ditmarsch\surnameend},
  \bibinfo{author}{W.~\surnamestart van~der Hoek\surnameend} \&
  \bibinfo{author}{L.B. \surnamestart Kuijer\surnameend}
  (\bibinfo{year}{2020}): \emph{\bibinfo{title}{The logic of gossiping}}.
\newblock {\slshape \bibinfo{journal}{Artificial Intelligence}}
  \bibinfo{volume}{286}, p. \bibinfo{pages}{103306},
  \doi{10.1016/j.artint.2020.103306}.

\bibitemdeclare{article}{FaginHV92}
\bibitem{FaginHV92}
\bibinfo{author}{R.~\surnamestart Fagin\surnameend}, \bibinfo{author}{J.Y.
  \surnamestart Halpern\surnameend} \& \bibinfo{author}{M.Y. \surnamestart
  Vardi\surnameend} (\bibinfo{year}{1992}): \emph{\bibinfo{title}{What Can
  Machines Know? On the Properties of Knowledge in Distributed Systems}}.
\newblock {\slshape \bibinfo{journal}{J. {ACM}}}
  \bibinfo{volume}{39}(\bibinfo{number}{2}), pp. \bibinfo{pages}{328--376},
  \doi{10.1145/128749.150945}.

\bibitemdeclare{unpublished}{rusbouke.aiml:2024}
\bibitem{rusbouke.aiml:2024}
\bibinfo{author}{R.~\surnamestart Galimullin\surnameend} \&
  \bibinfo{author}{L.B. \surnamestart Kuijer\surnameend}
  (\bibinfo{year}{2024}): \emph{\bibinfo{title}{Varieties of Distributed
  Knowledge}}.
\newblock \bibinfo{note}{Proceedings of the 15th {Advances in Modal Logic}
  Prague}.

\bibitemdeclare{inproceedings}{Gargov:Passy:1990}
\bibitem{Gargov:Passy:1990}
\bibinfo{author}{G.~\surnamestart Gargov\surnameend} \&
  \bibinfo{author}{S.~\surnamestart Passy\surnameend} (\bibinfo{year}{1990}):
  \emph{\bibinfo{title}{A note on Boolean Modal Logic}}.
\newblock In: {\slshape \bibinfo{booktitle}{Mathematical Logic}},
  \bibinfo{publisher}{Plenum Press}, pp. \bibinfo{pages}{299--309},
  \doi{10.1007/978-1-4613-0609-2_21}.

\bibitemdeclare{inproceedings}{Gargov:et:al:1986}
\bibitem{Gargov:et:al:1986}
\bibinfo{author}{G.~\surnamestart Gargov\surnameend},
  \bibinfo{author}{S.~\surnamestart Passy\surnameend} \&
  \bibinfo{author}{T.~\surnamestart Tinchev\surnameend} (\bibinfo{year}{1987}):
  \emph{\bibinfo{title}{Modal environment for Boolean speculations}}.
\newblock In: {\slshape \bibinfo{booktitle}{Mathematical Logic and its
  Applications}}, \bibinfo{publisher}{Plenum Press}, pp.
  \bibinfo{pages}{253--263}, \doi{10.1007/978-1-4613-0897-3_17}.

\bibitemdeclare{mastersthesis}{goldbach:2015}
\bibitem{goldbach:2015}
\bibinfo{author}{R.~\surnamestart Goldbach\surnameend} (\bibinfo{year}{2015}):
  \emph{\bibinfo{title}{Modelling Democratic Deliberation}}.
\newblock Master's thesis, \bibinfo{school}{ILLC, University of Amsterdam}.
\newblock \urlprefix\url{https://eprints.illc.uva.nl/id/document/2216/}.
\newblock \bibinfo{note}{Master of Logic Series MoL-2015-05}.

\bibitemdeclare{article}{Goranko:1996}
\bibitem{Goranko:1996}
\bibinfo{author}{V.~\surnamestart Goranko\surnameend} (\bibinfo{year}{1996}):
  \emph{\bibinfo{title}{Hierarchies of modal and temporal logics with reference
  pointers}}.
\newblock {\slshape \bibinfo{journal}{Journal of Logic, Language, and
  Information}} \bibinfo{volume}{5}, pp. \bibinfo{pages}{1--24},
  \doi{10.1007/BF00215625}.

\bibitemdeclare{article}{Goranko:Passy:1992}
\bibitem{Goranko:Passy:1992}
\bibinfo{author}{V.~\surnamestart Goranko\surnameend} \&
  \bibinfo{author}{S.~\surnamestart Passy\surnameend} (\bibinfo{year}{1992}):
  \emph{\bibinfo{title}{Using the universal modality: gains and questions}}.
\newblock {\slshape \bibinfo{journal}{Journal of Logic and Computation}}
  \bibinfo{volume}{2}, pp. \bibinfo{pages}{5--30}, \doi{10.1093/logcom/2.1.5}.

\bibitemdeclare{inproceedings}{halpernmoses:85b}
\bibitem{halpernmoses:85b}
\bibinfo{author}{J.Y. \surnamestart Halpern\surnameend} \&
  \bibinfo{author}{Y.~\surnamestart Moses\surnameend} (\bibinfo{year}{1984}):
  \emph{\bibinfo{title}{Knowledge and Common Knowledge in a Distributed
  Environment}}.
\newblock In: {\slshape \bibinfo{booktitle}{Proceedings\ of the 3rd {PODC}}},
  pp. \bibinfo{pages}{50--61}, \doi{10.1145/800222.806735}.

\bibitemdeclare{article}{halpernmoses:1990}
\bibitem{halpernmoses:1990}
\bibinfo{author}{J.Y. \surnamestart Halpern\surnameend} \&
  \bibinfo{author}{Y.~\surnamestart Moses\surnameend} (\bibinfo{year}{1990}):
  \emph{\bibinfo{title}{Knowledge and Common Knowledge in a Distributed
  Environment}}.
\newblock {\slshape \bibinfo{journal}{Journal of the {ACM}}}
  \bibinfo{volume}{37(3)}, pp. \bibinfo{pages}{549--587},
  \doi{10.1145/79147.79161}.

\bibitemdeclare{article}{Harel:1985}
\bibitem{Harel:1985}
\bibinfo{author}{D.~\surnamestart Harel\surnameend} (\bibinfo{year}{1985}):
  \emph{\bibinfo{title}{Recurring dominoes: making the highly undecidable
  highly understandable}}.
\newblock {\slshape \bibinfo{journal}{North-Holland Mathematical Studies}}
  \bibinfo{volume}{102}, pp. \bibinfo{pages}{51--71},
  \doi{10.1016/S0304-0208(08)73075-5}.

\bibitemdeclare{book}{hareletal:2000}
\bibitem{hareletal:2000}
\bibinfo{author}{D.~\surnamestart Harel\surnameend},
  \bibinfo{author}{D.~\surnamestart Kozen\surnameend} \&
  \bibinfo{author}{J.~\surnamestart Tiuryn\surnameend} (\bibinfo{year}{2000}):
  \emph{\bibinfo{title}{Dynamic Logic}}.
\newblock \bibinfo{publisher}{MIT Press}, \bibinfo{address}{Cambridge MA},
  \doi{10.7551/mitpress/2516.001.0001}.
\newblock \bibinfo{note}{Foundations of Computing Series}.

\bibitemdeclare{article}{HayekAER45}
\bibitem{HayekAER45}
\bibinfo{author}{F.~\surnamestart Hayek\surnameend} (\bibinfo{year}{1945}):
  \emph{\bibinfo{title}{The Use of Knowledge in Society}}.
\newblock {\slshape \bibinfo{journal}{American Economic Review}}
  \bibinfo{volume}{35}, pp. \bibinfo{pages}{519--530}.
\newblock \urlprefix\url{https://www.jstor.org/stable/1809376}.

\bibitemdeclare{article}{hilpinen:1977}
\bibitem{hilpinen:1977}
\bibinfo{author}{R.~\surnamestart Hilpinen\surnameend} (\bibinfo{year}{1977}):
  \emph{\bibinfo{title}{Remarks on personal and impersonal knowledge}}.
\newblock {\slshape \bibinfo{journal}{Canadian Journal of Philosophy}}
  \bibinfo{volume}{7}, pp. \bibinfo{pages}{1--9},
  \doi{10.1080/00455091.1977.10716173}.

\bibitemdeclare{article}{HoekM92}
\bibitem{HoekM92}
\bibinfo{author}{W.~\surnamestart van~der Hoek\surnameend} \&
  \bibinfo{author}{J.-J.Ch. \surnamestart Meyer\surnameend}
  (\bibinfo{year}{1992}): \emph{\bibinfo{title}{Making Some Issues of Implicit
  Knowledge Explicit}}.
\newblock {\slshape \bibinfo{journal}{Int. J. Found. Comput. Sci.}}
  \bibinfo{volume}{3}(\bibinfo{number}{2}), pp. \bibinfo{pages}{193--223},
  \doi{10.1142/S0129054192000139}.

\bibitemdeclare{inproceedings}{Lange:2005}
\bibitem{Lange:2005}
\bibinfo{author}{M.~\surnamestart Lange\surnameend} (\bibinfo{year}{2005}):
  \emph{\bibinfo{title}{A lower complexity bound for propositional dynamic
  logic with intersection}}.
\newblock In: {\slshape \bibinfo{booktitle}{Advances in Modal Logic 5}},
  \bibinfo{publisher}{College Publications}, pp. \bibinfo{pages}{133--147}.

\bibitemdeclare{article}{Lange:Lutz:2005}
\bibitem{Lange:Lutz:2005}
\bibinfo{author}{M.~\surnamestart Lange\surnameend} \&
  \bibinfo{author}{C.~\surnamestart Lutz\surnameend} (\bibinfo{year}{2005}):
  \emph{\bibinfo{title}{2-Exp Time lower bounds for propositional dynamic
  logics with intersection}}.
\newblock {\slshape \bibinfo{journal}{Journal of Symbolic Logic}}
  \bibinfo{volume}{70}(\bibinfo{number}{4}), p. \bibinfo{pages}{1072–1086},
  \doi{10.2178/jsl/1129642115}.

\bibitemdeclare{inproceedings}{Massacci:2001}
\bibitem{Massacci:2001}
\bibinfo{author}{F.~\surnamestart Massacci\surnameend} (\bibinfo{year}{2001}):
  \emph{\bibinfo{title}{Decision procedures for expressive description logics
  with intersection, composition, converse of roles and role identity}}.
\newblock In: {\slshape \bibinfo{booktitle}{Proceedings of the 17th {IJCAI}}},
  \bibinfo{publisher}{Morgan Kaufmann}, pp. \bibinfo{pages}{193--198}.

\bibitemdeclare{article}{MosesT88}
\bibitem{MosesT88}
\bibinfo{author}{Y.~\surnamestart Moses\surnameend} \& \bibinfo{author}{M.R.
  \surnamestart Tuttle\surnameend} (\bibinfo{year}{1988}):
  \emph{\bibinfo{title}{Programming Simultaneous Actions Using Common
  Knowledge}}.
\newblock {\slshape \bibinfo{journal}{Algorithmica}} \bibinfo{volume}{3}, pp.
  \bibinfo{pages}{121--169}, \doi{10.1007/BF01762112}.

\bibitemdeclare{inproceedings}{moss.handbook:2015}
\bibitem{moss.handbook:2015}
\bibinfo{author}{L.S. \surnamestart Moss\surnameend} (\bibinfo{year}{2015}):
  \emph{\bibinfo{title}{Dynamic Epistemic Logic}}.
\newblock In \bibinfo{editor}{H.~\surnamestart van Ditmarsch\surnameend},
  \bibinfo{editor}{J.Y. \surnamestart Halpern\surnameend},
  \bibinfo{editor}{W.~\surnamestart van~der Hoek\surnameend} \&
  \bibinfo{editor}{B.~\surnamestart Kooi\surnameend}, editors: {\slshape
  \bibinfo{booktitle}{Handbook of epistemic logic}},
  \bibinfo{publisher}{College Publications}, pp. \bibinfo{pages}{261--312}.

\bibitemdeclare{article}{Orlowska:1990}
\bibitem{Orlowska:1990}
\bibinfo{author}{E.~\surnamestart Or{\l}owska\surnameend}
  (\bibinfo{year}{1990}): \emph{\bibinfo{title}{Kripke semantics for knowledge
  representation logics}}.
\newblock {\slshape \bibinfo{journal}{Studia Logica}} \bibinfo{volume}{49}, pp.
  \bibinfo{pages}{255--272}, \doi{10.1007/BF00935602}.

\bibitemdeclare{inproceedings}{ParikhR85}
\bibitem{ParikhR85}
\bibinfo{author}{R.~\surnamestart Parikh\surnameend} \&
  \bibinfo{author}{R.~\surnamestart Ramanujam\surnameend}
  (\bibinfo{year}{1985}): \emph{\bibinfo{title}{Distributed Processes and the
  Logic of Knowledge}}.
\newblock In: {\slshape \bibinfo{booktitle}{Proceedings\ of Logics of
  Programs}}, {\slshape \bibinfo{series}{Lecture Notes in Computer Science}}
  \bibinfo{volume}{193}, pp. \bibinfo{pages}{256--268},
  \doi{10.1007/3-540-15648-8\_21}.

\bibitemdeclare{article}{parikhetal:2003}
\bibitem{parikhetal:2003}
\bibinfo{author}{R.~\surnamestart Parikh\surnameend} \&
  \bibinfo{author}{R.~\surnamestart Ramanujam\surnameend}
  (\bibinfo{year}{2003}): \emph{\bibinfo{title}{A knowledge based semantics of
  messages}}.
\newblock {\slshape \bibinfo{journal}{Journal of Logic, Language and
  Information}} \bibinfo{volume}{12}, pp. \bibinfo{pages}{453--467},
  \doi{10.1023/A:1025007018583}.

\bibitemdeclare{article}{Passy:Tinchev:1985}
\bibitem{Passy:Tinchev:1985}
\bibinfo{author}{S.~\surnamestart Passy\surnameend} \&
  \bibinfo{author}{T.~\surnamestart Tinchev\surnameend} (\bibinfo{year}{1985}):
  \emph{\bibinfo{title}{{PDL} with data constants}}.
\newblock {\slshape \bibinfo{journal}{Information Processing Letters}}
  \bibinfo{volume}{20}, pp. \bibinfo{pages}{35--41},
  \doi{10.1016/0020-0190(85)90127-9}.

\bibitemdeclare{article}{Passy:Tinchev:1991}
\bibitem{Passy:Tinchev:1991}
\bibinfo{author}{S.~\surnamestart Passy\surnameend} \&
  \bibinfo{author}{T.~\surnamestart Tinchev\surnameend} (\bibinfo{year}{1991}):
  \emph{\bibinfo{title}{An essay in Combinatory Dynamic Logic}}.
\newblock {\slshape \bibinfo{journal}{Information and Computation}}
  \bibinfo{volume}{93}, pp. \bibinfo{pages}{263--332},
  \doi{10.1016/0890-5401(91)90026-X}.

\bibitemdeclare{article}{plaza:2007}
\bibitem{plaza:2007}
\bibinfo{author}{J.A. \surnamestart Plaza\surnameend} (\bibinfo{year}{2007}):
  \emph{\bibinfo{title}{Logics of Public Communications}}.
\newblock {\slshape \bibinfo{journal}{Synthese}} \bibinfo{volume}{158(2)}, pp.
  \bibinfo{pages}{165--179}, \doi{10.1007/s11229-007-9168-7}.
\newblock \bibinfo{note}{Reprint of Plaza's 1989 workshop paper}.

\bibitemdeclare{article}{derijke:1992}
\bibitem{derijke:1992}
\bibinfo{author}{M.~\surnamestart de~Rijke\surnameend} (\bibinfo{year}{1992}):
  \emph{\bibinfo{title}{The modal logic of inequality}}.
\newblock {\slshape \bibinfo{journal}{Journal of Symbol Logic}}
  \bibinfo{volume}{57}, pp. \bibinfo{pages}{566--584}, \doi{10.2307/2275293}.

\bibitemdeclare{article}{swanson:1986}
\bibitem{swanson:1986}
\bibinfo{author}{D.R. \surnamestart Swanson\surnameend} (\bibinfo{year}{1986}):
  \emph{\bibinfo{title}{Undiscovered Public Knowledge}}.
\newblock {\slshape \bibinfo{journal}{The Library Quarterly: Information,
  Community, Policy}} \bibinfo{volume}{56}(\bibinfo{number}{2}), pp.
  \bibinfo{pages}{103--118}, \doi{10.1086/601720}.

\bibitemdeclare{article}{Vakarelov:1991}
\bibitem{Vakarelov:1991}
\bibinfo{author}{D.~\surnamestart Vakarelov\surnameend} (\bibinfo{year}{1991}):
  \emph{\bibinfo{title}{Modal logics for knowledge representation systems}}.
\newblock {\slshape \bibinfo{journal}{Theoretical Computer Science}}
  \bibinfo{volume}{90}, pp. \bibinfo{pages}{433--456}.

\bibitemdeclare{mastersthesis}{diego:2019}
\bibitem{diego:2019}
\bibinfo{author}{D.A. \surnamestart Vel\'azquez\surnameend}
  (\bibinfo{year}{2019}): \emph{\bibinfo{title}{Una relaci\'on entre las
  l\'ogicas modales y el enfoque topol\'ogico del c\'omputo distribuido}}.
\newblock Master's thesis, \bibinfo{school}{Instituto de Investigaciones en
  Matem\'aticas Aplicadas y en Sistemas, UNAM}, \bibinfo{address}{Mexico}.

\bibitemdeclare{phdthesis}{diego:2024}
\bibitem{diego:2024}
\bibinfo{author}{D.A. \surnamestart Vel\'azquez\surnameend}
  (\bibinfo{year}{2024}): \emph{\bibinfo{title}{Pattern Models: Dynamic
  Epistemic Logics for Distributed Systems}}.
\newblock Ph.D. thesis, \bibinfo{school}{Instituto de investigaciones en
  matem\'aticas aplicadas y en sistemas, UNAM}, \bibinfo{address}{Mexico}.

\bibitemdeclare{inproceedings}{diego:2021}
\bibitem{diego:2021}
\bibinfo{author}{D.A. \surnamestart Vel\'azquez\surnameend},
  \bibinfo{author}{A.~\surnamestart Casta\~{n}eda\surnameend} \&
  \bibinfo{author}{D.A. \surnamestart Rosenblueth\surnameend}
  (\bibinfo{year}{2021}): \emph{\bibinfo{title}{Communication Pattern Models:
  an Extension of Action Models for Dynamic-Network Distributed Systems}}.
\newblock In: {\slshape \bibinfo{booktitle}{Proceedings\ of TARK {XVIII}}},
  {\slshape \bibinfo{series}{{EPTCS}}} \bibinfo{volume}{335}, pp.
  \bibinfo{pages}{307--321}, \doi{10.4204/EPTCS.335.29}.

\end{thebibliography}

\end{document}